\definecolor{darkred}{rgb}{0.5,0,0}
\definecolor{lightblue}{rgb}{0,0.4,0.8}
\definecolor{darkgreen}{rgb}{0,0.5,0}
\newcounter{sideremark}
\definecolor{dkgreen}{rgb}{0,0.6,0}
\definecolor{gray}{rgb}{0.5,0.5,0.5}
\definecolor{mauve}{rgb}{0.58,0,0.82}
\tiny\color{gray},
\pgfplotsset{compat=newest}
\newcommand\smallO{
	\mathchoice
	{{\scriptstyle\mathcal{O}}}% \displaystyle
	{{\scriptstyle\mathcal{O}}}% \textstyle
	{{\scriptscriptstyle\mathcal{O}}}% \scriptstyle
	{\scalebox{.7}{$\scriptscriptstyle\mathcal{O}$}}%\scriptscriptstyle
}
\DeclareRobustCommand{\bigO}{%
	\text{\usefont{OMS}{cmsy}{m}{n}O}%
}
\newcommand*{\defeq}{\mathrel{\vcenter{\baselineskip0.5ex \lineskiplimit0pt
			\hbox{\scriptsize.}\hbox{\scriptsize.}}}=}
\newcommand*{\eqdef}{=\mathrel{\vcenter{\baselineskip0.5ex\lineskiplimit0pt
			\hbox{\scriptsize.}\hbox{\scriptsize.} }}}
\newcommand{\ssup}[1]{{\scriptscriptstyle{({#1})}}}
\newcommand{\eps}{\varepsilon}
\newcommand{\NN}{\mathbb{N}}
\newcommand{\NNN}{\overline{\mathbb{N}}}
\newcommand{\RR}{\mathbb{R}}
\newcommand{\EE}{\mathbb{E}}
\newcommand{\PP}{\mathbb{P}}
\newcommand{\TT}{\mathcal{T}}
\newcommand{\BT}{\overline{\mathcal{T}}}
\newcommand{\FF}{\mathcal{F}}
\newcommand{\BF}{\overline{\mathcal{F}}}
\newcommand{\HH}{\mathcal{H}}
\newcommand{\DD}{\mathcal{D}}
\newcommand{\PGF}{\mathcal{G}}
\newcommand{\PGFd}{\overline{\mathcal{G}}}
\newcommand{\Val}{\mathcal{V}}
\newcommand{\YY}{\mathbf{Y}}
\newcommand{\XX}{\mathbf{X}}
\newcommand{\VV}{\mathbf{V}}
\newcommand{\DZ}{\mathbf{Z}}
\newcommand{\DF}{\mathcal{B}}
\newcommand{\II}{\mathbbm{1}}
\newcommand{\as}{\:\text{a.s.}}
\DeclareMathOperator*{\esup}{ess\,sup} %* for limits
\DeclareMathOperator{\supp}{supp}
\DeclareMathOperator{\Geom}{Geom}
\DeclareMathOperator{\Unif}{Unif}
\DeclareMathOperator{\Ber}{Ber}
\DeclareMathOperator{\Beta}{B}
\DeclareMathOperator{\IHR}{IHR}
\DeclareMathOperator{\DHR}{DHR}
\DeclareMathOperator{\IHRE}{IHRE}
\DeclareMathOperator{\DHRE}{DHRE}
\DeclareMathOperator{\HIHRE}{HIHRE}
\DeclareMathOperator{\HDHRE}{HDHRE}
\DeclareMathOperator{\NBU}{NBU}
\DeclareMathOperator{\NWU}{NWU}
\DeclareMathOperator{\NBUE}{NBUE}
\DeclareMathOperator{\NWUE}{NWUE}
\DeclareMathOperator{\HNBUE}{HNBUE}
\DeclareMathOperator{\HNWUE}{HNWUE}
\DeclareMathOperator{\Var}{Var}
\crefname{equation}{}{}
\newtheorem{theorem}{Theorem}[section]
\newtheorem{proposition}{Proposition}[section]
\newtheorem{lemma}{Lemma}[section]
\newtheorem{corollary}{Corollary}[section]
\newtheorem{remark}{Remark}[section]
\newtheorem{definition}{Definition}[section]
\numberwithin{equation}{section}
\title{IID Prophet Inequality with Random Horizon:\\ Going Beyond Increasing Hazard Rates}
\author{Giordano Giambartolomei\setcounter{footnote}{1}\thanks{King's College London.}, Frederik Mallmann-Trenn\footnotemark[2] and Raimundo Saona\setcounter{footnote}{7}\thanks{Institute of Science and Technology Austria.}
}
\date{}
\begin{document}
        \maketitle
        \thispagestyle{empty}
        \begin{abstract}
        	Prophet inequalities are a central object of study in optimal stopping theory. In the \textit{iid} model, a gambler sees values in an online fashion, sampled independently from a given distribution. 
        	Upon observing each value, the gambler either accepts it as a reward or irrevocably rejects it and proceeds to observe the next value. The goal of the gambler, who cannot see the future, is maximising the expected value of the reward while competing against the expectation of a prophet (the offline maximum). 
        	In other words, one seeks to maximise the gambler-to-prophet ratio of the expectations. 
        	
        	This model has been studied with infinite, finite and unknown number of values. 
        	When the gambler faces a random number of values, the model is said to have random horizon. 
        	We consider the model in which the gambler is given \textit{a priori} knowledge of the horizon's distribution. 
        	Alijani~et~al.~(2020) designed a single-threshold algorithm achieving a ratio of $\sfrac{1}{2}$ when the random horizon has an increasing hazard rate and is independent of the values. 
        	We prove that with a single threshold, a ratio of $\sfrac{1}{2}$ is actually achievable for several larger classes of horizon distributions, with the largest being known as the $\PGF$ class in reliability theory. 
        	Moreover, we show that this does not extend to its dual, the $\PGFd$ class (which includes the decreasing hazard rate class), while it can be extended to low-variance horizons. 
        	Finally, we construct the first example of a family of horizons, for which multiple thresholds are necessary to achieve a nonzero ratio. We establish that the Secretary Problem optimal stopping rule provides one such algorithm, paving the way towards the study of the model beyond single-threshold algorithms.
        \end{abstract}
        %\vspace{1cm}
        \clearpage
         \thispagestyle{empty}
        \setcounter{tocdepth}{1} % adjust to 2 if desired
        \tableofcontents
       \clearpage
       \pagenumbering{arabic}
    \setcounter{footnote}{0}
	\section{Introduction}
	Prophet inequalities are a central object of study in optimal stopping theory. A gambler sees nonnegative values in an online fashion, sampled from an instance of independent random variables $\{X_i\}$ with known distributions $\{\mathcal{D}_i\}$, in adversarial, random or selected order, depending on the particular model. When observing each value, the gambler either accepts it as a reward, or irrevocably rejects it and proceeds with observing the next value. The goal of the gambler, who cannot see the future, is to maximise the expected value of the reward while competing against the expectation of a prophet (out of metaphor, the offline maximum or supremum, depending on whether the instance is finite or not). In other words, one seeks to maximise the gambler-to-prophet ratio of the expectations.
	
	The gambler represents any online decision maker, such as an algorithm or stopping rule. Probabilistically, we will refer to it as a \textit{stopping time} $\tau$. The term \textit{online} implies that the gambler, unable to see the future, will always stop at a time $\tau$ such that the event $\{\tau=i\}$ depends, informally speaking, only on the first $i$ values observed.%Formally, $\tau$ being a stopping time means requiring that the event $\{\tau=i\}$ be adapted to the process $\{X_i\}$.
	
	Due to the online nature of prophet inequalities, some terminology from \textit{competitive analysis} is usually borrowed. Somewhat informally, it is common to refer to a worst-case gambler-to-prophet ratio of the algorithm (that is a ratio known to be achievable for any given instance by the algorithm) as \textit{competitive ratio}. Let $c\in[0,1]$ be a real constant. An algorithm is said to be $c$-competitive (or equivalently, a $\sfrac{1}{c}$-approximation) if it has a competitive ratio of $c$. %\footnote{This means that it can attain at least a gambler-to-prophet ratio of $c$ for any instance, but it does not necessarily mean that $c$ is the highest possible ratio.}
	An upper bound on any algorithm's highest possible competitive ratio for a given instance will be called \textit{hardness} of the instance. Saying that a prophet inequality model has a hardness of $c$ (or equivalently is $c$-hard ), means that there is a $c$-hard instance for that problem. %\footnote{This does not necessarily mean that $c$ is the lowest hardness possible amongst all instances for that model.} 
	A hardness for a model is said to be \textit{tight} when it is matched by the competitive ratio of an algorithm solving it. Similarly, the competitive ratio $c$ of an algorithm solving a model is tight when it is matched by a hardness of $c$ for that model (equivalently, the algorithm is a tight $\sfrac{1}{c}$-approximation).
	
	\subsection{Prophet inequality models}\label{model}
	To date, several models and extensions of prophet inequalities are present in the literature. In this section we introduce the classical model and describe two variants of it, briefly reviewing the state-of-the-art concerning their hardness and the competitiveness of known algorithms.
	
	\paragraph{Classical Prophet Inequality.}
	The very first prophet inequality model, typically referred to as classical (or adversarial) Prophet Inequality (PI), is due to Krengel and Sucheston \cite{KrenSuch77,KrenSuch78}. The given instance is composed of  countably (possibly infinitely) many integrable independent nonnegative random variables $\{X_i\}$ with known distributions $\{\mathcal{D}_i\}$ in a fixed given order, usually referred to as \emph{adversarial} order. %\footnote{When working with \textit{finite} instances, the previous set-up reduces to random variables $X_1,\ldots,X_n$ having distributions $\DD_1,\ldots,\DD_n$, so it is natural that, if the gambler gets to the last stage, any value observed will have to be accepted. Thus the stopping rule $\tau$ of the gambler will belong to the class $\TT^n$ of stopping times valued in $[n]=\{1,2,\ldots,n\}$.} 
	As per the general dynamics previously described, the gambler observes in an online fashion the sequence of sampled values in the fixed order. Formally, the goal is to maximise the gambler-to-prophet ratio $\EE X_\tau$ over $\EE\sup_iX_i$. When working with \textit{infinite} instances, it is customary to consider only finite stopping rules and and variables such that $\EE\sup_iX_i<\infty$.\footnote{Formally, we maximise the gambler-to-prophet ratio with respect to all stopping times $\tau$ such that $\PP(\tau<\infty)=1$.} In~\cite{KrenSuch78}, it was shown that the $\sfrac{1}{2}$-hardness of PI (shown by Garlin \cite{KrenSuch77}) is tight. Later, a single-threshold $2$-approximation was constructed in \cite{Sam84}.
	
	\paragraph{IID Prophet Inequality.}
	Denoted as IID PI, it is a specialised model of PI, where the random variables $\{X_i\}$ are further assumed independent and identically distributed (\textit{iid}) according to a given distribution $\DD$. The hardness of this problem has been shown to be $\sfrac{1}{\beta}\approx 0.7451$, where $\beta\in\mathbb{R}$ is the unique solution to $\int_0^1 \frac{dx}{x(1-\log x)+\beta-1} = 1$ in~\cite{Kertz86}. The $\beta$-approximation \textit{quantile strategy} devised in~\cite{Correa17} shows that the aforementioned hardness is tight. %Remarkably, the quantile thresholds do not depend on the distribution.
	
	\paragraph{IID Prophet Inequality with Random Horizon.}
	The IID PI with Random Horizon (RH) was first introduced in \cite{HajKleSan07}. Consider a random variable $H$, which we will call the (random) \textit{horizon}, with given discrete distribution $\HH$, that is, supported on an arbitrary subset of $\NN$. This assumption comes with no loss of generality (formally, it rules out that the horizon has mass at zero). $H$ will be assumed finite ($\PP(H<\infty)=1$) and integrable ($\EE H <\infty$), which we denote $H\in\mathcal{L}^1(\Omega)$. RH is a relaxation of the IID PI, considering integrable \textit{iid} random variables $X_1,\ldots,X_H$, with given distribution $\DD$ independent of $\HH$. This setup, supported on a probability space $(\Omega,\FF,\PP)$, models the game where the gambler, facing an unknown number of values $X_1,\ldots,X_H$, can still use \textit{a priori} knowledge of $\HH$ when maximising returns. More precisely, denote $[n]=\{1,\ldots,n\}$. The goal is to maximise the gambler-to-prophet ratio of $\EE X_\tau$ over $\EE M_H$, where $M_H\defeq \max_{i\in[H]}X_i$ and both expectations run over the randomness of the \textit{iid} copies of $X\sim\DD$ and $H\sim\HH$. The gambler must select the value in ignorance of whether it is the last or not. If the gambler fails to stop by the time the last value has been inspected, the return is zero. This ignorance, added to the usual nonanticipative constraint, yields a no easier model than the IID PI. On top of the usual disadvantage of playing against a prophet, which knows all future realisations of the values, and chooses the largest; the gambler now competes against a prophet, which can also foresee the random number of values. We leave more in depth measure-theoretic details of the model for \Cref{opt}. %One could also think of generalising the IID PI to \textit{unknown horizons}, that is, with no distributional knowledge given to the gambler. This model, some aspects of which first appeared in \cite{MahSab06}, in connection with applications to automated mechanism design (see \Cref{related} for details) has been shown to not admit constant-approximations in \cite[Theorme~1]{HajKleSan07}.
	
	\subsection{Previous bounds on the IID Prophet Inequality with Random Horizon}\label{bounds}
	No constant-approximations are possible for RH \cite[Theorem~1.6]{AliBanGolMunWan20}. %\footnote{There exists a parametric family of horizons, such that the competitive ratio (which in this model is worst-case with respect to both the values' and the horizon's distributions) evaluated for a specific instance of the values, can be made to vanish, as the parameter varies suitably.} 
	Distributional knowledge of the horizon is not enough to guarantee, in expectation, that the gambler's return can attain a worst-case constant multiple of the prophet's return. Therefore, we impose additional restrictions on the distribution of the horizon.  %\footnote{It is also possible to restrict the distributional class for the values, but we do not pursue this line of investigation. Early considerations of this line of research can be found already in \cite{HajKleSan07}.} 
	The largest distributional class for which a constant-approximation has been found so far is that of horizons with increasing hazard rates $\lambda(h)$. For simplicity, we will use \textit{increasing} and \textit{decreasing} instead of \textit{nondecreasing} and \textit{nonincreasing} respectively. Recall that the hazard rate $\lambda(h)$ of a horizon $H$ is defined as follows. Denote the survival function $S(h)=\PP(H\ge h)$. If $|\supp(H)|=\infty$, for every $h\in\NN$, $\lambda(h)\defeq \PP(H=h)/S(h)$. If $|\supp(H)|<\infty$, for every $h\in\NN$, $\lambda(h)$ is defined analogously for all $ h\le\sup\supp(H)$, whereas it is set to $1$ for all $h>\sup\supp(H)$. %In the following definitions, $H$ always refers to horizons and will therefore be assumed finite, integrable and \textit{discrete}.\footnote{When speaking of horizons, \textit{discrete} clearly means valued in the positive integers.} In formal statements, we will denote the class of horizon distributions that have mass at zero as $\ZZ\defeq\{H:\,\PP(H=0)>0\}$.\footnote{As customary, we will often refer only to the horizon random variable, even when formally, we are stating facts regarding its distribution, which will be left implicit. For example $\ZZ\defeq\{H:\,\PP(H=0)>0\}$ corresponds to the more formal $\ZZ\defeq\{\HH:\,\PP(H=0)>0,\, H\sim\HH\}$.} Our main focus, in line with the literature, will be on $H\not\in\ZZ$. Thus we will restrict all concepts to such framework.
	\begin{definition}[$\IHR$ class]
		$H$ is \emph{Increasing Hazard Rate ($\IHR$)} if for every $h\in\NN$, $\lambda(h)\le\lambda(h+1)$.
	\end{definition}
	The \textit{dual} class is defined by reversing the inequality on the support of the horizon $\supp(H)$.%\footnote{It is customary to refer to classes obtained by reversing some defining relationship of order as \textit{dual}.}
	\begin{definition}[$\DHR$ class]
		$H$ is \emph{Decreasing Hazard Rate ($\DHR$)} if for every $h\ge\inf\supp(H)$, $\lambda(h)\ge\lambda(h+1)$.
	\end{definition}
	The geometric distribution is the only discrete distribution that has constant hazard rate, and is therefore the only discrete distribution that belongs to both classes.
	
	The state-of-the-art for RH consists in the findings of \cite{AliBanGolMunWan20}. In \cite[Theorem~3.2]{AliBanGolMunWan20}, through the use of \emph{second order stochastic dominance}, it is shown that if $H\in\IHR$, $\EE X_\tau\ge  (2-\sfrac{1}{\mu})^{-1} \EE M_H$, where $\mu\defeq\EE H\ge 1$. This ensures a (uniform) $2$-approximation on the $\IHR$ class.\footnote{For brevity, outside formal statements, we will often omit the dependence on distributional parameters when referring to competitive ratios. For example, in this sentence \textit{$2$-approximation} means, more precisely, $2-\sfrac{1}{\mu}$-approximations.}  Remarkably, this is achieved by a single-threshold algorithm \cite[Theorem~3.1]{AliBanGolMunWan20}, and it is tight \cite[Theorem~3.5]{AliBanGolMunWan20}, since if $H$ is geometrically distributed, a parametric family of two-point distributions can be shown to make the prophet-to-gambler ratio approach $2-\sfrac{1}{\mu}$, as the parameters vary suitably. The emphasis on monotone hazard rate classes stems from \cite{HajKleSan07}, which is the first work that considers RH, to the best of our knowledge.
	
	\subsection{Our contributions}\label{contributions}
	Implicitly, the motivation behind the approach that considers monotone hazard rate classes stems from interpreting RH from a reliability theory point of view: an optimal stopping game under evolving risk (or ageing) that it ends by the next step. In reliability theory several classes of distribution are used to model this aspect. However, despite more than $15$ years have passed, no progress has been made on classes larger than the $\IHR$ class. In this paper we contribute to the study of RH by:
	\begin{itemize}[noitemsep]
		\item Extending existing characterisations of the optimal algorithm to unbounded horizons.
		\item Complementing and extending key ideas from \cite{AliBanGolMunWan20} to yield the existence of single-threshold $2$-approximations on important superclasses of the $\IHR$ class, culminating in the $\PGF$ class.
		\item Showing that the $\PGFd$ class (dual of the $\PGF$ class) admits hard horizons for single-threshold algorithms..
		\item Deriving single-threshold constant-approximations for sufficiently concentrated horizons with finite second moments.
		\item Showing that RH admits families of horizons for which single-threshold algorithms are not competitive, despite the Secretary Problem (SP) optimal stopping rule providing a constant-approximation (see \Cref{related} for details on SP and its variants).
	\end{itemize}
	
	In this section we give formal statements of the above results, with the exception of \cref{optalg} in \Cref{opt}, due to the overall degree of technicalities involved. Informally speaking, this theorem characterises the optimal algorithm in terms of a discounted infinite optimal stopping problem, with a special focus on unbounded horizons. As a technical tool, the equivalence of stopping rules for RH with stopping rules for the discounted problem is leveraged not only to obtain hardness results for single-threshold algorithms, but also to formalise heuristic arguments involving the optimal algorithm under geometric horizon. More specifically: we exploit the equivalence between stopping rules for RH and corresponding rules for the discounted problem to extend hardness to neighbouring instances; we show that with a geometric horizon, if the value distribution is bounded, the optimal algorithm of RH is single-threshold, pinpointing the structural equation for the threshold. 
	
	The next three results involve a stochastic order, which we now introduce. Standard facts regarding stochastic orders are often relied upon throughout this work. They will be recalled as remarks, whose proof can be found in~\cite{ShakShan07}. In the following definitions, $H$ and $G$ always refer to horizons. 
	\begin{definition}[Probability Generating Function Order]
		Given two horizons $H$, $G$, we say that $G$ is dominated by $H$ in the \emph{probability generating function (pgf) order}, and denote it as $G\prec_{pgf} H$, if for every $t\in(0,1)$, $\EE\left( t^G\right)\ge\EE\left( t^H\right)$.
	\end{definition}
	\begin{remark}
		$G\prec_{pgf} H$ if and only if for all $t\in(0,1)$, $\sum_{i=1}^\infty S_G(i)t^i\le\sum_{i=1}^\infty S_H(i)t^i$.
	\end{remark}
	The $\PGF$ class, consisting of every positive discrete distribution that dominates, in the pgf order, the ($\NN$-valued) geometric distribution with the same mean, has been introduced in \cite{Klef83a}.
	\begin{definition}[$\PGF$ class]\label{Gdef}
		The $\PGF$ class consists of every horizon that dominates, in the pgf order, the geometric distribution with the same mean: $\PGF\defeq\{H:\, G\prec_{pgf} H,\, G\sim\Geom(\sfrac{1}{\EE H})\}$.
	\end{definition}
	The dual class is obtained by reversing the pgf ordering, and is denoted as $\PGFd$. It is well known that $\IHR\subset\PGF$, with the inclusion being strict. Similarly, $\DHR\subset\PGFd$. In applications of reliability theory, the classes within $\mathcal{G}$ and its dual are a staple of modelling many aspects of ageing and more in general of an evolving risk of failure.
	\begin{theorem}\label{Ghorizon}
		For every $H\in\PGF$ with $\EE H\eqdef\mu$, a single-threshold $2-\sfrac{1}{\mu}$-approximation exists.
	\end{theorem}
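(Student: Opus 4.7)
The plan is to adapt the single--threshold Samuel--Cahn argument of \cite[Theorem~3.1]{AliBanGolMunWan20} from $\IHR$ to all of $\PGF$. The key observation is that a single--threshold gambler's competitive ratio, when the threshold is placed at the $1/\mu$--upper quantile of $\DD$, turns out to equal $1-\EE q^H$ for a specific $q\in(0,1)$; this is \emph{exactly} a monotone functional in the pgf order, and on the geometric horizon it evaluates to $1/(2-1/\mu)$. The rest is bookkeeping.

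Concretely, I would pick $\theta\ge 0$ with $p\defeq\PP(X\ge\theta)=1/\mu$ (with randomised acceptance at an atom if needed), set $q=1-1/\mu$ and $r\defeq\EE(X-\theta)^+$. Conditioning on $H$ and summing the geometric series (interchange by Fubini on nonnegative terms) yields
\[
\EE X_\tau=\EE\bigl[X\II\{X\ge\theta\}\bigr]\cdot\sum_{i\ge 1}q^{i-1}\PP(H\ge i)=(\theta p+r)\cdot\frac{1-\EE q^H}{p}=(\theta+\mu r)(1-\EE q^H).
\]
The Samuel--Cahn pointwise bound $X_i\le\theta+(X_i-\theta)^+$ lifts to $M_H\le\theta+\sum_{i=1}^H(X_i-\theta)^+$; since $H$ is independent of $\{X_i\}$, Wald's identity gives $\EE M_H\le\theta+\mu r$. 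Taking the ratio collapses the $\theta$ and $\mu r$ factors, leaving
\[
\frac{\EE X_\tau}{\EE M_H}\ge 1-\EE q^H.
\]

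To lower bound $1-\EE q^H$ uniformly across $\PGF$, let $G\sim\Geom(1/\mu)$. By \Cref{Gdef}, $G\prec_{pgf} H$, hence $\EE q^H\le\EE q^G$ at $t=q\in(0,1)$. A direct computation with the geometric pgf gives $\EE q^G=(\mu-1)/(2\mu-1)$, so $1-\EE q^H\ge \mu/(2\mu-1)=1/(2-1/\mu)$, as claimed. Tightness on $\Geom(1/\mu)$ is an immediate by-product, which is consistent with \cite[Theorem~3.5]{AliBanGolMunWan20} since $\Geom(1/\mu)\in\IHR\subset\PGF$.

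The only real obstacle is the atomic case in the choice of $\theta$: if $\DD$ has an atom at the $1/\mu$--quantile, no deterministic $\theta$ with $\PP(X\ge\theta)=1/\mu$ exists, and one must randomise between the two straddling thresholds. This is standard and both the Samuel--Cahn prophet bound and the geometric--series identity for $\EE X_\tau$ carry over in expectation after the randomisation. The edge case $\mu=1$ is vacuous since $H\equiv 1$ a.s. The improvement over \cite[Theorem~3.1]{AliBanGolMunWan20} is conceptual: their second--order stochastic dominance step, tailored to $\IHR$, is replaced by the weaker pgf dominance that is the very definition of $\PGF$.
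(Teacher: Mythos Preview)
Your proof is correct and reaches the same inequality $1-\EE q^H\ge 1/(2-\sfrac{1}{\mu})$ via the pgf ordering, which is exactly the paper's key step. The route to that inequality differs, however. For the prophet upper bound, the paper proves $\EE M_H\le\EE(X\mid X\ge p)$ through a variational (ex-ante) argument on the density of $M_H$ (their \Cref{max}), and then handles discontinuous $\DD$ separately by approximating with a uniformly integrable family of continuous laws before invoking a randomised single-threshold (their \Cref{algran}). You instead use the pointwise Samuel--Cahn bound $M_H\le\theta+\sum_{i\le H}(X_i-\theta)^+$ together with Wald's identity to obtain $\EE M_H\le\theta+\mu r$; since $\theta+\mu r=\EE(X\mid X\ge\theta)$ when $p=1/\mu$, the two bounds coincide numerically. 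Your derivation is more elementary and, crucially, is indifferent to atoms in $\DD$: the Samuel--Cahn inequality is pointwise and Wald needs only independence and integrability, so the only place continuity enters is the existence of a $\theta$ with $\PP(X\ge\theta)=1/\mu$, which you dispatch with per-step randomised acceptance at the atom. This sidesteps the paper's UI-approximation machinery entirely. One small point of phrasing: ``randomise between the two straddling thresholds'' could be read as a one-shot randomisation of the threshold, which would not immediately yield the clean identity for $\EE X_\tau$; the correct (and standard) reading is per-step randomised acceptance at the atom, which you should make explicit.
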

	From \cite[Theorem~3.5]{AliBanGolMunWan20} it follows that any $2$-approximation on the $\PGF$ class is tight. More specifically, the $2$-approximation yielding \Cref{Ghorizon} is essentially\footnote{The argument and the algorithm provided in \cite{AliBanGolMunWan20}, although valuable and insightful, are somewhat informal and require some tweaks and filling some gaps. A more in depth discussion is given in \Cref{Gclass}, where we show that RH is amenable enough to allow for stochastic tie-breaking, a standard techniques which combines well with the aforementioned tweaks.} the same as the one used on the $\IHR$ class in \cite[Theorem~3.1]{AliBanGolMunWan20}. The key feature of the stopping rule is determining the value $p$ such that $\PP(X>p)=\sfrac{1}{\mu}$, and then accepting the first value exceeding $p$. The $\PGF$ class may be somewhat abstract, compared to its subclasses having immediate intuitive interpretations in terms of ageing concepts, such as the $\IHR$ class. Its possible interpretations are quite general and go beyond mere ageing, as can be read in \cite{Klef83a}. To gain some intuition of how much more general of a class it is, the reader is referred to \Cref{Gclass}, where we will describe some of its largest most well-known subclasses, as evidence that \Cref{Ghorizon} generalises significantly \cite[Theorem~3.1,~Theorem~3.2]{AliBanGolMunWan20}.\footnote{It may be beneficial to give a concrete example of a distribution that, although elementary, could not be handled without our extension. Let $H$ with $S(h)$ given as $1, \sfrac{1}{2}, \sfrac{1}{4}, \sfrac{1}{5}$ for $h=1,2,3,4$ respectively and $0$ for all $h>4$. The list of the corresponding hazard rates $\lambda(h)$, $\sfrac{1}{2}, \sfrac{1}{2}, \sfrac{1}{5}, 1,\ldots$, is not monotone increasing because $\lambda(3)$ drops. The $\PGF$-class condition is, however, satisfied, thus our result can ensure a $2$-approximation (to be specific, a $2-\sfrac{20}{39}\approx1.49$-approximation). It is straightforward to compute $\EE H =\sfrac{39}{20}$, so \Cref{Gdef} requires $\EE t^G\ge\EE t^H$ for all $0<t<1$, with $G\sim\Geom\left(\sfrac{20}{39}\right)$, and the two pgf's being respectively $20t/(39-19t)$ and $\sfrac{t}{2}+\sfrac{t^2}{4}+\sfrac{t^3}{20}+\sfrac{t^5}{5}$.} We conclude by mentioning that statistical tests for the $\mathcal{L}$ class (the continuous equivalent of the $\PGF$ class) have also been developed \cite{Kle83b}. The possibility of adapting them into tests for the $\PGF$ class adds to the practical significance of the result.
	
	Our next result requires introducing the stopping rule for SP with deterministic horizon $m$. This consists of rejecting the first $r-1$ values, and subsequently accepting the first value ranked better than all the previous ones. %\footnote{Ignore the technicality of tie-breaking for simplicity. Details are provided in \Cref{hardsecretary}.} 
	The waiting time $r_m\sim\sfrac{m}{e}$ as $m\longrightarrow\infty$ yields the optimal stopping rule (see \Cref{related} for details). In the following, we provide the first example of a family of horizons, for which no single-threshold constant-approximation is possible, and yet an adaptation of the SP stopping rule to random horizons is proven to provide a constant-approximation. The family of horizons we will consider is of the form $\{H_m\}_{m\ge M}$, with pmf's parametrised by $\eps>0$, $M\in\NN$ large enough and $m\defeq \sup\supp(H_m)<\infty$, and will therefore be denoted as $\mathcal{H}_M(\eps)$. Let $\zeta_{r_m}$ be the (optimal) stopping rule for SP with deterministic horizon $m$. Taking the minimum $\tau_m\defeq\zeta_{r_m}\wedge(H_m+1)$ produces a natural adaptation of the SP stopping rule to the random horizon $H_m$.
	\begin{theorem}\label{future}
		There exists a family of horizons $\mathcal{H}_M(\eps)$, such that for all fixed $\eps$ small enough and arbitrary $M$, no single-threshold constant-approximation is possible. Nonetheless, for every fixed $\eps$, no matter how small, the random horizon SP rule $\tau_m$ is an approximate constant-approximation, as long as $M=M(\eps)$ is fixed large enough.
	\end{theorem}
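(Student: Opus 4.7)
The plan is to construct, for each sufficiently small $\eps>0$, a family of horizons $\mathcal{H}_M(\eps)=\{H_m\}_{m\ge M}$ that simultaneously exhibits both claimed phenomena. The governing design principle is that the horizon should spread its mass across sufficiently many scales to defeat any single value-threshold, yet remain concentrated enough above the Secretary Problem waiting time $r_m\sim m/e$ for the rank-based rule $\tau_m$ to be effective once $M$ is fixed large in terms of $\eps$.

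For the hardness half, I would design $\mathcal{H}_M(\eps)$ so that $H_m$ spans several geometric scales inside $[M,m]$ — for instance by placing positive mass on $m,m/2,m/4,\ldots$ down to $M$ (or a comparable scale-free construction with controlled tails). The central observation is that a single-threshold rule fixes a value $p$ independently of the horizon realisation; when the value distribution is itself scale-free in a compatible way, the threshold can be well-calibrated to at most one of the horizon scales, either stopping prematurely with a small value when the horizon is long, or failing to stop at all when the horizon is short. Concretely, for any single-threshold algorithm with threshold $p=p(\mathcal{H}_M(\eps))$, I would exhibit an adversarial iid value distribution $\DD_p$ — of classical multi-scale, Pareto-like type analogous to the hard instances used in the iid prophet inequality literature — decomposing $\EE X_\tau$ and $\EE M_{H_m}$ scale by scale and showing that the ratio vanishes as $m\to\infty$, uniformly in $M$. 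Taking the supremum over such $\DD_p$ then concludes the impossibility for single-threshold algorithms.

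For the competitiveness of $\tau_m$, I would rely on the classical Secretary Problem analysis: for deterministic horizon $h\ge r_m$, the rule $\zeta_{r_m}$ selects the overall maximum of $X_1,\ldots,X_h$ with probability tending to $1/e$ and, being rank-based, this guarantee holds for \emph{any} iid value distribution. On the event $\{H_m\ge r_m\}$ the truncation in $\tau_m=\zeta_{r_m}\wedge(H_m+1)$ is inactive, so $\tau_m$ acts as $\zeta_{r_m}$ on the realised sequence $X_1,\ldots,X_{H_m}$ and thereby captures a constant fraction of $M_{H_m}$. Choosing $M=M(\eps)$ large enough that $\PP(H_m<r_m)\le\eps$ uniformly in $m\ge M$ then makes the small-horizon contribution (where $\tau_m$ returns $0$) negligible relative to $\EE M_{H_m}$; integrating over the horizon and taking the infimum over value distributions yields the announced constant-approximation.

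The main obstacle is precisely the joint design of the horizon family and the accompanying adversarial value distribution in the first half: the horizon must exhibit enough scale diversity to defeat every single-threshold algorithm \emph{for all} $M$, yet must remain concentrated above $r_m$ for the SP analysis of the second half to apply once $M$ is large. Making the quantification ``$M=M(\eps)$ fixed large enough'' explicit and compatible across both statements, while keeping the adversarial value distribution in the iid setting, is the delicate technical point of the proof.
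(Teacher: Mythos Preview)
Your proposal has genuine gaps in both halves.

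\textbf{Hardness half.} You write that for any single threshold $p=p(\mathcal{H}_M(\eps))$ you would exhibit an adversarial value distribution $\DD_p$. But in RH the algorithm knows both the horizon distribution \emph{and} the value distribution, so the threshold is a function $p(\DD,\HH)$, not $p(\HH)$ alone; choosing $\DD$ after $p$ inverts the quantifiers. The paper instead fixes a \emph{single} value distribution --- a Pareto law $V(x)=1-x^{-(1+\eps)}$ --- together with an explicit power-law horizon $\PP(H_m=h)\propto h^{-1/(1+2\eps)}$ on $[\ell,m]$, and proves that for this fixed pair \emph{every} threshold sequence $\{\pi_m\}$ yields a vanishing ratio (Proposition~\ref{hardsingle}). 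The Pareto tail exponent $1+\eps$ and the horizon exponent $1/(1+2\eps)$ are delicately matched so that the first-order optimality equation for the best threshold $\bar{\pi}_m$ forces it into a regime where the ratio dies; your multi-scale heuristic does not supply a concrete instance or this analysis.

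\textbf{Secretary half.} The claim ``on $\{H_m\ge r_m\}$ the truncation in $\tau_m=\zeta_{r_m}\wedge(H_m+1)$ is inactive'' is false: even when $H_m\ge r_m$, the rule $\zeta_{r_m}$ may fail to stop by time $H_m$ (no record appears in $[r_m,H_m]$), and when it does stop it may pick a running maximum that is not $M_{H_m}$. The paper does not argue via $\PP(H_m<r_m)$ at all. It passes to the discounted problem (Theorem~\ref{optalg}) and lower-bounds $\EE X_{\tau_m}$ by $\EE\big[M_m\sum_{i=r_m}^m S_m(i)\PP(W_i\mid\XX)\big]$, where $W_i$ is the event that the global maximum of all $m$ values sits at position $i$; estimating $S_m(i)$ on $[r_m,m]$ for the specific power-law horizon yields $\sum_{i=r_m}^m S_m(i)\PP(W_i\mid\XX)\ge g(\eps)-\delta$ with $g(\eps)>0$. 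Finally, your proposed geometric-scale horizon $m,m/2,m/4,\ldots,M$ would place almost all its mass below $r_m\sim m/e$ (only the top two dyadic scales exceed $m/e$), so the plan ``choose $M$ large so that $\PP(H_m<r_m)\le\eps$'' cannot succeed for that family: the tension you identify in your last paragraph is real, and the horizon you sketched resolves it the wrong way.
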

	For brevity, we say that the family of horizons $\mathcal{H}_M(\eps)$ is \textit{hard} for single-threshold algorithms, but the random horizon SP rule is \textit{competitive} on it.\footnote{We omit the qualifier \textit{approximate} appearing in the statement of \Cref{future},  a technicality which we now clarify. \textit{Approximate} refers to the fact that the competitive ratio ensured, which is a positive, increasing function of $\eps$, denoted as $g(\eps)$, is approached as $M\longrightarrow\infty$. At the large $\eps$ regime, $g(\eps)\approx\sfrac{1}{e^2}$. As $\eps$ grows, the family of horizons $\mathcal{H}_M(\eps)$ admits constant-competitive single-threshold algorithms too. However, at small $\eps$ regime, these algorithms fail, whereas $\tau_m$ guarantees approximately a $g(\eps)$ fraction of $\EE M_{H_m}$. Specifically, for all $m\ge M$, a ratio of $g(\eps)-\delta$ is guaranteed, where $\delta=\delta(M)\longrightarrow 0^+$ as $M\longrightarrow\infty$. Hence the dependence $M=M(\eps)$ with $M(\eps)$ growing large as $\eps$ vanishes.} Considering the extensive number of classes of distributions for which single-threshold algorithms are competitive, this result is quite surprising, especially because the SP stopping rule only needs information regarding the relative ranks of the values, not the values themselves.
	
	Since we showed that single-threshold $2$-approximations exist for the $\PGF$ class, finding whether single-threshold constant-approximations exist for the $\PGFd$ class turns into an interesting problem. The equivalent characterisation of stopping rules for RH in terms of a discounted stopping problem, combined with a perturbation argument on $\mathcal{H}_M(\eps)$, yields a negative answer.
	\begin{theorem}\label{Gdhorizon}
		No single-threshold constant-approximation is possible on the $\PGFd$ class.
	\end{theorem}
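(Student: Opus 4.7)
The plan is to take the hard family $\mathcal{H}_M(\eps)$ from \Cref{future}, which defeats single-threshold algorithms but whose members need not lie in $\PGFd$, and to perturb each $H_m$ into a horizon $\tilde H_m \in \PGFd$ so that single-threshold hardness is preserved up to a vanishing error. Concretely, I would define $\tilde H_m$ as the mixture that equals $H_m$ with probability $1-\delta$ and equals an independent heavy-tailed draw $L$ (for instance a geometric with very small success probability, or a large point mass) with probability $\delta$. Since mass on large integers lowers the pgf pointwise on $(0,1)$ while inflating the mean, an appropriate choice of $L$ and $\delta=\delta_m$ will push $\tilde H_m$ past the geometric-with-same-mean benchmark required by \Cref{Gdef}, placing it in $\PGFd$.

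The first technical step is the explicit pgf verification. Writing $\EE t^{\tilde H_m} = (1-\delta)\,\EE t^{H_m} + \delta\,\EE t^L$ and letting $\tilde\mu \defeq \EE \tilde H_m$, I would check $\EE t^{\tilde H_m} \le \EE t^{G_{\tilde\mu}}$ with $G_{\tilde\mu}\sim\Geom(1/\tilde\mu)$ by handling the boundary as $t\to 1^-$ via the matching means (a first-order calculation in $1-t$) and controlling the interior by making $L$'s tail so long that $\EE t^L$ is uniformly tiny on any compact subinterval of $(0,1)$; this only needs a one-parameter perturbation and leaves $\delta_m$ free.

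The second technical step is the transfer of hardness via \cref{optalg}. By the equivalence between stopping rules for RH and those for a discounted infinite-horizon stopping problem with discount function built from $S_H$, the value of any single-threshold rule and of the prophet are Lipschitz (in the appropriate $\ell^1$-weighted sense) in $S_H$. Since $S_{\tilde H_m}$ agrees with $S_{H_m}$ on the support of $H_m$ up to an $O(\delta)$ shift and adds only a uniformly summable tail, both $\EE X_\tau$ and $\EE M_{\tilde H_m}$ deviate from their $H_m$-counterparts by amounts vanishing with $\delta$, uniformly over all threshold choices. Combining with \Cref{future} yields: for any purported single-threshold constant $c>0$, one picks $\eps$ small, $m \ge M(\eps)$ large enough that $H_m$ beats $c$ against every single threshold, then $\delta_m$ small enough that the ratio under $\tilde H_m$ still stays below $c$.

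The main obstacle is coordinating the two parameter directions: $\delta_m$ must be small enough (and must shrink with $m$) for the discounted-stopping perturbation bound to preserve hardness, yet $L$ and $\delta_m$ together must satisfy a pointwise pgf inequality that depends on the fine structure of $H_m$ (and on its comparison with the geometric of the \emph{perturbed} mean $\tilde\mu$, not the original mean $\EE H_m$). Making the choices uniform in $m$, and matching them with the $\eps, M(\eps)$ regime from \Cref{future}, is the delicate bookkeeping step; once done, the discounted-stopping argument does the heavy lifting and the $\PGFd$ containment is reduced to an elementary pgf comparison.
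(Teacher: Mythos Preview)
Your perturbation is in the wrong direction, and this kills the $\PGFd$ verification. Membership in $\PGFd$ means $\EE t^{\tilde H_m}\ge \EE t^{G_{\tilde\mu}}$ for all $t\in(0,1)$. Compare the coefficients of $t$ as $t\to 0^+$: the geometric satisfies $\EE t^{G_{\tilde\mu}}\sim t/\tilde\mu$, whereas $\EE t^{\tilde H_m}\sim \PP(\tilde H_m=1)\,t$. Since the hard family has $\inf\supp(H_m)=\ell=2$, mixing with a \emph{large} point mass leaves $\PP(\tilde H_m=1)=0$, so the inequality fails for all sufficiently small $t$, no matter how large you make $\tilde\mu$. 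Using a heavy-tailed geometric for $L$ gives $\PP(\tilde H_m=1)=\delta p$; the $t\to 0^+$ condition then forces $\delta p\,\tilde\mu\ge 1$, i.e.\ roughly $\delta p(1-\delta)\mu_m+\delta^2\ge 1$, which is incompatible with $\delta\to 0$ and $p$ small. In short, mass at large integers lowers the pgf of $\tilde H_m$ \emph{and} lowers the comparison geometric's pgf, but near $t=0$ the first effect is of higher order and always loses; you must add mass at $h=1$ to match the linear term.

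This is exactly what the paper does: it adds a carefully sized mass $\delta_m=C(\eps)m^{\frac{2\eps}{1+2\eps}}$ at $h=1$ (not at large values), then checks the pgf inequality by splitting $(0,1)$ into a left piece where the right-hand side of the recast inequality is nonpositive and a right piece where a concavity argument rules out crossings. The hardness transfer is then immediate and does not need any Lipschitz estimate: adding mass at $1$ makes $\tilde S_m(i)\le S_m(i)$ for every $i$, so by the discounted representation of \Cref{optalg} the gambler's return under $\tilde H_m$ is pointwise at most the return under $H_m$ for \emph{every} threshold, while $\EE M_{\tilde H_m}\ge (Z_m/\tilde Z_m)\EE M_{H_m}\asymp \EE M_{H_m}$. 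Your proposed Lipschitz-in-$S_H$ transfer, by contrast, would not be uniform once you let $L$ have large support (the added tail $\sum_{i>m}\tilde S_m(i)$ need not be small), so even setting aside the $\PGFd$ failure, Step~2 as written does not close.
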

	In \Cref{Gdclass} we will describe more in detail some of the largest most well-known subclasses of the $\PGFd$ class. They are no less important than the corresponding subclasses of the $\PGF$ class. For example, discrete $\DHR$ distributions are crucial in several areas: they have been shown to describe well the number of seasons a show is run before it gets cancelled and the number of periods until failure of a device (a system, or a component, that functions until first failure) governed by a continuous $\DHR$ life distribution in the grouped data case \cite{Langberg80}. \Cref{Gdhorizon} motivates our conjecture that single-threshold $2$-approximations are not possible on the $\DHR$ class either. Since our characterisation result of stopping rules for RH extends to unbounded horizons, it will be a crucial tool in proving this conjecture within our framework, as $\DHR$ horizons have support of the form $[a,\infty)\cap\NN$. Furthermore, suitably combining the techniques yielding \Cref{Ghorizon,Gdhorizon} with the fact that the tight geometric instance for $2$-approximations belongs to the $\PGF$ class, motivates our second conjecture that the $\PGF$ class is essentially optimal for %single-threshold
	$2$-approximations on $\mathcal{L}^1$ horizons.
	
	In our last result, we go back to single-threshold algorithms, but consider only horizons $H$ satisfying th e sole condition $\EE( H^2)<\infty$, denoted $H\in\mathcal{L}^2(\Omega)$. Here no particular information is available regarding the classification of the distribution in terms of the $\PGF$. We show that it is possible to exploit concentration bounds, in order to ensure that the same single-threshold algorithm used for the $\PGF$ is still a $2$-approximation. In the following result, $W_0$ denotes the principal branch of the Lambert function.
	\begin{theorem}\label{L2horizon}
		For every $H$ having both $\mu\defeq \EE H$ and $\sigma^2\defeq\Var H$ finite, such that
	\begin{equation}\label{concentration}
        \sigma^2\le\mu^2\left[ 1-\frac{1}{\mu}+W_0\left(-\left(2-\frac{1}{\mu}\right)e^{-\left(2-\frac{1}{\mu}\right)}\right)\right],
        \end{equation}
		a single-threshold $2-\sfrac{1}{\mu}$-approximation exists.
	\end{theorem}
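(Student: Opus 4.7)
The plan is to reuse the single-threshold rule from \Cref{Ghorizon}: choose $p$ with $\PP(X>p)=\sfrac{1}{\mu}$, and set $q\defeq 1-\sfrac{1}{\mu}$, $\beta\defeq 2-\sfrac{1}{\mu}=1+q$. The same computation as in the $\PGF$ case gives $\EE X_\tau=(p+\mu\EE(X-p)^+)(1-\EE q^H)$, and combined with the standard prophet bound $\EE M_H\le p+\mu\EE(X-p)^+$ it yields $\EE X_\tau/\EE M_H\ge 1-\EE q^H$. The theorem therefore reduces to proving $\EE q^H\le q/\beta$, and the entire role of the variance condition is to make this inequality hold.

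To upper bound $\EE q^H$, I would invoke the two-point extremal distribution arising from the classical moment problem (equivalently, from Bennett's MGF inequality applied to $Y\defeq\mu-H$ with the crude bound $Y\le\mu$). Since $H\ge 1\ge 0$, among nonnegative random variables with mean $\mu$ and variance $\sigma^2$ the convex functional $\EE q^H$ is maximised by the two-point law on $\{0,\,\mu+\sigma^2/\mu\}$ with mass $\sigma^2/(\mu^2+\sigma^2)$ at $0$, which gives
\begin{equation*}
\EE q^H\;\le\;\frac{\sigma^2+\mu^2\,q^{\mu+\sigma^2/\mu}}{\mu^2+\sigma^2}.
\end{equation*}
Writing $s\defeq\sigma^2/\mu^2$ and clearing denominators using $\beta-q=1$, the target bound $\EE q^H\le q/\beta$ reduces to $s\le q-\beta\,q^{\mu(1+s)}$; and applying the elementary estimate $\ln(1-\sfrac{1}{\mu})\le -\sfrac{1}{\mu}$ to obtain $q^{\mu(1+s)}\le e^{-(1+s)}$, it suffices to verify the stronger condition $s+\beta e^{-(1+s)}\le q$.

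Substituting $v\defeq q-s$ and exploiting the identity $1+s=\beta-v$ (coming from $\beta=1+q$), this rewrites as $\beta e^{v-\beta}\le v$, equivalently $ve^{-v}\ge\beta e^{-\beta}$. Since $v\in[0,q]\subseteq[0,1]$ and $v\mapsto ve^{-v}$ is strictly increasing on $[0,1]$, this is equivalent to $v\ge u$, where $u\in(0,1]$ is the unique solution of $ue^{-u}=\beta e^{-\beta}$; by definition of the principal branch, $u=-W_0(-\beta e^{-\beta})$, so $v\ge u$ rearranges exactly to $\sigma^2/\mu^2\le 1-\sfrac{1}{\mu}+W_0(-(2-\sfrac{1}{\mu})e^{-(2-\sfrac{1}{\mu})})$, which is the hypothesis of the theorem. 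The hard part in this plan is choosing which extremal bound to apply: the sharper two-point extremal over $H\ge 1$ (with atoms at $\{1,\mu+\sigma^2/(\mu-1)\}$) would give a more permissive variance condition but leads to a transcendental inequality with no closed form, whereas relaxing the support to $[0,\infty)$ together with the further weakening $q^{\mu t}\le e^{-t}$ collapses everything neatly onto a Lambert $W$ equation, thanks to the algebraic coincidence $\beta=1+q$.
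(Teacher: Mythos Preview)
Your proof is correct and follows essentially the same route as the paper: the paper phrases the two-point extremal bound as the Laplace transform order $G\prec_{Lt}H$ (\Cref{bernoulli}) with the same two-point $G$ on $\{0,\mu+\sigma^2/\mu\}$, applies the same weakening $q^{\mu t}\le e^{-t}$ via $\log(1-1/\mu)\le -1/\mu$, and reaches the same Lambert-$W$ condition through the substitution $y=x-(2-1/\mu)$, which is just your $-v$. Your framing via Bennett's MGF inequality and the substitution $v=q-s$ are cosmetic variants of the same argument.
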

	The Lambert function is readily simulated in many languages, with well-studied error bounds. Therefore, one can reliably compute numerically the magnitude of concentration required by \Cref{concentration}, so that a $2$-approximation is ensured. To exemplify, consider the large-market limit, that is $\mu\longrightarrow\infty$, which is both interesting for practical applications involving horizons with large expectation, and simplifies computations. Under this hypothesis, taking square-root on both sides of \Cref{concentration} yields approximately $\text{CV}\defeq\sfrac{\sigma}{\mu}\le0.770$, where $\text{CV}$ is the \textit{coefficient of variation} of the horizon. 
	Variations over this procedure show that both weaker and stronger than $2$ constant-approximations are possible for all suitably concentrated horizons. Consider again the upper bound on $\text{CV}$ provided by square-rooting \Cref{concentration} in the large-market limit: $\text{CV}\le \sqrt{1+W_0(-2e^{-2})}$. Replacing $2$ with $C>2$ yields $\text{CV}\le \sqrt{C-1+W_0(-Ce^{-C})}$. This will ensure a large-market limit weaker $C$-approximation. As $x\ge1$ grows, $W_0(-xe^{-x})$ is negative strictly increasing (valued $-1$ at $x=1$) and vanishing. Thus the upper bound on the $\text{CV}$ ensuring a $C$-approximation for $C$ growing large scales roughly as $\sqrt{C-1}$, meaning that there will be, although progressively weaker, constant-approximations, as long as $\text{CV}<\sqrt{C-1}$.
	In the other direction, stronger $C$-approximations are ensured when taking \textit{admissible} $1<C<2$. When reducing $C$, we clearly cannot go past $e/(e-1)$, where the upper bound on the $\text{CV}$ vanishes, meaning that the horizon is required to progressively concentrate until it becomes degenerate. At this point our estimates hit the deterministic optimum for single-threshold IID PI, $1-\sfrac{1}{e}$ \cite{Esh18}. Following similar ideas, the following holds without large-market hypothesis.
	\begin{corollary}\label{largemarket}
		For every $H$ having both $\mu\defeq \EE H$ and $\sigma^2\defeq\Var H$ finite, and every constant $C\ge[1-\left(1-\sfrac{1}{\mu}\right)^{\mu}]^{-1}$, if \[\text{CV}\le \sqrt{C-1+W_0(-Ce^{-C})},\] then a single-threshold $C$-approximation exists.
	\end{corollary}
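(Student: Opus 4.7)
The plan is to follow the structure of the argument for \Cref{L2horizon}, now treating $C$ as a free parameter in the admissible range $C \ge [1-(1-\sfrac{1}{\mu})^\mu]^{-1}$. First, I would use the single-threshold algorithm with threshold $p$ defined by $\PP(X > p) = q \defeq \sfrac{1}{\mu}$, and denote the induced stopping time by $\tau$. A direct computation using independence of the values from the horizon together with the iid structure of $\{X_i\}$ gives
\[
\EE X_\tau \;=\; \frac{\EE[X\,\II_{\{X>p\}}]}{q}\,\EE\bigl[1-(1-q)^H\bigr] \;=\; \bigl(p + \mu\,\EE[(X-p)_+]\bigr)\,\EE\bigl[1-(1-\sfrac{1}{\mu})^H\bigr],
\]
while the standard prophet bound (union bound combined with Wald's identity) yields $\EE M_H \le p + \mu\,\EE[(X-p)_+]$. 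Hence $\EE X_\tau/\EE M_H \ge \EE[1 - (1-\sfrac{1}{\mu})^H]$, so a single-threshold $C$-approximation is guaranteed once one shows that $\EE[(1-\sfrac{1}{\mu})^H] \le 1 - \sfrac{1}{C}$.

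Second, I would upper-bound this expectation uniformly over all non-negative distributions with prescribed mean $\mu$ and variance $\sigma^2$. Convexity of $h \mapsto (1-\sfrac{1}{\mu})^h$ on $[0,\infty)$ combined with a standard moment-problem argument pins the worst case to a two-point distribution supported on $\{0,\mu(1+\text{CV}^2)\}$ with weights $\text{CV}^2/(1+\text{CV}^2)$ and $1/(1+\text{CV}^2)$. Combined with the elementary bound $(1-\sfrac{1}{\mu})^\mu \le e^{-1}$, this gives
\[
\EE\bigl[(1-\sfrac{1}{\mu})^H\bigr] \;\le\; \frac{\text{CV}^2 + (1-\sfrac{1}{\mu})^{\mu(1+\text{CV}^2)}}{1+\text{CV}^2} \;\le\; \frac{\text{CV}^2 + e^{-(1+\text{CV}^2)}}{1+\text{CV}^2}.
\]

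Third, I would solve the resulting inequality. Writing $y \defeq 1 + \text{CV}^2$ and imposing the last bound to be $\le (C-1)/C$ reduces the inequality to $y + Ce^{-y} \le C$; substituting $z \defeq C - y$ turns this into $ze^{-z} \le Ce^{-C}$, and the non-trivial root of $ze^{-z} = Ce^{-C}$ is $z = -W_0(-Ce^{-C})$. Hence the admissible range is $y \le C + W_0(-Ce^{-C})$, i.e., $\text{CV}^2 \le C - 1 + W_0(-Ce^{-C})$, as claimed. The lower bound $C \ge [1-(1-\sfrac{1}{\mu})^\mu]^{-1}$ is the natural admissibility constraint: it is exactly the condition that the algorithm admit a $C$-approximation at $\sigma = 0$, where $H \equiv \mu$ forces $\EE X_\tau / \EE M_H = 1 - (1-\sfrac{1}{\mu})^\mu$.

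The main obstacle I foresee is the moment-problem step: justifying cleanly that the supremum of the convex functional $\EE[f(H)]$ over non-negative distributions with prescribed first two moments is attained at the boundary-supported two-point distribution. Once this is in place, the remaining algebra is routine but must be arranged carefully so as to collapse to the Lambert equation $ze^{-z}=Ce^{-C}$ rather than a looser transcendental condition that would weaken the stated bound.
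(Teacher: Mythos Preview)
Your proposal is correct and follows essentially the same route as the paper: bound the ratio below by $1-\EE[(1-\sfrac{1}{\mu})^H]$, push the expectation to the two-point distribution on $\{0,\mu(1+\text{CV}^2)\}$, apply $(1-\sfrac{1}{\mu})^\mu\le e^{-1}$, and solve via Lambert $W_0$. Two small remarks: the ``obstacle'' you flag is not one---the two-point extremal is exactly the Laplace-transform-order fact the paper records as \Cref{bernoulli} (from \cite{ShakShan07}), so no bespoke moment-problem argument is needed; and your substitution $z=C-y$ actually yields $ze^{-z}\ge Ce^{-C}$ (not $\le$), though your conclusion $y\le C+W_0(-Ce^{-C})$ is the correct one once the direction is fixed.
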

	At a closer look, it is not surprising that stronger approximations are possible within the class of sufficiently concentrated horizons. Although it is true that the geometric horizon is tight for $2$-approximations in $\mathcal{L}^2(\Omega)$ (which is the reason for which \Cref{L2horizon} is formulated in terms of them), the aforementioned properties of $W_0(-xe^{-x})$ readily imply that no geometric horizon satisfies \Cref{concentration}.\footnote{It might be beneficial to give another concrete example of a distribution that, although elementary and sufficiently concentrated for our result to ensure a $2$-approximation (to be specific, a $1.89$-approximation), could not be handled by previous results. Consider $H$ having Zipf’s distribution on $[20]$ with shape $a = 0.3$, that is with probability mass function $f(h)\defeq \sfrac{h^{-a}}{Z}$ for all $h\in[20]$, where $Z\defeq\sum_{h=1}^{20} h^{-a} = 10.9$. The list of hazard rates $\lambda(h)$ is approximately $0.0914, 0.0818, 0.0789, 0.0785, 0.0797, 0.0820, 0.0853, 0.0896, 0.0950, \ldots$, so it is not monotone. Yet the distribution satisfies \Cref{concentration}, since $\Var H\approx 34.6$ is smaller than the right-hand side of \Cref{concentration}, which evaluates to approximately $37.0$ (all values provided are numerical approximations up to machine precision).}
	
	\subsection{Our techniques}\label{techniques}
	The toolbox we assembled, can be described in three key-points.
	
	\textbf{Discounted infinite problem.} The optimal algorithm is characterised via a correspondence between stopping rules for RH and stopping rules for a discounted optimal stopping problem, with the discount factors being the survival function of the horizon. %In the case of bounded horizons, this characterisation follows from \cite{Sam96}. 
	More general optimal stopping problems (unbounded random horizon) do not always admit an optimal algorithm~\cite{Sam96}. We show that for RH, this is always possible. This is crucial for geometric and $\DHR$ horizons, which are unbounded. The generalisation is the product of an original measure-theoretic construction, leveraging trace $\sigma$-algebras.\footnote{The reader not familiar with measure-theoretic probability should not be discouraged, since the rest of our main proofs leverage only the intuitive meaning of this, which will be made clear.}
	
	\textbf{Stochastic orders.} The extension of the $2$-approximation from the $\IHR$ class to the $\PGF$ class and $\mathcal{L}^2$ horizons is obtained by exploiting stochastic orders, which have never before been applied to prophet inequality problems. %For continuous distributions we saw that the stopping rule is quite simple. As previously mentioned, discontinuous distributions are handled by adding to it \textit{ad hoc} randomisation. This aspect of the problem was missed in previous work. Although standard in the prophet inequality literature with deterministic horizon, with random horizon stochastic tie-breaking arguments become technically more demanding. The reader interested in the specific features of the algorithm we design is therefore referred to \Cref{algorithm} at the end of \Cref{Gclass}, where we summarise them. Here we would like to focus on the key ideas behind our stochastic-orders based approach, focusing on how we are able to extend a competitive ratio from a certain class to its dual. 
	Several distributional classes, starting from the $\IHR$ class, can be equivalently defined through some stochastic-ordering with respect to the geometric distribution (as previously seen, the largest of such classes exploits the pgf order). Thus our \cref{Ghorizon} not only extends to the $\PGF$ class, but simplifies the arguments of \cite{AliBanGolMunWan20}. The proof of \Cref{L2horizon} leverages extremal properties of specific Bernoulli distributions with respect to another stochastic order, the Laplace transform order (see \Cref{L2} for details), for distributions having finite variance.
	
	\textbf{Hard instance.} Our analysis of the hard instance for single-thresholds algorithms is also novel. The literature on prophet inequalities has focused only on bounded discrete instances that are hard for every algorithm. On the other hand, the instance we provide exploits a continuous Pareto distribution, which, combined with a suitable family of horizons, is both hard for single-threshold algorithms, and competitive for the SP stopping rule, against the offline maximum. This requires a more sophisticated and analytic approach. The phenomenon, for which certain algorithms fail on an instance but others do not, is subtle. To the best of our knowledge, a competitive analysis of the SP stopping rule against the offline maximum with a random horizon is also new (see \Cref{related}). It is worth noting that this hard instance is designed to live in a close neighbourhood of the $\PGFd$-class, so that perturbing it suitably yields, at the same time, a hard family in the $\PGFd$ class.

	\subsection{Related work}\label{related}
	The literature on prophet inequalities models with deterministic horizon is vast. The techniques required for RH differ significantly, and the reader interested in the deterministic setting is referred to surveys such as \cite{Correa18}, which contains many recent developments; \cite{Lucier17,Correa19}, for an economic point of view (including applications to Posted Price Mechanisms and more broadly online auctions); \cite{HillKertz92}, for classical results concerning infinite instances. Besides the classical and IID PI, already reviewed in \Cref{model}, the other two main models with deterministic horizon we mention here are the Random Order and the Order Selection PI. In the first the random values arrive in a uniform random order (this model has been shown to be $0.688$-competitive~\cite{ChenHuangLiTang24} and $0.7235$-hard~\cite{Giam23}), in the second the order can be chosen by the gambler (this model has been shown to be $0.7258$-competitive~\cite{BubChip23} and is $\sfrac{1}{\beta}$-hard, since the IID PI is a particular case). Very recently also Oracle-augmented PI have been introduced \cite{HarHarbLiv24}. The existing literature strictly related to RH is limited to \cite{AliBanGolMunWan20, HajKleSan07,MahSab06}, which explore the topic from the point of view of online automated mechanism design (see \cite{Sand03} for a survey), with a strong emphasis on the multiple-items setting. Although the setting treated in this work is the single-item one, it is crucial, since a standard machinery has already been developed in \cite{AliBanGolMunWan20} to extend results for single-item RH to multiple items, as long as the horizons for each item are uniformly bounded. To the best of our knowledge, \cite{MahSab06} is the first paper to consider the problem of unknown market size (that is, random horizon with unknown distribution). As an example of practical applications of RH, the original motivation proposed by the authors was that in search keyword auctions such as Google, the supply (search queries that the users enter, resulting in a search results page on which ads can be displayed) is not known in advance, and arrives in an online fashion. This is a multi-unit auction of unknown size for a perishable item (the search queries, who perish within a fraction of a second), to be allocated to buyers (the ads that will be displayed on the search results page). Their analysis provides constant-approximations of both the online and offline optimal single-price algorithm. These results are limited to unknown horizons. This assumption has later been shown to be hard for RH in \cite{HajKleSan07}. In practice, distributional knowledge is available via the history of past transactions. Therefore, in \cite{HajKleSan07} the focus shifted to admitting known horizons. As to obtaining constant-approximations, however, they focus solely on generalising to multiple items the \textit{additive} prophet inequality of \cite{HillKertz83} for \textit{bounded} value distributions, %\footnote{Although this is stated only for deterministic horizons, it does extend to any random horizon via our \Cref{optalg}, since the value distributions are not assumed \textit{iid}.} 
	whereas our main focus is on multiplicative prophet inequalities, in line with \cite{AliBanGolMunWan20}, already extensively reviewed (see \Cref{bounds}). We conclude this paragraph by mentioning that, on a technical level, RH is also well-recognised as connected to other online stochastic matching problems with random number of queries, which have recently been shown to admit $2$-approximations \cite{AouMa23}. A potential ground for applications of our techniques would be improving such guarantees.
	
	In the remaining of this section we will focus on another well-known problem in theoretical computer science, the Secretary Problem (SP), extensively studied in the random horizon setting, partially inspiring our approach. For the history and variations on the problem see \cite{GilMost66,Free83}. The classical SP is as follows: let $n$ be a known number of secretaries for hire. They are interviewed in an online fashion by the employer, all $n!$ possible orderings being equally likely. The employer is able, at the end of each interview, to rank all the secretaries that have been presented so far ( \textit{relative ranks}). Each time, the employer must either hire (and stop with the interview), or irrevocably reject the secretary (and proceed with the next interview, except for the last one: the employer will have to hire the last secretary, if the previous $n-1$ were rejected). The objective is maximising the probability of choosing the best secretary, that is the one having \textit{absolute rank} $1$. %\footnote{The \textit{absolute rank} is the rank, which the employer is able to assign upon having interviewed all $n$ secretaries.} 
	The problem is solved in \cite{Lind61}: the optimal stopping rule consists in rejecting all the first $r-1$ of the $n$ secretaries, and to accept the first secretary, which is better than all previous ones (relative rank $1$) thereafter. The number $r=r_n$ can be characterised such that as $n\longrightarrow\infty$, both $\sfrac{r}{n}$ and the probability of selecting the best (secretary) tend to $\sfrac{1}{e}\approx 0.3679$. This result is also obtained in \cite{Dynk63} through an alternative study of an underlying Markov chain. Consider the following process $\{x_t\}_{t\in[n]_0}$: if the employer enumerates the secretaries in the order in which they are encountered, and then lets $x_0=1$, and for $t\in[n]$, lets $x_t$ be the number of the first secretary, which is preferable to the one denoted by $x_{t-1}$; then we have that $\{x_t\}$ is a Markov chain with state space $[n]$. Potential-based optimal stopping techniques yield the aforementioned results. The Markovian approach is crucial to the analysis of random horizon SP.
	
	When the number of secretaries is random, it is denoted by $N$, and the secretaries are interviewed in uniform random order, conditionally on $N$. At each interview the employer, who has distributional knowledge of $N$, makes the decision of stopping in ignorance of whether there will be a secretary to interview next. The employer would obtain no reward, in the eventuality that the secretary just rejected was indeed the last. This model was first studied in \cite{PresSon72} through a generalisation of the Markovian approach of \cite{Dynk63}. The optimal stopping rule is no longer as simple as in the deterministic case, where we had that it is optimal to stop after a waiting time, as soon as a secretary having relative rank $1$ is interviewed. This is called one \textit{stopping island} scenario. Depending on the horizon's distribution, we can have a more complicated set $\Gamma$ composed of multiple islands, in which it is optimal to stop at a secretary with relative rank $1$, and outside which it is not. For uniform, geometric and Poisson horizons there still is only one island yielding non-vanishing asymptotics of the probability of choosing the best, but the value varies. %\footnote{For example, for uniform horizon $N\sim\Unif[n]$, $\sfrac{r}{n}$ tends to $\sfrac{1}{e^2}$, while the probability of choosing the best tends to $\sfrac{2}{e^2}\approx 0.2707$ as $n\longrightarrow\infty$.} 
	In \cite{AbdBatTrus82} it was later shown that a hard family of horizons $\{N_m\}$ (with $\supp(N_m)=[m]_0$, for every $m\in\NN$) could be constructed, meaning that as $m\longrightarrow\infty$, the probability of choosing the best vanishes. In order to recover a constant nonzero limiting value, more information is needed.%\footnote{In \cite{Bru84} a constant probability was recovered, for example, by imposing more structure on the randomness of the horizon, which was implemented through \textit{iid} continuous arrival times of the secretaries over a given interval of time, according to a distribution with cdf $F(x)$ known to the employer. In such model, rejecting all secretaries showing up before $F^{-1}(\sfrac{1}{e})$ and then accepting the first with relative rank $1$ guarantees a probability of selecting the best of at least $\sfrac{1}{e}$.}
	
	SP has also been studied in its \textit{full-information} variant, that is when the common continuous distribution, from which the quality measurement of each of the $n$ secretaries is sampled independently, is known by the employer, who can proceed to directly inspect, at each interview, the secretary's true quality. A first heuristic solution to this problem was given in \cite{GilMost66} via a backward induction argument. %\footnote{That is setting acceptance thresholds $\{d_i\}$, which the quality of a secretary of relative rank $1$ has to surpass at the $i$th interview, in order to be optimal to stop. The numerics show that the limiting probability of choosing the best with the optimal stopping rule is approximately $0.5802$.} 
	A rigorous proof of this form of the optimal stopping rule was successively derived in \cite{Boj78} via the Markovian approach. The full-information variant has also been studied with random horizon by \cite{Por87}, where the Markovian approach of \cite{Boj78} is successfully extended. To the best of our knowledge, a hard instance is not yet known for the full-information problem with random horizon. %It is worth noting that if the horizon is geometric with success probability less than $\sfrac{1}{e}$, the optimal stopping rule is single-threshold, and the probability of choosing the best is constant, equal to $\sfrac{1}{e}$.For $N\sim\Unif[n]$, as $n\longrightarrow\infty$ the probability of choosing the best is numerically estimated to be $0.4352$, which is significantly larger than in the no-information case previously described. 
	
	Both no- and full-information SP have also been studied under the harder hypothesis of \textit{random freeze}, respectively in \cite{Sam95,Sam96}. In this case, given $n$ secretaries, interviews may have to stop  at an independent random time $N\le n$, with distribution known to the employer. %The asymptotic value of the probability of choosing the best of all $n$ secretaries is estimated for specific distributions, along the line of the previous works. 
 	Finally, we mention a variant of SP studied in \cite{ChoMorRobSam64}: instead of seeking to maximise the probability of choosing the best, the employer seeks to minimise the expected rank of the chosen secretary. %In \cite{ChoMorRobSam64} it was shown by backward induction that the optimal stopping rule yields limiting expected value of approximately $3.8695$. 
 	This problem was extended to bounded random horizons $N$ in \cite{Gia79}, and studied for a parametric family of $\IHR$ horizons. %namely having $\lambda(h)=(n-h+1)^{-\alpha}$, where $\alpha>0$ is the parameter varying in the family, and $N\le n$.\footnote{The case  of $\alpha=1$ corresponds to $N\sim\Unif[n]$. The results show that, as $n\longrightarrow\infty$: if $\alpha<2$, the expected rank obtained by the optimal stopping rule diverges to infinity, that is the problem is hard; if $\alpha \ge 2$, the optimal stopping rule is competitive. In particular if $\alpha>2$, the limiting expected rank recovers that of the deterministic setting, whereas if $\alpha = 2$ it worsens.} 
	To the best of our knowledge this is the first work, which investigates broad distributional classes (note that they are defined in terms of hazard rate properties) rather than families of distributions of a given form. More recent developments extend SP to combinatorial structures and unknown horizons~\cite{HoefKod17,GharVon13}.
	
	\subsection{Organisation of the paper and notation}
	
	The paper is organised as follows. In \Cref{opt} we give preliminaries on the formal approach to random horizon optimal stopping problems, and prove the equivalence with discounted optimal stopping problem. In \Cref{Gclass} we prove \Cref{Ghorizon}.  In \Cref{hardsecretary} we prove \Cref{future}. In \Cref{Gdclass} we prove \Cref{Gdhorizon}. In \Cref{L2} we prove \Cref{L2horizon}. The Supplementary materials contain a detailed version of all the proofs.
	
	\section{Preliminaries}\label{opt}
	In this section we set up our notation, give finer details regarding the probabilistic model of RH and characterise the optimal algorithm by deriving a useful equivalence between stopping rules for RH and stopping rules of an associated discounted (possibly infinite) stopping problem.
	
	\subsection{Notation} 
	We denote $\NN_0\defeq\NN\cup\{0\}$, $\NNN\defeq\NN\cup\{\infty\}$, $\overline{\RR}\defeq\RR\cup\{\infty,-\infty\}$, $[n]\defeq\{1,2,\ldots,n\}\subset\NN$, $[0]=\{0\}$, $[n]_0\defeq\{0\}\cup[n]$, $a\wedge b\defeq\min(a,b)$ and $a\vee b\defeq \max(a,b)$ for all $a,b\in\RR$. Given nonnegative functions $f(x)$ and $g(x)$ (or equivalently the corresponding sequences $\{f_n\}$ and $\{g_n\}$ if we consider $x=n\in\NN$) and a point $a\in\overline{\RR}$, assume, for simplicity, that $g(x)$ is strictly positive in a punctured neighbourhood of $a$ (if $a=\infty$, this is understood to mean for all $x$ large enough). Then we denote:
	\begin{itemize}[noitemsep]
		\item $f(x)=\bigO(g(x))$ as $x\longrightarrow a$ if $\limsup_{x\longrightarrow a}\frac{f(x)}{g(x)}<\infty$.
		\item $f(x)=\smallO(g(x))$ as $x\longrightarrow a$ if $\lim_{x\longrightarrow a}\frac{f(x)}{g(x)}=0$.
		\item $f(x)=\Omega(g(x))$ as $x\longrightarrow a$ if $\liminf_{x\longrightarrow a}\frac{f(x)}{g(x)}>0$.
		\item $f(x)\sim g(x)$ as $x\longrightarrow a$ if $\lim_{x\longrightarrow a}\frac{f(x)}{g(x)}=1$.
		\item $f(x)\asymp g(x)$ as $x\longrightarrow a$ if $f(x)=\bigO(g(x))$ and $f(x)=\Omega(g(x))$.
	\end{itemize}
	Some of the asymptotic notation above, as customary, can also be extended to a function $f(x)$ possibly taking negative values. When this occurs, it is understood that we consider $|f(x)|$ instead. We will adopt the convention of omitting the dependence on all parameters not involved in the main limiting process (which will be clear from the context), regardless of whether uniformity (when not relevant to the context) holds within the parametric range.\footnote{For example, the sentence: as $x\longrightarrow\infty$, for every $0<\eps<1$, $f(x,\eps)\asymp g(x,\eps)$; means that $\asymp$ refers to $x$, for every $\eps$ fixed; we avoid $\asymp_{\eps}$, and there is no intended reference to whether, as $\eps$ varies in $(0,1)$, the asymptotic constants hold uniformly or not.} In evaluating integrals we denote $f(x)\vert_a^b\defeq f(b)-f(a)$. \textit{Absolutely continuous} distributions will be referred to as \textit{continuous} for brevity. \textit{Discontinuous} distributions are assumed to satisfy the usual regularity assumption (isolated jumps). Convergence in distribution of random variables is denoted as weak convergence: $\overset{w}{\longrightarrow}$. \textit{Almost surely} is abbreviated $\as$ When random variables $H$, $G$ are identically distributed, it is denoted by $H\sim G$. The essential supremum of a collection of random variables is denoted with the symbol $\esup$. $S(h)\defeq \PP(H\ge h)$ is taken as the \textit{survival function} of $H$ (when necessary the notation $S_H(h)$ will be used), while we denote $\overline{F}(h)\defeq S(h+1)=\PP(H> h)$. We follow the standard probabilistic functional notation for powers: for example $\PP^2(\cdot)\defeq\left[\PP(\cdot)\right]^2$ and $\EE^2(\cdot)\defeq\left[\EE(\cdot)\right]^2$.

    \subsection{Probabilistic model}
	As aforementioned in this section (and in this section alone) RH is studied from the point of view of optimal stopping, hence horizons admitting mass at zero will be also considered.\footnote{This has been shown to be preferable\cite{Sam96}.} In \Cref{model} we ruled out mass at zero, since in all later sections, concerned with competitive analysis, this will be the standard assumption. A formal convention in fact restores the equivalence with a scenario admitting mass at zero, from a competitive analysis point of view. Add a value $X_0\defeq 0$, and set $X_\tau\defeq X_0\eqdef M_{[0]}$ for all $\omega\in\{H=0\}$. Informally, when the game does not start, both the gambler and the prophet obtain no return. Adopt the convention $\sfrac{0}{0}=1$. Having set $X_0=0$ ensures that the game always starts, unless $\omega\in\{H=0\}$. Assuming absence of penalty for entering the game even though it does not start yields the equivalence: the ratio being $1$ does not affect the worst case competitive ratio. %\footnote{Note, however, that one may be interested in considering a more realistic version of the model, which involves a penalty when the gambler enters a game in the event $\{H=0\}$. This motivates our study of the optimal algorithm in the more general setting, even though the rest of the paper is not concerned with such a variant.}
	
	In RH the gambler must select the value in ignorance of whether it is the last or not: at each step $i\in\NN_0$ the gambler only learns whether the inspected value $x_i$ is the last or not (that is, whether $\omega\in\{H=i\}$ or $\omega\in\{H>i\}$) \textit{after} the decision of accepting or rejecting it has been made (that is in the $i+1$st step). Probabilistically, it is standard to model this as follows. The random list $X_1,\ldots,X_H$ is constructed from the infinite underlying process $\XX\defeq\{X_i\}$ of \textit{iid} copies of $X$, whose natural filtration (intuitively, the information associated with the history of the process up to the present) is the sequence of $\sigma(X_1,\ldots,X_i)\defeq\DF_i$, for every $i\in\NN_0$, where $\DF_0\defeq\{\emptyset,\Omega\}$ (no information at the start: this is the implicit starting point of all filtration we will introduce). Let $\nu\defeq\EE X<\infty$. The nonanticipative condition and the ignorance aforementioned are modelled by requiring that $\{\tau=i\}\in\sigma(\II_{\{H=0\}},X_1,\ldots,\II_{\{H=i-1\}},X_i)\eqdef \FF_i$ for every $i\in\NN_0$. Intuitively, this second filtration represents the information associated with the history of the game, which combines both the history of the process of the values, and the state of the game in all preceding steps (that is, whether it ended or it is still running). We recover the original $X_1,\ldots,X_H$ by imposing that the reward is zero if the gambler fails to stop by time $H$ on the infinite sequence, that is defining the reward sequence $\{Y_i\}$, where $Y_i\defeq X_i\II_{\{H\ge i\}}$, to which we add $Y_0\defeq X_0$.
	
	For technical reasons, we consider the slightly less common class of all possibly infinite stopping times for the problem $\XX$($\YY$), denoted as $\TT$($\TT^*$). Formally, $\tau\in\TT(\TT^*)$, means that $\{\tau=i\}\in\DF_i(\FF_i)$ for all $i\in\NN$, admitting also $\PP(\tau=\infty)> 0$. This requires that we \textit{compactify} the problem, by prescribing a conventional, yet natural, reward value, in the event that the gambler never stops. We set $Y_\infty\defeq \limsup_{i\longrightarrow\infty} Y_i=0$, since $\PP(H<\infty)=1$, so if the gambler never stops, $\as$ the horizon will have been reached, so there can be no reward. This way infinite stopping rules do not increase the largest possible expected reward and the equivalence with the original formulation is maintained. In conclusion, RH has been reformulated as an infinite optimal stopping problem with reward sequence $\YY\defeq\{Y_i\}_{i\in\NNN_0}$, underlying process $\XX$ and filtration $\FF\defeq \{\FF_i\}_{i\in\NNN_0}$, where $\FF_\infty\defeq \sigma\left(\bigcup_{i\in\NN_0} \FF_i\right)$ (the full information about the history of the process, which is the ending point of all infinite filtration we will introduce), with respect to the class $\TT^*$. We seek to find the value of the problem $\Val(\YY)\defeq\sup_{\tau\in\TT^*}\EE Y_\tau$ and, if it exists, an optimal stopping rule $\bar{\tau}\in\TT^*$. \textit{Optimal} means that $\EE Y_{\bar{\tau}} = \Val(\YY)$. The reference to the horizon $H$ can be made explicit, whenever the necessity arises, via the notation $\YY^\ssup{H}\defeq\{Y_i^\ssup{H}\}_{i\in\NNN_0}$.
 
	\subsection{Optimal algorithm}
	Under suitable hypotheses, an optimal stopping problem with independent horizon $H$, having survival function $S(i)$, is equivalent to a finite (if $H$ is bounded) or infinite (if $H$ is unbounded) optimal stopping problem \textit{discounted} by factors $S(i)$. The discounted problem is $\DZ\defeq\{Z_i\}_{i\in\NNN_0}$, where $Z_0\defeq Y_0$, $Z_i\defeq S(i)X_i$ for every $i\in\NN$, $Z_\infty\defeq\limsup_{i\longrightarrow\infty} Z_i=0=Y_\infty$. That is, we change only the reward function, not the underlying process. Thus, for $\DZ$ we use the filtration $\DF\defeq\{\DF_i\}$, %because $\sigma(S(i)X_i)=\sigma(X_i)$ for all $i$ such that $S(i)>0$ (intuitively, the information regarding the randomness is not affected by a mere deterministic scaling), 
	having set $\DF_0$ and $\DF_\infty$ as usual. The latter is used when the problem is infinite. Thus $\Val(\DZ)\defeq\sup_{\zeta\in\TT}\EE Z_\zeta$, and the equivalence aforementioned means that $\Val(\YY)=\Val(\DZ)$. %\footnote{Note that the optimal stopping rule for the discounted problem is sought in $\TT$, since for all $i\in\NN$, $\{\zeta=i\}\in\DF_i$.} Whether the horizon is bounded or unbounded makes a crucial difference.
	
	In \cite{Sam96} the case where the horizon $H$ is bounded (finite support $\supp(H)$) is solved. Let $m\defeq\sup\: \supp(H)<\infty$, and take the discounted problem $Z_0, Z_1, \ldots, Z_m, 0,\ldots$ denoted $\DZ^\ssup{m}$. Then \textit{backward induction}, starting at $m$, yields the optimal stopping rule $\bar{\zeta}\in\TT^m$ attaining $\Val(\DZ)$, where $\TT^m$ is the class of all all stopping rules in $\TT$ that stop no later than time $m$. The stopping rule $\bar{\tau}\defeq \bar{\zeta}\wedge (H+1)$ then is shown to yield $\Val(\YY)$. The setting considered in \cite{Sam96} does not allow, in general, for the existence of an optimal stopping rule for unbounded horizons, which require an infinite $\DZ$. Only the equivalence with the discounted problem holds in general. The unbounded case is crucial for RH, as an optimal stopping rule is needed explicitly. The main contribution of our \Cref{optalg} is showing that RH with unbounded horizon $H$ admits an optimal stopping rule of the form $\bar{\tau}\defeq \bar{\zeta}\wedge (H+1)$ as well, where we will take $\bar{\zeta}$ to be the \textit{Snell stopping rule} for $\DZ$, since for an infinite problem, backward induction is not possible. Let us recall a few facts, adapted from \cite{ChoRob63,ChoRobSieg71} to our compactified framework. Let $\TT_i$ be the class of all all stopping rules in $\TT$ that do not stop before step $i$.
	\begin{definition}[Snell envelope]
		The \emph{Snell envelope} of $\DZ$ is $\{V_i\}$, where $V_i\defeq\esup_{\zeta\in\TT_i}\EE_{\DF_i}Z_\zeta$.
	\end{definition}
	Note that $\Val(\DZ)=V_0$. The intuitive interpretation of the stochastic process $\VV=\{V_n\}$ is the best expected return available at step $i$ among rules that have reached step $i$.
	\begin{remark}\label{snell}
		Since $\EE \sup_{i\in\NN} Z_i<\infty$ for every $i\in\NN_0$ the Snell envelope satisfies 
		\begin{equation}\label{dpe}
			V_i=Z_i\vee\EE_{\DF_i}V_{i+1}.
		\end{equation}
	\end{remark}
	Informally, \Cref{dpe} represents an extension of the dynamic programming principle for the infinite process $\{V_i\}$: at time $i$ it is optimal to stop whenever the currently inspected value $Z_i$ is greater than the expected highest expected future return as per the previous informal characterisation of $V_i$. We use \Cref{dpe} to compute $V_0$. Note that for finite problems, it coincides with backward induction.
	\begin{definition}[Snell stopping rule]\label{snellstoprule}
		The \emph{Snell rule} is $\bar{\zeta}\defeq\inf\{i\in\NN_0:\,Z_i\ge \EE_{\DF_i}V_{i+1}\}$.
	\end{definition}
	\begin{remark}\label{snellstop}
		If $\EE \sup_{i\in\NN} Z_i<\infty$ and $Z_\infty\defeq \limsup_{i\longrightarrow\infty} Z_i\in\RR$, then the \emph{Snell rule} $\bar{\zeta}$ is optimal.
	\end{remark}
	We are now ready to prove the main result. The details of the proof are provided in \Cref{suppopt}.
	\begin{theorem}\label{optalg}
		Let $H$ be a finite horizon with $\mu\defeq\EE H$ and possibly mass at zero, and let $m\defeq\sup \supp(H)$, $\YY$ and $\DZ$ as previously defined. It holds that:
		\begin{itemize}[noitemsep]
			\item$\Val(\YY) = \Val(\DZ)$; 
			\item there exists $\bar{\zeta}\in\TT$ such that $\EE Z_{\bar{\zeta}} = \Val (\DZ)$;
			\item $\bar{\tau}\defeq \bar{\zeta}\wedge (H+1)$ is such that $\EE Y_{\bar{\tau}} = \Val(\YY)$.
		\end{itemize}
		Furthermore, $\bar{\zeta}$ is characterised as:
		\begin{itemize}[noitemsep]
			\item The \emph{backward induction stopping rule} for $\DZ^\ssup{m}$, if $m<\infty$;
			\item The \emph{Snell rule} for $\DZ$, if $m=\infty$.
		\end{itemize}
	\end{theorem}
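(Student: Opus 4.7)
The plan is to establish the equivalence $\Val(\YY) = \Val(\DZ)$ by constructing, via a trace $\sigma$-algebra argument, a correspondence between stopping rules in $\TT^*$ and in $\TT$ that preserves expected reward, and then invoke either backward induction (bounded case) or Snell envelope theory (unbounded case) on the discounted problem $\DZ$ to obtain existence of $\bar\zeta$.

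First, I would prove the easy direction $\Val(\YY) \ge \Val(\DZ)$ by showing that any $\zeta \in \TT$ induces $\tau := \zeta \wedge (H+1) \in \TT^*$ with $\EE Y_\tau = \EE Z_\zeta$. Since $\{\zeta=i\}\in\DF_i \subseteq \FF_i$ and $\{H+1 = i\} = \{H=i-1\} \in \FF_i$, we have $\tau \in \TT^*$; using the independence of $H$ from $\XX$,
$$\EE Y_\tau = \sum_{i \ge 1} \EE\bigl[X_i \II_{\{\zeta = i, H \ge i\}}\bigr] = \sum_{i \ge 1} S(i)\, \EE[X_i \II_{\{\zeta=i\}}] = \EE Z_\zeta.$$

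For the reverse inequality I would exploit the trace-$\sigma$-algebra observation: on $\{H \ge i\}$ all indicators $\II_{\{H=j\}}$ with $j<i$ vanish, so $\FF_i$ and $\DF_i$ coincide when restricted to this event. Consequently, for any $\tau \in \TT^*$ and each $i$ there exists $B_i \in \DF_i$ with $\{\tau=i\}\cap\{H\ge i\} = B_i \cap \{H\ge i\}$ a.s. I would then inductively replace $B_i$ by $B_i \setminus \bigcup_{j<i} B_j$: these modified sets still lie in $\DF_i$ and the identity on $\{H\ge i\}$ is preserved, because $\{\tau=j\}$ and $\{\tau=i\}$ are disjoint so any overlap between the original $B_j$'s on $\{H \ge i\}$ is empty. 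Letting $\{\zeta=i\}$ be the cleaned-up $B_i$ (and absorbing the complement into $\{\zeta=\infty\}$) yields $\zeta \in \TT$. The same independence calculation as above then gives $\EE Z_\zeta = \EE Y_\tau$, whence $\Val(\DZ) \ge \Val(\YY)$ and the two values coincide.

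Existence of an optimal $\bar\zeta$ splits along the two cases. If $m<\infty$, the finite discounted problem $\DZ^{\ssup{m}}$ admits a backward-induction optimal rule by the results of \cite{Sam96}. If $m=\infty$, I would apply \Cref{snellstop}: since $Z_i = S(i)X_i \ge 0$,
$$\EE\sup_{i\in\NN} Z_i \le \sum_{i\ge 1} S(i)\,\EE X_i = \mu\nu < \infty,$$
which in particular forces $Z_i \to 0$ a.s., so $Z_\infty = 0 \in \RR$; hence the Snell rule $\bar\zeta$ is optimal. Finally, setting $\bar\tau := \bar\zeta \wedge (H+1)$ and reusing the identity $\EE Y_{\bar\tau} = \EE Z_{\bar\zeta}$ gives $\EE Y_{\bar\tau} = \Val(\DZ) = \Val(\YY)$. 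The main obstacle I anticipate is the careful measure-theoretic justification of the trace-$\sigma$-algebra lifting $\tau \mapsto \zeta$, specifically ensuring measurability of the $B_i$, handling null sets, and verifying that the inductive cleanup produces a genuine element of $\TT$ rather than a rule that is only $\DF$-adapted after conditioning on $\{H \ge i\}$; everything else reduces to an application of the Snell envelope framework to a nonnegative reward process whose total integral is controlled by $\mu\nu$.
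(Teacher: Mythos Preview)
Your proposal is correct and follows essentially the same route as the paper: the trace-$\sigma$-algebra identification $\FF_i\cap\{H\ge i\}=\DF_i\cap\{H\ge i\}$, the inductive disjointification of the lifted sets $B_i$, the independence computation $\EE Y_\tau=\EE Z_\zeta$, and the Snell-envelope existence argument via $\EE\sup_i Z_i\le\mu\nu$ are exactly the paper's ingredients. One minor remark: the clause ``which in particular forces $Z_i\to 0$ a.s.'' is not a consequence of $\EE\sup_i Z_i<\infty$ alone; you should instead note that $\sum_i\PP(Z_i>\epsilon)\le\sum_iS(i)\nu/\epsilon=\mu\nu/\epsilon<\infty$ and apply Borel--Cantelli (or simply cite that $Z_\infty$ is \emph{defined} to be $0$ in the problem setup).
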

        	\begin{proof}[Idea of the proof]
		The case $m<\infty$ corresponds to \cite[Theorem~2.1]{Sam96}. The most crucial step in the proof of the case $m=\infty$ is showing that for every stopping rule $\sigma\in \TT^*$, we can construct a stopping rule $\zeta\in\TT$, such that $\EE Y_\sigma=\EE Y_\tau$ (that is $\sigma$ and $\tau$ are \textit{equivalent}), where $\tau\defeq\zeta\wedge(H+1)\in\TT^*$. This requires an inductive measure-theoretic construction. %\footnote{In what follows it will be helpful to have in mind that denoting $\HH_i\defeq\sigma(\{H=0\},\ldots,\{H=i-1\})$, yields $\FF_i=\sigma(\DF_i,\HH_{i})$.}
		 For every $i\in\NN$, let $E_i\defeq\{\sigma = i\}\cap\{H\ge i\}$. We have that $E_i\in\FF_i\cap\{H\ge i\}=\DF_i\cap\{H\ge i\}$, where the intersections denote, as standard, the corresponding \textit{trace $\sigma$-algebras}. Next we construct inductively mutually disjoint events $A_i\in\DF_i$, such that $E_i=A_i\cap \{H\ge i\}$. These events $\{A_i\}$ are then relied upon to define $\zeta$ as, informally speaking, a stopping rule that stops when $\sigma$ successfully stops by the time the horizon has realised, and does not stop otherwise. Formally, denote $C\defeq\left(\bigcup_{i\in\NN}A_i\right)^c\in\DF_\infty$, and define for every $i\in\NN$,
		\begin{equation}\label{equivstop}
			\zeta(\omega)\defeq i,\:\omega\in A_i, 
		\end{equation}
		while setting $\zeta$ to $\infty$ otherwise. %\footnote{That is $\zeta$ never stops for all $\omega\in C$.} 
		Following the standard convention
		$\EE Y_{\sigma}\defeq \EE\sum_{i\in\NN}Y_i\II_{\{\sigma=i\}}+\EE Y_\infty$,
		we verify by direct computation %, exploiting the events $\{A_i\}$, 
		that $\EE Y_\sigma=\EE Y_\tau$ is ensured by $\EE Y_\infty=0$.
		
		In order to show that $\Val(\YY)=\Val(\DZ)$, %not only we exploit the events $\{A_i\}$, but the hypotheses on RH are also crucial, namely that $H$ and $X$ are independent, have finite expectation, and $Y_\infty=Z_\infty=0$. This allows the use of \textit{Fubini's Theorem} 
		note that since $Y_\infty=Z_\infty=0$, we can show firstly, that for every $\sigma\in\TT^*$, $\EE Y_{\sigma}=\EE Z_{\zeta}$, with $\zeta$ as defined in \Cref{equivstop}, and secondly, that for every $\zeta\in\TT$, 
    	\begin{equation}\label{optequiv}
        	\EE Y_\zeta=\EE Y_\tau,
    	\end{equation} 
        with $\tau\defeq \zeta\wedge(H+1)$. These two facts imply, by contradiction, that $\sup_{\sigma\in\TT^*}\EE Y_\sigma = \sup_{\zeta\in\TT}\EE Z_\zeta$.
		
		Given an optimal stopping rule $\bar{\zeta}\in\TT$, we have that $\bar{\tau}\defeq\bar{\zeta}\wedge(H+1)\in\TT^*$ provides an optimal rule by an argument by contradiction based on \Cref{optequiv}. Furthermore, our assumptions ensure that $\EE \sup_{i\in\NN} Z_i<\infty$ and $\limsup_{i\longrightarrow\infty} Z_i=0$. Hence the Snell envelope $\VV$ of $\DZ$ satisfies \Cref{dpe} by \Cref{snell}, and yields the optimal stopping rule in \Cref{snellstoprule} by \Cref{snellstop}.
	\end{proof}
	
	The following result gives the intuitive interpretation of $\EE_{\DF_i}V_{i+1}$ as the best expected return available at step $i$ among rules that will not stop \textit{by or at} step $i$. We believe that it is known within the optimal stopping community, but we could not find a reference nor a proof. The proof is provided in \Cref{suppopt} for ease of the reader. As usual, assume that the process $\DZ$, adapted to a filtration $\DF$, satisfies $\EE\sup_{i\in\NN_0}Z_i<\infty$ and $\limsup_{i\longrightarrow\infty}Z_i=Z_\infty=0=Z_0$.
	\begin{lemma}\label{snellfuture}
		Let $\VV$ be the Snell envelope of $\DZ$. Then
		$\EE_{\DF_i}V_{i+1}=\esup_{\zeta\in\TT_{i+1}}\EE_{\DF_i}Z_\zeta \as$
	\end{lemma}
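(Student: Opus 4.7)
The plan is to prove the two inequalities separately, using the defining formula $V_{i+1}=\esup_{\zeta\in\TT_{i+1}}\EE_{\DF_{i+1}}Z_\zeta$ and the tower property of conditional expectation. The nontrivial ingredient is that the family $\{\EE_{\DF_{i+1}}Z_\zeta:\zeta\in\TT_{i+1}\}$ is upward directed, which unlocks a monotone approximation of its essential supremum.

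The direction $\esup_{\zeta\in\TT_{i+1}}\EE_{\DF_i}Z_\zeta\le\EE_{\DF_i}V_{i+1}$ is immediate: for each $\zeta\in\TT_{i+1}$, $\EE_{\DF_{i+1}}Z_\zeta\le V_{i+1}$ a.s. by definition, so the tower property gives $\EE_{\DF_i}Z_\zeta=\EE_{\DF_i}\EE_{\DF_{i+1}}Z_\zeta\le\EE_{\DF_i}V_{i+1}$ a.s., and taking the essential supremum over $\zeta$ closes this side. For the reverse direction, first I would verify upward-directedness: given $\zeta_1,\zeta_2\in\TT_{i+1}$, set $B\defeq\{\EE_{\DF_{i+1}}Z_{\zeta_1}\ge\EE_{\DF_{i+1}}Z_{\zeta_2}\}\in\DF_{i+1}$ and $\zeta^\star\defeq\zeta_1\II_B+\zeta_2\II_{B^c}$; since for every $k\ge i+1$ the event $\{\zeta^\star=k\}=(\{\zeta_1=k\}\cap B)\cup(\{\zeta_2=k\}\cap B^c)$ lies in $\DF_k$, we have $\zeta^\star\in\TT_{i+1}$, and the indicator pullout property yields $\EE_{\DF_{i+1}}Z_{\zeta^\star}=\EE_{\DF_{i+1}}Z_{\zeta_1}\vee\EE_{\DF_{i+1}}Z_{\zeta_2}$ a.s. By the standard essential-supremum lemma for upward-directed families (as in Neveu or Chow--Robbins--Siegmund), there exists a sequence $\{\zeta_n\}\subset\TT_{i+1}$ with $\EE_{\DF_{i+1}}Z_{\zeta_n}\uparrow V_{i+1}$ a.s. Conditional monotone convergence, justified by the domination $Z_{\zeta_n}\le\sup_i Z_i\in\mathcal{L}^1(\Omega)$ under the standing hypothesis, then gives
\[\EE_{\DF_i}V_{i+1}=\lim_n\EE_{\DF_i}\EE_{\DF_{i+1}}Z_{\zeta_n}=\lim_n\EE_{\DF_i}Z_{\zeta_n}\le\esup_{\zeta\in\TT_{i+1}}\EE_{\DF_i}Z_\zeta.\]

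The main obstacle I expect is the upward-directedness step: one must check that the pasted rule $\zeta^\star$ is genuinely a stopping rule in $\TT_{i+1}$, which hinges on $B\in\DF_{i+1}\subseteq\DF_k$ for every $k\ge i+1$ and on both $\zeta_1,\zeta_2$ taking values in $\{i+1,i+2,\ldots,\infty\}$. A secondary subtlety is that each $\zeta_n$ may equal $\infty$ with positive probability, so the compactification convention $Z_\infty=0$ together with the global integrability $\EE\sup_i Z_i<\infty$ is what simultaneously justifies the monotone-convergence exchange and gives meaning to each $\EE_{\DF_i}Z_{\zeta_n}$ in the first place.
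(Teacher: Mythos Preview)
Your proof is correct and follows the classical Neveu-style route: you show the family $\{\EE_{\DF_{i+1}}Z_\zeta:\zeta\in\TT_{i+1}\}$ is upward directed by pasting two stopping rules on the $\DF_{i+1}$-measurable set $B$, then invoke the monotone approximation of an essential supremum by a countable increasing sequence, and finish with conditional monotone convergence. The paper takes a slightly different path for the nontrivial inequality: instead of the direct pasting construction, it invokes the machinery of \emph{regular} stopping rules from Chow--Robbins--Siegmund. Starting from any approximating sequence $\{\zeta_k\}$ for $V_{i+1}$, it replaces each $\zeta_k$ by a rule $\zeta_k'$ regular from $i+1$ on (which dominates it in conditional expectation), then takes $\zeta_k''\defeq\max_{j\le k}\zeta_j'$ and uses the fact that the maximum of regular rules is regular and dominates each of them in conditional expectation. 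Your approach is more elementary and self-contained, needing only the pasting lemma and monotone convergence; the paper's approach leans on an external toolbox but may feel more natural to readers already steeped in the Chow--Robbins--Siegmund framework. Both arrive at the same monotone limit, and in this nonnegative setting ($Z_i\ge 0$) your appeal to conditional monotone convergence is in fact cleaner than you suggest: no domination is needed for the limit exchange, only for the integrability of $V_{i+1}$ itself.
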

	
	\section{Single-threshold $2$-approximation on the $\PGF$ class}\label{Gclass}
	We build familiarity with the model by proving \Cref{Ghorizon}, which derives a $2$-approximation on the $\PGF$ class, extending the results of \cite[Section~3]{AliBanGolMunWan20}. Recall that $X\sim\DD$ is the distribution of the values and $H\sim\HH$ of the horizon (with no mass at zero). The expected return of a single-threshold algorithm is readily computed. The details are included in \Cref{suppGclassalgp} for ease of the reader.
	\begin{lemma}\label{algp}
		Let $\pi>0$ and consider the stopping rule $\tau_\pi\defeq\inf\{i\in\NN:\,Y_i\ge \pi\}\in\TT^*$, where $Y_i\defeq X_i\II_{\{H\ge i\}}$. Then $\EE X_{\tau_\pi}= c_\pi\EE(X|X\ge \pi)$, where $c_\pi=c_\pi(\HH,\DD)\defeq 1-\EE\left[\PP^H(X<\pi)\right]$.
	\end{lemma}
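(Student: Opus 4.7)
The plan is to condition on the horizon $H$, exploit its independence from the i.i.d.\ sequence $\{X_i\}$, and collapse the resulting expression via a geometric-series identity. Write $q\defeq\PP(X<\pi)$, so that $\PP(X\ge\pi)=1-q$ and $\EE q^H=\EE\PP^H(X<\pi)$; the target identity is then $\EE X_{\tau_\pi}=(1-\EE q^H)\,\EE(X\mid X\ge\pi)$.

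First I would unpack the event $\{\tau_\pi=i\}$. Since $\pi>0$ and $Y_j=X_j\II_{\{H\ge j\}}$, the inequality $Y_j\ge\pi$ forces both $X_j\ge\pi$ and $H\ge j$, whereas $Y_j<\pi$ holds either when $X_j<\pi$ or when $H<j$. On $\{\tau_\pi=i\}$ we have $Y_i\ge\pi$, hence $H\ge i>j$ for every $j<i$; this rules out the $H<j$ alternative for those earlier indices and forces $X_j<\pi$. Thus
\[
\{\tau_\pi=i\}=\{X_1<\pi,\,\ldots,\,X_{i-1}<\pi,\,X_i\ge\pi\}\cap\{H\ge i\}.
\]

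Next I would compute the contribution of each step. By independence of $H$ from $\XX$ and of the $X_j$'s among themselves,
\[
\EE\bigl[X_i\II_{\{\tau_\pi=i\}}\bigr]=q^{i-1}\,\EE\bigl[X_i\II_{\{X_i\ge\pi\}}\bigr]\,\PP(H\ge i)=q^{i-1}(1-q)\,\EE(X\mid X\ge\pi)\,\PP(H\ge i).
\]
Summing over $i\in\NN$, and noting that $X_{\tau_\pi}=0$ on $\{\tau_\pi=\infty\}$ (the gambler fails to stop in time) by the compactification convention $Y_\infty=0$,
\[
\EE X_{\tau_\pi}=\EE(X\mid X\ge\pi)\,(1-q)\sum_{i=1}^{\infty}q^{i-1}\,\PP(H\ge i).
\]
Swapping sum and expectation via Tonelli gives $(1-q)\sum_{i=1}^{\infty}q^{i-1}\,\PP(H\ge i)=\EE\bigl[(1-q)\sum_{i=1}^{H}q^{i-1}\bigr]=\EE(1-q^H)=c_\pi$, closing the identity.

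There is no real obstacle; the only subtle point is reconciling the informal notation $X_{\tau_\pi}$ with the compactified reward sequence $\YY$, so that realisations in which no $X_i$ clears the threshold before the horizon has elapsed contribute nothing to $\EE X_{\tau_\pi}$ (this follows from $Y_\infty=0$ and the fact that $\tau_\pi$ is an infimum, possibly equal to $\infty$). Once this convention is in place, the computation reduces to a direct application of independence and a geometric-series identity, so I expect the whole argument to fit in a few lines.
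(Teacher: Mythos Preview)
Your proof is correct and relies on the same ingredients as the paper's---independence of $H$ from $\XX$, i.i.d.\ structure of the $X_i$, and a geometric-series identity---but the organisation differs. The paper partitions $\Omega$ by the exact value of the horizon, writing $\{H=h\}=E_\pi^h\cup E_0^h$ (stop by time $h$ versus fail), computes $\EE\bigl(Y_{\tau_\pi}\II_{E_\pi^h}\bigr)$ as an inner sum over $i\in[h]$, collapses that inner sum to $(1-q^h)\,\EE(X\mid X\ge\pi)$, and then takes $\EE$ over $h$. You instead partition directly by $\{\tau_\pi=i\}$, identify this event with $\{X_1<\pi,\ldots,X_{i-1}<\pi,\,X_i\ge\pi\}\cap\{H\ge i\}$, and handle the horizon through the survival weights $\PP(H\ge i)$, which you then resum via Tonelli into $\EE(1-q^H)$. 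Your route is a little shorter because it bypasses the double-index decomposition; the paper's route makes the conditional-on-$H$ structure more explicit, which is the form reused later (e.g.\ in the randomised analogue). Either way the substance is the same.
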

	
	The quantity $c_\pi$ yields the competitive ratio. With this in mind, we need to upper-bound $\EE M_H$. This is done by \textit{ex-ante} relaxation of the prophet, combined with a straightforward variational argument.\footnote{\Cref{algp} and a slightly incomplete, informal version of \Cref{max} leveraging mathematical programming already appeared in \cite[Theorem~3.2]{AliBanGolMunWan20}. The contribution is valuable and insightful. The alternative argument we provide however is not only simpler, but formalises the claim and fills some gaps, which will demand our attention later.} Further details and the complete proof of \Cref{max} are provided in \Cref{suppGclassmax}.
	\begin{lemma}\label{max}
		Let $X$ be continuous and $p>0$ such that $\PP(X\geq p)=\sfrac{1}{\mu}$. Then $\EE M_H\le\EE(X|X\ge p).$
	\end{lemma}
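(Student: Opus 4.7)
The plan is to apply the classical prophet-style pointwise inequality at the given threshold $p$, and then exploit the independence of $H$ from $\XX$ via Wald's identity to reduce everything to the single-variable quantity $\EE(X-p)^+$.

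First, I would write down the path-by-path bound
\begin{equation*}
M_H=\max_{i\in[H]}X_i\le p+\sum_{i=1}^{H}(X_i-p)^+,
\end{equation*}
which holds because the maximiser $i^\ast$ satisfies $X_{i^\ast}\le p+(X_{i^\ast}-p)^+$, while the remaining summands are nonnegative. This is the ex-ante relaxation of the prophet: instead of being forced to pick a single index, the relaxed prophet is allowed to collect all the surpluses above $p$.

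Second, taking expectations and using that $H\in\mathcal{L}^1(\Omega)$ is independent of the iid sequence $\{X_i\}$, Wald's identity (or a direct Fubini argument, $\sum_i\EE[(X_i-p)^+\II_{\{H\ge i\}}]=\EE(X-p)^+\sum_i S(i)=\mu\,\EE(X-p)^+$) yields
\begin{equation*}
\EE M_H\le p+\mu\,\EE(X-p)^+.
\end{equation*}
Third, I would use continuity of $X$ together with the defining identity $\PP(X\ge p)=1/\mu$ to rewrite
\begin{equation*}
\EE(X-p)^+=\EE\bigl[(X-p)\II_{\{X\ge p\}}\bigr]=\frac{1}{\mu}\bigl(\EE(X\mid X\ge p)-p\bigr),
\end{equation*}
so that substituting back gives $\EE M_H\le p+\EE(X\mid X\ge p)-p=\EE(X\mid X\ge p)$, which is the claim.

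There is no substantive obstacle here; the only delicate point is that one needs the exact equality $\PP(X\ge p)=1/\mu$ in order for the factor $\mu$ to cancel cleanly, which is precisely why $X$ is assumed continuous (otherwise one could only guarantee $\PP(X\ge p)\le 1/\mu\le\PP(X>p)$ and would need to address the atom at $p$, for instance by the stochastic tie-breaking mentioned in the paper's footnote on \Cref{Ghorizon}). The ``variational'' flavour of the argument is encapsulated in the pointwise inequality $M_H\le p+\sum_i(X_i-p)^+$, which is the sharp linearisation of the maximum at $p$. Once \Cref{max} is in hand, combining it with \Cref{algp} at $\pi=p$ reduces the proof of \Cref{Ghorizon} to the purely distributional statement $c_p\ge 1/2$ on the $\PGF$ class, where the stochastic-order hypothesis will enter.
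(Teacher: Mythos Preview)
Your proof is correct and takes a genuinely different route from the paper's. The paper sets up a variational problem: it observes that the density $f$ of $M_H$ satisfies the pointwise constraint $f(x)\le\mu v(x)$ (obtained from the union bound $\PP(x<M_h\le x+\eps)\le h\,\PP(x<X\le x+\eps)$ after conditioning on $H$), and then argues that the functional $\int_0^\infty x f(x)\,dx$ is maximised, subject to $\int f=1$ and $0\le f\le\mu v$, by the truncated density $\bar f=\mu v\II_{[p,\infty)}$. By contrast, you bypass the variational problem entirely via the pathwise inequality $M_H\le p+\sum_{i\le H}(X_i-p)^+$ followed by Wald's identity. Your argument is shorter, needs no absolute continuity of $M_H$ (only of $X$, and that solely to hit $\PP(X\ge p)=1/\mu$ exactly), and is the standard ex-ante relaxation device in the prophet-inequality literature. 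The paper's approach, on the other hand, makes the role of the constraint $f\le\mu v$ transparent and perhaps lends itself more naturally to the limiting argument they later run for discontinuous $X$ (approximating $V$ by continuous cdf's and passing to the limit via uniform integrability of $\{M_H^{(l)}\}$), though your bound would survive that limit equally well.
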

        \begin{proof}[Idea of the proof]
		Let $f(x)$ be the probability density function (pdf) of $M_H$ and $v(x)$ the pdf of $X$. In order to upper-bound $\EE M_H=\int_0^\infty x f(x)dx\eqdef I(f)$, we solve the variational problem of maximising the functional $I(f)$ with constraints:
		$\int_0^\infty f(x)dx=1$; $0\le f(x)\le\mu v(x)$ for all $x\in[0,\infty)$; $f$ is nonnegative Lebesgue integrable on $[0,\infty)$, %$f\in\text{L}_+^1([0,\infty))$
		the constraint follows from a union bound. The maximal solution is $\bar{f}(x)=\mu v(x)$ for all $x\ge p$, and zero otherwise, with $p$ as in the claim. Then $I(\bar{f})=\EE(X|X\ge p)$.
	\end{proof}
	
	To prove \Cref{Ghorizon} we extend \Cref{max} to discontinuous $X$ via \textit{stochastic tie-breaking}. The technique is standard with deterministic horizons, but with random horizons it is not yet clear that the assumptions imposed on RH are robust enough. We will show that an underlying property of the prophet, Uniform Integrability (UI), is the key. First we derive the equivalent of \Cref{algp} for a randomised algorithm (randomisation does not increase the optimal value of optimal stopping problems). Attach to every $X_i$ a biased coin flip, that is \textit{iid} Bernoulli random variables $B_i\sim B\sim\Ber(q)$ independent of $X$ and $H$, $q$ denoting the probability of $1$ (heads).\footnote{This extended version of RH, can be treated through the equivalent process $Y_i\defeq X_iB_i\II_{\{H\ge i\}}$, whose history will generate the extended filtration $\BF$ defined via $\BF_i\defeq\sigma(\II_{\{H=0\}},X_1,B_1,\ldots,\II_{\{H=i-1\}},X_i,B_i)$. Formally, randomised stopping rules are those adapted to $\BF$.} We denote the class of adapted, and thus randomised, stopping rules as $\BT$.
	\begin{lemma}\label{algran}
		Let $\pi>0$ and consider the stopping rule $\tau_{\pi,q}\defeq\inf\{i\in\NN:\,Y_i\ge \pi\}\in\BT$.%, where $Y_i\defeq X_iB_i\II_{\{H\ge i\}}$
  		Then $\EE X_{\tau_{\pi,q}}= c_{\pi,q}\EE(X|X\ge \pi)$, where $c_{\pi,q}=c_{\pi,q}(\HH,\DD)\defeq 1-\EE\left\lbrace[1-q\PP(X\ge \pi)]^H\right\rbrace$.
	\end{lemma}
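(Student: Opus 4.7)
The plan is to adapt the proof of \Cref{algp}, the only addition being the Bernoulli coin flips $B_i$ and their independence from everything else. First I would condition on $\{H = h\}$. Since by construction $\XX = \{X_i\}$, $\mathbf{B} = \{B_i\}$ and $H$ are mutually independent, conditionally on $\{H = h\}$ the sequence of pairs $(X_i, B_i)_{i \in \NN}$ remains iid with $X_i \sim \DD$ independent of $B_i \sim \Ber(q)$. Under this conditioning, the stopping event rewrites, for $i \le h$, as
\[
\{Y_i \ge \pi\} = \{X_i \ge \pi\} \cap \{B_i = 1\},
\]
and is empty for $i > h$ since $\pi > 0$. Setting $p \defeq q\PP(X \ge \pi)$, the events $\{X_i \ge \pi, B_i = 1\}$ are iid $\Ber(p)$, so conditional on $H = h$ the stopping rule $\tau_{\pi,q}$ coincides with the first success time in a sequence of $h$ iid $\Ber(p)$ trials, with the value $\infty$ if none occurs.

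Next I would compute the conditional expectation by the usual partition on $\{\tau_{\pi,q} = i\}$. For each $i \le h$, conditionally on $\{\tau_{\pi,q} = i, H = h\}$ the accepted value $X_i$ is independent of the history $(X_1, B_1, \ldots, X_{i-1}, B_{i-1}, B_i)$ that defined the event, so $X_i$ is distributed as $X | X \ge \pi$. Summing and using the first-success formula,
\[
\EE[X_{\tau_{\pi,q}} \II_{\{\tau_{\pi,q} < \infty\}} \mid H = h] \;=\; \sum_{i=1}^h \PP(\tau_{\pi,q} = i \mid H = h)\,\EE(X \mid X \ge \pi) \;=\; \bigl(1 - (1-p)^h\bigr)\,\EE(X \mid X \ge \pi),
\]
where the no-stop contribution vanishes by the convention $X_\infty \defeq 0$ inherited from the compactification in \Cref{opt}. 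Integrating against the law of $H$ (independent of $\XX$ and $\mathbf{B}$) gives
\[
\EE X_{\tau_{\pi,q}} \;=\; \bigl(1 - \EE[(1-p)^H]\bigr)\,\EE(X \mid X \ge \pi) \;=\; c_{\pi,q}\,\EE(X \mid X \ge \pi),
\]
as claimed.

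No substantive obstacle is expected: the argument is essentially that of \Cref{algp} with a Bernoulli thinning of the single-step success probability from $\PP(X \ge \pi)$ to $q\PP(X \ge \pi)$. The only care needed is checking that $\tau_{\pi,q}$ belongs to the enlarged class $\BT$, which is immediate since by construction $\{Y_i \ge \pi\} \in \BF_i$, and that the independence used when identifying the conditional law of $X_{\tau_{\pi,q}}$ truly exploits the product structure granted by mutual independence of $(\XX, \mathbf{B}, H)$.
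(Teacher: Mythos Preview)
Your argument is correct and arrives at the same formula as the paper, but by a more streamlined route. The paper's proof, after conditioning on $\{H=h\}$, explicitly decomposes the event $\{i=\inf\{j\in[h]:X_j\ge\pi,\,B_j=1\}\}$ into the $2^{i-1}$ mutually exclusive ways in which each of the first $i-1$ steps can fail (either $X_j<\pi$, or $X_j\ge\pi$ but $B_j=0$), then collapses the resulting sum via the binomial identity $\sum_{l=0}^{i-1}\binom{i-1}{l}\PP^{i-1-l}(X<\pi)[(1-q)\PP(X\ge\pi)]^l=[1-q\PP(X\ge\pi)]^{i-1}$. You sidestep this enumeration entirely by observing up front that, by independence of $X_j$ and $B_j$, the per-step success probability is $p=q\PP(X\ge\pi)$, so the conditional stopping time is simply the first success in $h$ iid $\Ber(p)$ trials. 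Your identification of the conditional law of $X_i$ given $\{\tau_{\pi,q}=i,\,H=h\}$ as $X\mid X\ge\pi$ is also cleaner: the paper obtains the same factor $\EE(X\II_{\{X\ge\pi\}})$ by pulling it out of the combinatorial sum, whereas you isolate it via the product structure of the conditioning event. Both approaches are sound; yours is more elementary and makes the ``Bernoulli thinning'' interpretation transparent, while the paper's explicit enumeration makes the combinatorics completely mechanical at the cost of some bookkeeping.
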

	Rewriting $c_\pi=1-\EE\left\lbrace [1-\PP(X\ge\pi)]^H\right\rbrace$ in \Cref{algp} helps seeing the statement of \Cref{algran} as its natural generalisation. The proof of \Cref{algran} is combinatorially tedious, but straightforward, and is provided in \Cref{suppGclassalgran}. We are now ready to show the main result of this section, \Cref{Ghorizon}. The details of the proof are provided in \Cref{suppGclassGhorizon}.
    \begin{proof}[Idea of the proof of \Cref{Ghorizon}]
	If $X$ is continuous, by \Cref{algp,max},
	\[\frac{\EE X_{\tau_p}}{\EE M_H}\ge c_p\defeq1-\EE\left[\PP^H(X<p)\right].\]
	For any $H\in\PGF$, $G\prec_{pgf}H$ with $G\sim\Geom(\sfrac{1}{\mu})$, $\mu=\EE H$. Let $t\defeq \PP(X<p)=1-\sfrac{1}{\mu}$, then \[\EE t^H\le \EE t^G=\frac{1}{\mu}\sum_{h\in\NN}t^h\left(1-\sfrac{1}{\mu}\right)^{h-1}=\frac{1-\sfrac{1}{\mu}}{2-\sfrac{1}{\mu}}.\] This implies that 
	$c_p\ge (2-\sfrac{1}{\mu})^{-1}$ and the result follows. 
	
	Consider now $X$ discontinuous. Let $V(x)$ be the cumulative distribution function (cdf) of $X$. Assume that for some $p$, $\lim_{\eps\longrightarrow0^+}V(p-\eps)<1-\sfrac{1}{\mu}<V(p)$, that is $1-\sfrac{1}{\mu}$ is in correspondence of a jump of the cdf at $p$.\footnote{This is the most difficult case to handle, when adapting the previous part of the argument. To present the key ideas of how \Cref{max} can be ensured for a discontinuous value distribution, we focus solely on this scenario. The extension to easier scenarios is trivial.} %By monotonicity, 
  	$V(x)$ has at most countably many jump discontinuities at values $D\defeq\{j_k\}$ (increasingly enumerated), one of which is $p$. Let $\{\eps(l)\}$ be a small enough positive monotonically vanishing sequence. %\footnote{The functional notation for the sequence is justified by some notational necessities in the inner workings of the argument. It is equivalent to $\{\eps_l\}$.} 
  	Linearly interpolate $V(x)$ on disjointed intervals of length $\eps(l)$ (thus eliminating all jump discontinuities) and keep it unaltered otherwise. This yields an approximating sequence of continuous and piecewise differentiable cdf's $V_l(x)$ of continuous random variables $X^\ssup{l}$. Upon showing that the family $\{X^\ssup{l}\}$ is Uniformly Integrable (UI), define $M_H^\ssup{l}\defeq \max\{X_1^\ssup{l},\ldots,X_H^\ssup{l}\}$. Then $\{M_H^\ssup{l}\}$ is UI. Furthermore, since $X^\ssup{l}\overset{w}{\longrightarrow}X$, $M_H^\ssup{l}\overset{w}{\longrightarrow}M_H$. By \Cref{max}, for every $l$ there is $p_l$ such that $\PP(X^\ssup{l}\ge p_l)=\sfrac{1}{\mu}$ and $\EE M_H^\ssup{l}\le\EE(X^\ssup{l}|X^\ssup{l}\ge p_l)$. By construction, $p_l\longrightarrow p$ as $l\longrightarrow\infty$, so $\EE(X^\ssup{l}|X^\ssup{l}\ge p_l)\longrightarrow \EE(X|X\ge p)$.\footnote{A reference for convergence in expectation under UI and weak convergence hypotheses is \cite{Bill87}.} 
  	These facts altogether ensure that $\EE M_H\le \EE(X|X\ge p)$. 
  	
  	To conclude, for any $H\in\PGF$, since $G\prec_{pgf}H$, by \Cref{algran} the single-threshold randomised algorithm $\tau_{p,\bar{q}}$, with $\bar{q}\defeq[\mu\PP(X\ge p)]^{-1}$, is such that 
	\[\frac{\EE X_{\tau_{p,\bar{q}}}}{\EE M_H}\ge c_{p,\bar{q}}\defeq1-\EE\left\lbrace\left[1-\bar{q}\PP(X\ge p)\right]^H\right\rbrace=1-\EE\left[\left(1-\frac{1}{\mu}\right)^H\right]\ge\frac{1}{2-\frac{1}{\mu}}.\]
	\end{proof}
	
	\begin{remark}\label{algorithm}
		For every horizon $H\in\PGF$ the following hold.
		\begin{itemize}[noitemsep]
			\item If $X$ is continuous, the $2-\sfrac{1}{\mu}$-approximation is a single-threshold algorithm, where the threshold is the value $p$ such that $\PP(X\ge p)=\sfrac{1}{\mu}$.
			\item If $X$ is discontinuous with $\sfrac{1}{\mu}$ in correspondence of a discontinuity jump of the survival function of $X$, the $2-\sfrac{1}{\mu}$-approximation is a single-threshold randomised algorithm, where the threshold is the value $p$ at which the discontinuity jump occurs, and the randomisation parameter is $[\mu\PP(X\ge p)]^{-1}$. 
			\item As a result if $X$ is discontinuous but $\sfrac{1}{\mu}$ does not occur in correspondence of a discontinuity jump, the randomisation parameter is $1$, meaning that the nonrandomised algorithm for the continuous case is enough to yield a $2-\sfrac{1}{\mu}$-approximation.
		\end{itemize}
	\end{remark}
	Some of the most well-known subclasses of the $\PGF$ class, which earned considerable interest in applications, are: $\IHRE$ (Increasing Hazard Rate in Expectation); $\HIHRE$ (Harmonically Increasing Hazard Rate in Expectation); $\NBU$ (New Better than Used); $\NBUE$ (New Better than Used in Expectation). The following tower of inclusions is well-established \cite{BraqRoyXie01,Rol75,Klef82} and known to be strict: $\IHR\subset\IHRE\subset\HIHRE\subset\NBU\subset\NBUE\subset\HNBUE\subset\PGF$. Further details on these subclasses are given in \Cref{suppGclassdetails}.
		
  	The tight instance for the $2-\sfrac{1}{\mu}$-approximation involves a geometric horizon, and was studied in \cite[Theorem~3.5]{AliBanGolMunWan20}. Their basic assumptions, concerning a two-point value distribution, are intuitive facts: that an optimal algorithm with geometric horizon should exist, and that it has a single-threshold. We show mathematically that they both hold, with the second extending, more in general, to any bounded value distribution. As a bonus we obtain an equation for the optimal threshold. The details of the proof of \Cref{geomprice} are provided in \Cref{suppGclassgeomprice}. Recall that $\Geom(1-q)$ denotes the geometric distribution with failure probability $0<q<1$.
	\begin{lemma}\label{geomprice}
		Let $H\sim\Geom(1-q)$, and $X$ such that $\supp(X)=[a,b]$, $0\le a\le b$. Then there exists a threshold $V_0$ such that the corresponding single-threshold algorithm is optimal.
	\end{lemma}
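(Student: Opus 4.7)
The plan is to apply \Cref{optalg} and exploit both the iid structure of $\XX$ and the geometric form of the discount factors to collapse the Snell rule onto a single constant threshold. Since $H\sim\Geom(1-q)$ is unbounded ($m=\sup\supp(H)=\infty$), \Cref{optalg} supplies an optimal rule $\bar{\tau}=\bar{\zeta}\wedge(H+1)$, with $\bar{\zeta}$ the Snell rule for the discounted problem $\DZ$ defined by $Z_i=S(i)X_i=q^{i-1}X_i$ for $i\in\NN$. Boundedness of $X$ on $[a,b]$ secures the hypotheses $\EE\sup_{i\in\NN}Z_i\le b<\infty$ and $\limsup_{i\to\infty}Z_i=0$ a.s., so by \Cref{snell} the Snell envelope $\VV$ of $\DZ$ satisfies $V_i=Z_i\vee\EE_{\DF_i}V_{i+1}$ and $\bar{\zeta}$ is optimal by \Cref{snellstop}.

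Next I will exploit stationarity to show that $\EE_{\DF_i}V_{i+1}=q^{i}\,\EE V_1$ for every $i\in\NN_0$. Writing any rule $\zeta\in\TT_{i+1}$ as $\zeta=i+j$ with $j\ge 1$ adapted to the shifted filtration $\DF^\ssup{i}$ generated by $X_{i+1},X_{i+2},\ldots$, the factorisation $q^{\zeta-1}X_\zeta=q^{i}\cdot q^{j-1}X_{i+j}$ identifies $V_{i+1}$ with $q^{i}\,V_1^\ssup{i}$, where $V_1^\ssup{i}$ is a fresh Snell envelope driven by the iid tail $(X_{i+1},X_{i+2},\ldots)$, independent of $\DF_i$ and distributionally identical to $V_1$. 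Substituting into the Snell equation yields, for every $i\in\NN$,
\[V_i=q^{i-1}X_i\vee q^{i}\,\EE V_1,\]
so $\bar{\zeta}$ triggers at the first $i\in\NN$ with $X_i\ge q\,\EE V_1\eqdef V_0$. The threshold is uniform in $i$, and hence $\bar{\tau}$ coincides with the single-threshold rule of \Cref{algp} at level $V_0$.

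To finish I will derive a fixed-point equation pinpointing $V_0$. Taking expectation in $V_1=X_1\vee V_0$ gives $\EE V_1=\EE[\max(X,V_0)]$, whence
\[V_0=q\,\EE[\max(X,V_0)].\]
Existence (and uniqueness) of $V_0\in[0,b]$ follows from the Banach fixed point theorem, since the map $\Phi:v\mapsto q\,\EE[\max(X,v)]$ sends $[0,b]$ into itself (bounded above by $qb\le b$ and below by $0$) and is a $q$-contraction by the $1$-Lipschitz property of $v\mapsto\max(x,v)$, with $q<1$.

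The main obstacle will be making the stationarity step fully rigorous within the compactified measure-theoretic framework of \Cref{opt}: specifically, verifying that the time shift commutes with both the essential supremum defining $V_{i+1}$ and the conditional expectation $\EE_{\DF_i}$. This follows from the independence of the shifted tail $(X_{i+1},X_{i+2},\ldots)$ from $\DF_i$ together with a change-of-variable on adapted rules in $\TT_{i+1}$, but needs to be carefully aligned with the trace $\sigma$-algebra construction underlying \Cref{optalg}.
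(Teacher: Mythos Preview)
Your proposal is correct and takes essentially the same route as the paper: invoke \Cref{optalg}, exploit time-invariance of the geometrically discounted iid problem to collapse the Snell rule to a constant threshold, and solve the resulting fixed-point equation. The only differences are cosmetic: you establish existence (and, as a bonus, uniqueness) via a Banach contraction on $v\mapsto q\,\EE[\max(X,v)]$, whereas the paper rewrites $V_0=\EE(X\vee qV_0)$ as $V_0+\int_{qV_0}^b F(x)\,dx=b$ and applies the Intermediate Value Theorem; and the stationarity step you flag as the main obstacle is handled in the paper directly via \Cref{snellfuture} (giving $\EE_{\DF_i}V_{i+1}=\esup_{\zeta\in\TT_{i+1}}\EE_{\DF_i}Z_\zeta$), so the trace $\sigma$-algebra machinery of \Cref{optalg} is not actually needed there.
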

	\begin{proof}[Idea of the proof]
		By \Cref{optalg}, $\Val(\XX)=\Val(\YY)=\Val(\DZ)\eqdef V_0$. By \Cref{dpe}, it holds that $V_0=\EE V_1=\EE (Z_1\vee\EE_{\DF_1}V_{2}).$ Since $Z_i\defeq q^{i-1}X_i$, by time-invariance and \Cref{snellfuture} it follows that $\EE_{\DF_1}V_{2}=q\EE V_1=qV_0$. %\footnote{ $\EE_{\DF_1}V_{2}=\esup_{\zeta\in\TT_{2}}\EE_{\DF_1}Z_\zeta=\esup_{\zeta\in\TT_{2}}\EE Z_\zeta=q\sup_{\zeta\in\TT_{2}}\EE q^{\zeta-2 }X_\zeta=q\esup_{\zeta\in\TT_{1}}\EE q^{\zeta-1 }X_{\zeta}=q\EE V_1$.}  
		Therefore, $V_0=\EE (X\vee qV_0)$; equivalently, if $X$ has cdf $F(x)$, \[f(V_0)\defeq V_0+\int_{qV_0}^bF(x)dx=b.\] The function $f(V)$ is a strictly increasing differentiable function on $[0,b]$, and $f(0)\le b\le f(b)$. Hence the value $V_0$ exists. By \Cref{snellstop,snellstoprule} and time-invariance again, it is optimal to stop at the first value greater or equal than $V_0$ (which is also the expected return).
	\end{proof}
	The tightness then follows by the straightforward computation of \cite{AliBanGolMunWan20}[Theorem~3.5]. For self-containedness we provide the short computation in \Cref{suppGclasstight}, with small variations to better fit our framework.
	
	\section{A first step beyond single-threshold algorithms}\label{hardsecretary}
	In this section we construct a parametric family of horizons, which is hard for any single-threshold algorithm, and yet it allows for an adaptive competitive stopping rule. We also show that it can be perturbed so that it enters the $\PGFd$ class, while remaining hard for single-threshold algorithms.
	
	\subsection{The hard instance}
	In this section we construct the instance, which will provide the hardness result for single-thresholds.
	Let us start with the instance $X$ for the values, which is a Pareto (Type I) distribution with scale parameter $1$ and shape parameter $1+\eps$, where $\eps>0$ is fixed small enough. The corresponding cdf, denoted as $V(x)$, vanishes on $(-\infty,1)$, whereas on $[1,\infty)$
	\begin{equation}\label{hardval}
		V(x)= 1-x^{-(1+\eps)}.
	\end{equation}
	Since $M_n$ has cdf $F_n(x)=V^n(x)$, which is differentiable for all $x\neq 1$, we obtain that its density (defined almost everywhere) $f_n(x)$ vanishes on $(-\infty,1)$, whereas $f_n(x)=nV^{n-1}(x)V'(x)$ on $(1,\infty)$. By direct computation of $\EE M_n \defeq \int_\RR xf_n(x)dx$, exploiting the properties of the Beta function and well-known asymptotics of the Gamma function, the following result holds. The proof is provided in \Cref{supphardsecretarylem}.
	\begin{lemma}\label{asympmax}
		As $n\longrightarrow\infty$, uniformly in $\eps>0$,
		$\EE M_n \sim \Gamma\left(1-\frac{1}{1+\eps}\right)n^{\frac{1}{1+\eps}}$.
	\end{lemma}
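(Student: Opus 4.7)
The plan is to transform $\EE M_n$ into a Beta function via a direct change of variables, and then apply the classical asymptotic for ratios of Gamma functions. Starting from
\[
\EE M_n = \int_1^\infty x \cdot n V^{n-1}(x) V'(x)\, dx,
\]
I would substitute $u = x^{-(1+\eps)}$, under which $V(x) = 1-u$, $V'(x)\, dx = -du$, and $x = u^{-1/(1+\eps)}$. The endpoints $x = 1, \infty$ correspond to $u = 1, 0$, so
\[
\EE M_n = n\int_0^1 u^{-1/(1+\eps)} (1-u)^{n-1}\, du = n\,\Beta\!\left(1-\tfrac{1}{1+\eps},\, n\right).
\]

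Setting $\alpha \defeq 1 - 1/(1+\eps) \in (0,1)$ and using the Beta--Gamma identity, this rewrites as $\EE M_n = \Gamma(\alpha)\,\Gamma(n+1)/\Gamma(n+\alpha)$. The statement then follows from the classical Gamma ratio asymptotic $\Gamma(n+1)/\Gamma(n+\alpha) = n^{1-\alpha}(1 + O(1/n))$, which can be obtained directly by integrating the expansion $\psi(t) = \log t + O(1/t)$ of the digamma function over $[n+\alpha, n+1]$ and then Taylor-expanding $\log(n+a)$ around $\log n$. Substituting $1 - \alpha = 1/(1+\eps)$ delivers $\EE M_n \sim \Gamma(\alpha)\, n^{1/(1+\eps)}$, which matches the stated form.

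The only delicate point is the uniformity in $\eps > 0$. I would track the implicit constants in the expansion: the error terms of the form $a^2/n$ arising from $\log(n+a) = \log n + a/n - a^2/(2n^2) + \cdots$ (for $a \in \{1, \alpha\}$) are uniformly bounded since $\alpha \in [0, 1]$, and the integral $\int_{n+\alpha}^{n+1} O(1/t)\, dt$ contributes $O(1/n)$ uniformly in $\alpha$. Consequently $\Gamma(n+1)/[\Gamma(n+\alpha)\, n^{1-\alpha}] \to 1$ uniformly in $\alpha \in (0,1)$, which is equivalent to the claimed uniformity in $\eps > 0$. I expect this last bookkeeping — confirming that the Stirling/digamma error constants remain controlled throughout the full range of $\eps$, in particular as $\alpha$ approaches either endpoint of $(0,1)$ — to be the main (but routine) technical step; everything else is a one-line change of variables and an appeal to standard identities.
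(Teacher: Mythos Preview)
Your proposal is correct and follows essentially the same route as the paper: reduce $\EE M_n$ to a Beta integral by substitution, then invoke the standard asymptotic for the ratio of Gamma functions. The paper uses the substitution $u=V(x)$ (your $u$ is $1-$ theirs) and derives the ratio asymptotic via Stirling's formula rather than your digamma-integration argument, but these are interchangeable technical choices; if anything, your digamma approach makes the uniformity in $\alpha\in(0,1)$ more transparent than the paper's Stirling computation, which leaves that bookkeeping implicit.
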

	%Recalling that $\Gamma(x)\longrightarrow\infty$ as $x\longrightarrow 0^+$, we note that the overall asymptotic constant does not vanish as $\eps$ vanishes, and thus it can be used to provide an asymptotic lower bound, uniformly in $\eps>0$ small enough (this will be the regime of main interest for us).
	%\begin{remark}\label{asympmaxunif}
		%As $n\longrightarrow\infty$, uniformly in $\eps>0$ small enough, $\EE M_n =\Omega\left(n^{\frac{1}{1+\eps}}\right)$.
	%\end{remark}
	In conjunction with the instance $X$ for the values, consider a family of horizons $\{H_m\}$, with $m\in\NN$ large enough, such that for every fixed such $m$, $\supp(H_m)=[\ell,m]\cap\NN$, with the lower limit $1<\ell\in\NN$. For every $m$ fixed, we define the pmf of each horizon by letting, for every $\ell\le h\le m$,
	\begin{equation}\label{hardhorizon}
		p_h^\ssup{m}\defeq \PP(H_m=h)=Z_m^{-1}h^{-\frac{1}{1+2\eps}},
	\end{equation}
	having defined the normalising constant $Z_m\defeq\sum_{h=\ell}^m h^{-\frac{1}{1+2\eps}}$. Through standard integral estimates of summations, it is shown that 
	\begin{equation}\label{asympnorm}
		Z_m\sim [1+(2\eps)^{-1}]m^{\frac{2\eps}{1+2\eps}},
	\end{equation} 
	and that the following holds.
	\begin{lemma}\label{asympmax_H_m}
		As $m\longrightarrow\infty$, uniformly in $\eps>0$ small enough, $\EE M_{H_m}\sim N m^{\frac{1}{1+\eps}}$, where we defined
		$N=N(\eps)\defeq\frac{\Gamma\left(1-\frac{1}{1+\eps}\right)}{\left[1+\frac{\eps}{(1+\eps)(1+2\eps)}\right][1+(2\eps)^{-1}]}$.
	\end{lemma}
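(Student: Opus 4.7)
The plan is to condition on the value of the horizon and reduce everything to integral estimates combined with the two asymptotic facts already in hand, namely \Cref{asympmax} and the estimate $Z_m\sim[1+(2\eps)^{-1}]m^{2\eps/(1+2\eps)}$ from \Cref{asympnorm}. By independence of $H_m$ and $\XX$ and the pmf in \Cref{hardhorizon},
\[
\EE M_{H_m}=\sum_{h=\ell}^{m}p_h^{\ssup{m}}\EE M_h=Z_m^{-1}\sum_{h=\ell}^{m}h^{-\frac{1}{1+2\eps}}\EE M_h.
\]
First I would substitute the asymptotic of \Cref{asympmax} into the summand. Setting $\alpha\defeq\frac{1}{1+\eps}-\frac{1}{1+2\eps}=\frac{\eps}{(1+\eps)(1+2\eps)}$, the heuristic leading order is
$Z_m^{-1}\Gamma\!\left(1-\tfrac{1}{1+\eps}\right)\sum_{h}h^{\alpha}$.

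The main obstacle is that the $\sim$-relation of \Cref{asympmax} is pointwise in $h$, so before using it under the sum one must control the error uniformly. The trick is to exploit the uniformity in $\eps$ that \Cref{asympmax} already provides: given any $\delta>0$, there is an index $h_0=h_0(\delta)$ (independent of $\eps$ small) such that $(1-\delta)\Gamma(1-\tfrac{1}{1+\eps})h^{1/(1+\eps)}\le\EE M_h\le(1+\delta)\Gamma(1-\tfrac{1}{1+\eps})h^{1/(1+\eps)}$ for all $h\ge h_0$. Splitting the sum at $h_0$, the head $\sum_{h=\ell}^{h_0-1}h^{-1/(1+2\eps)}\EE M_h$ is bounded in $m$, so, since $Z_m\longrightarrow\infty$, its contribution after division by $Z_m$ is $\smallO(1)$; the tail is sandwiched by $(1\pm\delta)\Gamma(1-\tfrac{1}{1+\eps})\sum_{h=h_0}^{m}h^{\alpha}$.

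Next I would estimate $\sum_{h=h_0}^{m}h^\alpha$ by the same integral-comparison argument that yields \Cref{asympnorm}: since $\alpha>0$ but uniformly small in $\eps$, monotone integration bounds give $\sum_{h=h_0}^{m}h^\alpha\sim\frac{m^{\alpha+1}}{\alpha+1}$ as $m\longrightarrow\infty$, uniformly in small $\eps$. Putting these pieces together with the asymptotic for $Z_m$,
\[
\EE M_{H_m}\sim(1\pm\delta)\,\frac{\Gamma(1-\tfrac{1}{1+\eps})}{(\alpha+1)\,[1+(2\eps)^{-1}]}\,m^{\alpha+1-\frac{2\eps}{1+2\eps}}.
\]
A direct algebraic simplification gives $\alpha+1-\tfrac{2\eps}{1+2\eps}=\tfrac{1}{1+\eps}$ and $\alpha+1=1+\tfrac{\eps}{(1+\eps)(1+2\eps)}$, so the coefficient equals exactly $N$ as in the claim. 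Letting $\delta\longrightarrow0^{+}$ concludes $\EE M_{H_m}\sim N m^{1/(1+\eps)}$.

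The main technical point, and the only real obstacle, is making the substitution under the sum rigorous while retaining uniformity in $\eps$; everything else is an integral estimate and elementary algebra. All three uniformities in $\eps$ that are used — in \Cref{asympmax}, in \Cref{asympnorm}, and in the integral comparison for $\sum h^\alpha$ — hold for $\eps$ bounded away from $\infty$ (in particular for $\eps$ small), which is exactly the regime of the statement.
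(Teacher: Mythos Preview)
Your proposal is correct and follows essentially the same route as the paper: condition on $H_m$, split the sum so that \Cref{asympmax} can be substituted in the tail, and handle $\sum h^{\alpha}$ and $Z_m$ by integral comparison, followed by the same exponent algebra. The only cosmetic difference is the split point: you cut at a fixed $h_0(\delta)$ coming from the uniformity in \Cref{asympmax}, whereas the paper cuts at $\lfloor\sqrt{m}\rfloor$ and bounds the head by $\EE M_{\lfloor\sqrt{m}\rfloor}\sim\Gamma(\cdot)\,m^{1/(2(1+\eps))}$, which is of lower order than the tail; both devices serve exactly the same purpose of justifying the replacement of $\EE M_n$ by its asymptotic under the sum.
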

	The proof of \Cref{asympnorm} can be found in the proof of \Cref{asympmax_H_m}, which is provided in \Cref{supphardsecretarylem}. By \Cref{asympnorm} and similar integral estimates as in \Cref{asympmax_H_m} it is straightforward to compute that for every $n\in\NN$, as $m\longrightarrow\infty$,
	\begin{equation}\label{expectationn}
		\EE H_m^n\sim\frac{\sum_{h=\ell}^mh^{n+1-\frac{1}{1+2\eps}}}{[1+(2\eps)^{-1}]m^{\frac{2\eps}{1+2\eps}}}\sim\frac{m^{n-\frac{1}{1+2\eps}}}{\left(n+1-\frac{1}{1+2\eps}\right)[1+(2\eps)^{-1}]m^{\frac{2\eps}{1+2\eps}}}= \frac{2\eps m^n}{n+2(n+1)\eps}.
	\end{equation}
	%from which we obtain, in particular,
	%\begin{equation}\label{expectation1}
		%\EE H_m\sim\frac{2\eps m}{1+4\eps},
	%\end{equation}
	%and
	%\begin{equation}\label{expectation2}
		%\EE H_m^2\sim \frac{\eps m^2}{1+3\eps}.
	%\end{equation}
	Note that from the elementary properties of the Gamma function %that for every $x>0$, $\Gamma(x+1)=x\Gamma(x)$, we obtain through Taylor expansion at $x=0$ that $\Gamma(x)=x^{-1}\Gamma(x+1)=x^{-1}[\Gamma(1)+\bigO(x)]=x^{-1}+\bigO(1)$ and therefore 
 it holds that as $x\longrightarrow 0^+$, $\Gamma(x)\sim x^{-1}$. As a result, as $\eps$ vanishes, $N\sim2\eps\left(1-(1+\eps)^{-1}\right)^{-1}\longrightarrow 2$ and the following is immediate consequence.
	\begin{remark}\label{asympmaxunif_H_m}
		As $m\longrightarrow\infty$, uniformly in $\eps>0$ small enough, $\EE M_{H_m}\asymp m^{\frac{1}{1+\eps}}$.
	\end{remark}
	
	We are now ready to prove the hardness of the family of horizons $\{H_m\}$ for single-threshold algorithms. A stopping rule with single-threshold $\pi$ is denoted as $\tau_\pi$. A crucial point is that $\pi$ can be tailored to each horizon in the family, that is, the algorithm facing $H_m$ will set a threshold $\pi_m$ exploiting all distributional knowledge regarding $X$ and $H_m$. Yet, this is not enough to obtain a constant-approximation with the family $(X,\{H_m\})$ for suitably small $\eps$. The details of the proof, which involves a quite technical asymptotic analysis, is provided in \Cref{supphardsecretaryprop}. We also include some supporting numerics and heuristic consideration in \Cref{supphardsecretaryheur}, which ease assessing that the family of horizons in question has divergent CV as $\eps$ vanish (thus violating the requirement of \Cref{L2horizon}), and admits horizons outside the $\PGF$ and $\PGFd$ class (thus violating the requirements of \Cref{Ghorizon} and \Cref{Gdhorizon}).
	\begin{proposition}\label{hardsingle}
		For fixed $\eps>0$ small enough, given the instance of copies of $X$ and the family of horizons $\{H_m\}$ of \Cref{hardval,hardhorizon} respectively, we have that for every sequence of thresholds $\{\pi_m\}$,  $\liminf_{m\longrightarrow\infty}\frac{\EE X_{\tau_{\pi_m}}}{\EE M_{H_m}}=0$.
	\end{proposition}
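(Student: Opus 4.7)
The plan is to apply Lemma~\ref{algp}, which gives $\EE X_{\tau_{\pi_m}} = c_{\pi_m}\cdot \EE(X\mid X\ge\pi_m)$ with $c_{\pi_m} = 1-\EE[(1-\pi_m^{-(1+\eps)})^{H_m}]$, and to exploit two structural features of the instance: the explicit Pareto conditional mean $\EE(X\mid X\geq \pi_m)=\frac{1+\eps}{\eps}\pi_m$ for $\pi_m\ge 1$ (while the regime $\pi_m<1$ gives a constant return $\frac{1+\eps}{\eps}$, negligible against $\EE M_{H_m}\asymp m^{1/(1+\eps)}$ from \Cref{asympmax_H_m}), and the dispersion of $H_m$ on the scale~$m$. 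Introduce the natural rescaling $T_m := \pi_m^{1+\eps}$ and $u_m := T_m/m$; after passing to a subsequence we can assume $u_m$ converges in $[0,\infty]$, and organise the argument along the three resulting regimes.

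In the two extreme regimes the ratio vanishes by direct bounds. If $u_m \to 0$, the trivial bound $c_{\pi_m}\le 1$ together with $\pi_m = (u_m m)^{1/(1+\eps)} = o(m^{1/(1+\eps)})$ gives $\EE X_{\tau_{\pi_m}}\lesssim \pi_m = o(\EE M_{H_m})$. If $u_m\to\infty$, the Bernoulli inequality $1-(1-q)^h\le qh$ yields $c_{\pi_m}\le q_m\EE H_m$, and \Cref{expectationn} supplies $\EE H_m\asymp m$; hence $\EE X_{\tau_{\pi_m}}\lesssim \pi_m\cdot m/T_m = u_m^{-\eps/(1+\eps)}\,m^{1/(1+\eps)}$, which vanishes against $\EE M_{H_m}$.

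The intermediate regime $u_m\to u\in(0,\infty)$ is the heart of the argument. From the power-law form of $p_h^{(m)}$, integral approximation of the partial sums yields the weak convergence $H_m/m\Rightarrow U$ where $U$ has cdf $s^{\alpha}$ on $[0,1]$ with $\alpha=2\eps/(1+2\eps)$, so that $c_{\pi_m}\to 1-\EE e^{-U/u}$. Combined with the precise prefactor $\EE M_{H_m}\sim N(\eps)m^{1/(1+\eps)}$ from \Cref{asympmax_H_m}, this gives a closed-form limiting ratio $\rho_\infty(u,\eps)$, and the identity $\EE e^{-U/u}=\alpha u^{\alpha}\gamma(\alpha,1/u)$, with $\gamma$ the lower incomplete Gamma function, reduces bounding $\rho_\infty$ to small-$\alpha$ incomplete-Gamma asymptotics. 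One shows that for $\eps$ small enough $\rho_\infty(u,\eps)$ can be made arbitrarily small uniformly in $u\in(0,\infty)$, which combined with the two extreme regimes forces $\liminf_{m\to\infty}\rho_m=0$ along every subsequence.

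The main obstacle is this uniform-in-$u$ bound on $\rho_\infty(u,\eps)$. The incomplete-Gamma integral exhibits a logarithmic divergence as $u\to 0^+$ and a slow polynomial decay as $u\to\infty$, and the smallness must be extracted by matching both endpoints against the precise form of the prefactor $N(\eps)$. This matched-asymptotic analysis is where the hypothesis ``$\eps$ small enough'' is genuinely used, ensuring that the simultaneous heaviness of the Pareto value tail and of the horizon tail leaves no single threshold enough room to capture a nonvanishing fraction of $\EE M_{H_m}$ along every subsequence.
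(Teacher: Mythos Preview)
Your handling of the two extreme regimes $u_m\to 0$ and $u_m\to\infty$ is correct and parallels the paper's treatment of the slowly-increasing and fast-increasing cases. The gap lies entirely in the intermediate regime $u_m\to u\in(0,\infty)$, and it is fatal: your claim that $\rho_\infty(u,\eps)$ can be made arbitrarily small uniformly in $u$ for $\eps$ small is false. Writing $\tilde\alpha=2\eps/(1+2\eps)$, integration by parts together with $s^{\tilde\alpha}=1+\tilde\alpha\log s+O(\tilde\alpha^2)$ gives
\[
1-\EE e^{-U/u}\;=\;-\frac{\tilde\alpha}{u}\int_0^1\log s\,e^{-s/u}\,ds\;+\;O(\eps^2),
\]
and combining with $(1+1/\eps)/N(\eps)\sim 1/(2\eps)$ and $u^{1/(1+\eps)}\to u$ yields $\rho_\infty(u,\eps)\to-\int_0^1\log s\,e^{-s/u}\,ds$ as $\eps\to 0$. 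This limit is strictly positive for every $u>0$; at $u=1$ it equals $-\int_0^1\log s\,e^{-s}\,ds\approx 0.80$. Thus for any fixed small $\eps$ the threshold $\pi_m=m^{1/(1+\eps)}$ produces a limiting ratio bounded well away from zero, and no small-$\alpha$ incomplete-Gamma matching will change that. (Note also that even were $\sup_u\rho_\infty(u,\eps)$ small, for a \emph{fixed} $\eps$ this would only give a small---not zero---liminf along the corresponding subsequence.)

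The paper avoids this obstruction by a fundamentally different mechanism. It does not try to bound the ratio for an arbitrary threshold in the intermediate regime; instead it passes to the \emph{optimal} single threshold $\bar\pi_m$ (maximising $f_m(\pi)=c_\pi(1+1/\eps)\pi$, which exists since $f_m$ increases near $\pi=1$ and decreases for large $\pi$) and exploits the first-order condition $f_m'(\bar\pi_m)=0$, recast as
\[
g_m(\bar\pi_m)\;:=\;\EE\Big[\big(1-\bar\pi_m^{-(1+\eps)}\big)^{H_m}\Big]+\frac{1+\eps}{\bar\pi_m^{1+\eps}}\,\EE\Big[H_m\big(1-\bar\pi_m^{-(1+\eps)}\big)^{H_m-1}\Big]\;=\;1.
\]
Assuming by contradiction that the optimal ratio $r_m$ has a positive subsequential limit $\rho$, one extracts $m_{j_k}/\bar\pi_{m_{j_k}}^{1+\eps}\to\alpha\ge 0$ and shows, via a series expansion using $\EE H_m^n/m^n\sim 2\eps/(n+2(n+1)\eps)$, that $g_{m_{j_k}}(\bar\pi_{m_{j_k}})\le 1+\eps[(2+\alpha)e^{-\alpha}-2]+O(\eps^2)+O(1/m_{j_k})<1$ whenever $\alpha$ is bounded away from $0$ and $\eps$ is small. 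This contradicts $g_m(\bar\pi_m)=1$, forcing $\alpha=0$ (the fast-increasing regime), where the ratio vanishes---contradicting $\rho>0$. Since any threshold's ratio is bounded above by $r_m$, the conclusion for arbitrary $\{\pi_m\}$ follows. The optimality equation is precisely the missing idea: it is what rules out the intermediate regime for $\bar\pi_m$, a regime you cannot exclude by direct estimation of $\rho_\infty$.
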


    \begin{proof}[Idea of the proof.]
        Our strategy will be to characterise the sequence of optimal single thresholds $\{\bar{\pi}_m\}$ (corresponding to horizons $\{H_m\}$ for $m$ large enough and $\eps$ small enough), where $\bar{\pi}_m$ maximizes the value obtained $\EE X_{\tau_{\pi_m}} \defeq \EE Y_{\tau_{\pi_m}}^\ssup{H_m}$ over all single threshold strategies $\pi_m$.
        Denote with $r_m$ the gambler-to-prophet ratio of $\EE Y_{\tau_{\bar{\pi}_{m}}}^\ssup{H_{m}}$ and $\EE M_{H_{m}}$. As a consequence of the characterisation obtained, it will follow that there is always a subsequence $\{m_j\}$, such that as $j\longrightarrow\infty$, $r_{m_j}$ vanishes. This yields the claim, since $r_{m_j}$ is an upper bound on any gambler-to-prophet ratio for arbitrary single-thresholds. 
    
		Since $X$ is continuous, by \Cref{algp} we compute $\EE Y_{\tau_\pi}^\ssup{H_m}$, which is nontrivial only if $\pi>1$. This is expressed as the product of \[\EE(X|X\ge\pi)=\left(1+\sfrac{1}{\eps}\right)\pi\]
		and \[c_\pi=1-\EE\left[ V^{H_m}(\pi)\right]=1-\EE\left[\left(1-\sfrac{1}{\pi^{1+\eps}}\right)^{H_m}\right].\] Next, we show analytically that the optimal sequence of thresholds $\{\bar{\pi}_m\}$ satisfies the \textit{optimality equation} 
		\begin{equation*}
			g_m(\bar{\pi}_m)\defeq\EE\left[\left(1-\frac{1}{\bar{\pi}_m^{1+\eps}}\right)^{H_m}\right]+\frac{1+\eps}{\bar{\pi}_m^{1+\eps}}\EE\left[ H_m\left(1-\frac{1}{\bar{\pi}_m^{1+\eps}}\right)^{H_m-1}\right]=1.
		\end{equation*}
		
		To conclude, by these facts and \Cref{asympmax_H_m,asympmaxunif_H_m} we have that, as $m\longrightarrow\infty$, uniformly in $\eps>0$ small enough,
		\begin{equation*}
			r_m=c_{\bar{\pi}_m} \frac{\EE(X|X\ge\bar{\pi}_m)}{\EE M_{H_m}}\sim\left(1+\frac{1}{\eps}\right)\frac{c_{\bar{\pi}_m} \bar{\pi}_m}{Nm^{\frac{1}{1+\eps}}}\asymp \frac{c_{\bar{\pi}_m} \bar{\pi}_m}{\eps m^{\frac{1}{1+\eps}}}.
		\end{equation*}
		Consider that $\{\bar{\pi}_m\}$ is either bounded or unbounded.
		\begin{itemize}[noitemsep]
			\item In the bounded case, since $c_{\bar{\pi}_{m}}<1$, $r_m$ vanishes as $m\longrightarrow\infty$, for any fixed $\eps$ small enough. Note that this case covers also the trivial case for which there exists a subsequence $\{\bar{\pi}_{m_j}\}$ such that $\bar{\pi}_{m_j}\le1$, for which the numerator of $r_{m_j}$ becomes upper-bounded by $\nu\defeq\EE X$.
			\item Before moving on to the general unbounded case, we consider the subcase of divergence to infinity with $m^{\frac{1}{1+\eps}}=\smallO\left(\bar{\pi}_m\right)$. Note that by Bernoulli's inequality 
            %$\EE\left(1-\sfrac{1}{\bar{\pi}_{m}^{1+\eps}}\right)^{H_{m}}\ge1-\frac{\EE H_{m}}{\bar{\pi}_{m}^{1+\eps}}$,
		  we have that \[c_{\bar{\pi}_{m}}\defeq1-\EE\left[\left(1-\sfrac{1}{\bar{\pi}_{m}^{1+\eps}}\right)^{H_{m}}\right]\le\frac{\EE H_{m}}{\bar{\pi}_{m}^{1+\eps}}.\] By \Cref{expectationn} with $n=1$ it holds that as $m\longrightarrow\infty$, for every fixed $\eps$ small enough $r_m$ vanishes, since
			\[c_{\bar{\pi}_{m}} \bar{\pi}_{m} m^{-\frac{1}{1+\eps}}\le \EE H_m m^{-\frac{1}{1+\eps}}\bar{\pi}_m^{-\eps}\asymp \varepsilon\left(m^{\frac{1}{1+\eps}}\bar{\pi}_m^{-1}\right)^{\eps}\longrightarrow 0.\]
			\item More in general, we can rely on the previous point to show that the unbounded case overall only allows for $r_m$ to vanish. Consider that since $0\le r_m\le 1$, if all of its convergent subsequences $\{r_{m_j}\}$ vanish for all $\eps$ small enough, then $\{r_{m}\}$ vanishes for all $\eps$ small enough. We show the sufficient condition by contradiction: assume that as $j\longrightarrow\infty$, $r_{m_j}\longrightarrow\rho=\rho(\eps)\in(0,1]$, with $\rho(\eps)$ being bounded away from $0$ no matter how small $\eps$ is taken. Then we have that as $j\longrightarrow\infty$, for $\eps>0$ fixed small enough, $\mu_j\defeq \sfrac{m_j}{\bar{\pi}_{m_j}^{1+\eps}}$
			is asymptotically equivalent to a bounded sequence, so it must be bounded too. Thus, we can consider a convergent subsequence of $\{\mu_j\}$, denoted $\{\mu_{j_k}\}$, such that as $k\longrightarrow\infty$, $\mu_{j_k}\longrightarrow\alpha=\alpha(\eps)\ge0$. A sharp asymptotic analysis of $c_{\bar{\pi}_{m_{j_k}}}$ reveals that the \textit{optimality equation} satisfied by $\bar{\pi}_{m_{j_k}}$ allows only for $\alpha$ being bounded away from zero, unless it is identically zero. In what follows, it is useful to rewrite the \textit{optimality equation} as \[g_{m_{j_k}}(\bar{\pi}_{m_{j_k}})\defeq\EE\left[\phi(\bar{\pi}_{m_{j_k}},H_{m_{j_k}})\right]=1,\] having defined \[\phi(\pi,H)\defeq \left(1-\pi^{-(1+\eps)}\right)^{H-1}\left[1-\pi^{-(1+\eps)}+(1+\eps)H\pi^{-(1+\eps)}\right].\] By estimating asymptotically the series expansion of $\phi(\bar{\pi}_{m_{j_k}},H_{m_{j_k}})$ with respect to the first argument, uniformly in $\alpha>0$, the following bound is shown to hold as $k\longrightarrow\infty$: \[g_{m_{j_k}}(\bar{\pi}_{m_{j_k}})\le1+\eps\left[(2+\alpha)e^{-\alpha}-2\right]+\bigO(\eps^2)+ \bigO \left(\frac{1}{m_{j_k}}\right).\]
			It follows that for $\eps>0$ small enough, as $k\longrightarrow\infty$, $g_{m_{j_k}}(\bar{\pi}_{m_{j_k}})<1$ if the strictly decreasing function $h(\alpha)\defeq (2+\alpha)e^{-\alpha}-2$ is bounded away from zero. Note that $h(\alpha)$ is negative and vanishes as $\alpha\ge 0$ vanishes. Since we have previously shown that if $\alpha>0$ exists, it must be bounded away from zero as $\eps$ vanishes (and thus $h(\alpha$ is bounded away from zero too)), in this case the optimality equation cannot be satisfied, as long as $\eps$ is taken sufficiently small. Thus only $\alpha =0$ would not yield a contradiction at this point, meaning that for all $\eps$ small enough, $\mu_{j_k} \longrightarrow 0$ as $k\longrightarrow\infty$. By boundedness, also $\mu_j\longrightarrow 0$ as $j\longrightarrow\infty$. This corresponds to the scenario in the previous point (replace, in that argument, $m$ with $m_j$) and therefore $r_{m_j}$ vanishes. This is in contradiction with the assumption, which ensures $r_{m_j}\longrightarrow \rho>0$, and therefore we must have $\rho=0$.
		\end{itemize}
		The sequence $\{r_m\}$ is thus shown to always admit a vanishing subsequence.
	\end{proof}
	
	\subsection{Hardness of the $\PGFd$ class}\label{Gdclass}
	In principle, there is no obvious reason why the main result of this section, \Cref{Gdhorizon}, should hold, merely based on duality. However, a hint towards the hardness of the $\PGFd$ class could be that the duality turns the estimates of \Cref{Gclass} the wrong way around. It is straightforward to see that the approach adopted on the $\PGF$ class is mirrored by the duality, meaning that the lower-bound, which provides the competitive ratio ensured, turns into an upper-bound, informally speaking. This mirroring has significance, because the ex-ante relaxation of the prophet is tight on geometric horizons, and is essentially maxed-out by the very definition of the $\PGF$ class. A similar phenomenon occurs with the argument of \cite{AliBanGolMunWan20}, if we restrict to the $\IHR$ and $\DHR$ class. The most notable subclasses of the $\PGFd$ class are the dual of those introduced at the end of the previous section: $\DHR\subset\DHRE\subset\HDHRE\subset\NWU\subset\NWUE\subset\HNWUE\subset\PGFd$.\footnote{Since these classes are obtained by reversing the inequalities in the corresponding definitions, the corresponding acronyms are obtained by replacing \textit{increasing} with \textit{decreasing} and \textit{better} with \textit{worse}.} These facts, combined with \Cref{Gdhorizon}, motivate our conjecture that no constant-approximation is possible on the $\DHR$ class either. The details of the following proof are in \Cref{supphardsecretaryGdhorizon}.
	
	\begin{proof}[Idea of the proof of \Cref{Gdhorizon}]
		We show that a perturbation of the hard instance from the previous section prevents any single-threshold constant-approximation on the $\PGFd$ class.
		\paragraph{Step 1.} We start by constructing a family $\tilde{\mathcal{H}}_M(\eps)\subset\PGFd$, which is a sequence of horizons $\tilde{H}_m$ arising as a perturbation of horizons $H_m$ in the hard family $\mathcal{H}_M(\eps)$ of \Cref{hardhorizon}, with fixed $\ell=2$, as $m$ grows, for all fixed $0<\eps<\sfrac{1}{4}$ small enough. This is done by adding to the pmf of said $H_m$, a (to be suitably rescaled) mass at one, 
		\begin{equation*}
			\delta_m\defeq Cm^{\frac{2\eps}{1+2\eps}},\quad C=C(\eps)\defeq\frac{3+10\eps}{2\eps}.
		\end{equation*} 
		We denote the perturbed horizon $\tilde{H}_m$ for every $m,\eps$ considered, and observe that the corresponding normalizing constants for the pmf's $\tilde{Z}_m$ and $Z_m$; and expectations $\tilde{\mu}_m\defeq\EE\tilde{H}_m$ and $\mu_m\defeq\EE H_m$; satisfy, as $m$ grows, the following:
		\begin{align*}
			\tilde{Z}_m&\sim \tilde{C}Z_m,\quad \tilde{C}\defeq C+1+\frac{1}{2\eps}\\ 
			\tilde{\mu}_m&\sim\frac{1}{\tilde{C}}\left(C+\frac{1+2\eps}{1+4\eps}m\right).
		\end{align*} 
		Trivially, these follow from \Cref{asympnorm,expectationn} with $n=1$ and the definitions, respectively $\tilde{Z}_m\defeq Z_m+\delta_m$ and $\tilde{\mu}_m = \sfrac{\delta_m}{\tilde{Z}_m}+\mu_m\sfrac{Z_m}{\tilde{Z}_m}$.
		To show that for all $m$ large enough, $\tilde{H}_m\in\PGFd$, we have to establish that eventually (short for all $m$ large enough from now on), for all $t\in(0,1)$ (the implicit range from now on),
		\[\frac{1}{\tilde{Z}_m}\left[\delta_mt+\sum_{h=2}^m t^h h^{-\frac{1}{1+2\eps}}\right]\ge\frac{1}{\tilde{\mu}_m}\sum_{h=1}^\infty\left(1-\sfrac{1}{\tilde{\mu}_m}\right)^{h-1}t^h,\]
		which we recast as
		\begin{equation}\label{targetmain}
			\sum_{h=2}^m t^h h^{-\frac{1}{1+2\eps}}\ge -\delta_mt+\frac{\eps_mt}{1-t(1-\sfrac{1}{\tilde{\mu}_m})},
		\end{equation}
		having set $\eps_m\defeq \sfrac{\tilde{Z}_m}{\tilde{\mu}_m}$. Let 
		\begin{align*}
			\eta_m\defeq \frac{1-\sfrac{\eps_m}{\delta_m}}{1-\sfrac{1}{\tilde{\mu}_m}}.
		\end{align*}
		We establish that eventually for all $t\in(0,\eta_m]$
		\begin{equation}\label{linear1main}
			-\delta_mt+\frac{\eps_mt}{1-t(1-\sfrac{1}{\tilde{\mu}_m})}\le0,
		\end{equation}
		and that eventually for all $t\in[\eta_m,1)$
		\begin{equation}\label{concave1main}
			\sum_{h=2}^m h(h-1)t^{h-2} h^{-\frac{1}{1+2\eps}}<\frac{2\eps_m(1-\sfrac{1}{\tilde{\mu}_m})}{[1-t(1-\sfrac{1}{\tilde{\mu}_m})]^3}.
		\end{equation}
		This implies \Cref{targetmain} for all $t\in (0,1)$, since eventually on $(0,\eta_m]$ \Cref{targetmain} holds with strict inequality directly by \Cref{linear1main}, while for all $t\in[\eta_m,1)$ we have that the difference of the left-hand and right-hand side of \Cref{targetmain}, denoted $f_m(t)$, is strictly concave (\Cref{concave1main} states precisely that $f''_m(t)<0$). Since $f_m(t)>0$ for all $t\in(0,\eta_m]$ and since it vanishes at both ends of the unit interval, it cannot vanish on $(\eta_m,1)$, meaning that there cannot be any crossings of the two sides of \Cref{targetmain} on $(\eta_m,1)$, which therefore holds everywhere. 
		
		\paragraph{Step 2.} Next, we show that with $X$ being set to be the hard instance of values $X$ of \Cref{hardval}, with $\eps>0$ fixed small enough as in \emph{Step 1}, $\EE M_{\tilde{H}_m}=\Omega (\EE M_{H_m})$ and $\EE X_{\tau_{\pi_m}}\defeq \EE Y_{\tau_{\pi_m}}^\ssup{\tilde{H}_m}\le \EE Y_{\tau_{\pi_m}}^\ssup{H_m}$ as $m$ grows, for every sequence of thresholds $\{\pi_m\}$. The first fact follows from \Cref{asympmaxunif_H_m} and observing that the law of total expectation yields
		\[\EE M_{\tilde{H}_m}=\frac{\delta_m}{\tilde{Z}_m} \EE X +\frac{Z_m}{\tilde{Z}_m}\EE M_{H_m}>\frac{Z_m}{\tilde{Z}_m}\EE M_{H_m}\sim\frac{1}{\tilde{C}}\EE M_{H_m}.\] The second fact follows from considering the survival functions of $\tilde{H}_m$ and $H_m$, respectively $\tilde{S}_m(i)$ and $S_m(i)$. Note that for all $i\ge 3$, $\tilde{S}_m(i)=S_m(i)\sfrac{Z_m}{\tilde{Z}_m}$, whereas $\tilde{S}_m(2)=S_m(2)-\sfrac{\delta_m}{\tilde{Z}_m}$. Consider next a single-threshold algorithm $\tau_\pi$, where $\pi=\pi_m$ on horizon $\tilde{H}_m$. By \Cref{optalg} (note that the notations $\tau$ and $\sigma$ used there are now swapped, as our focus here is on the original stopping rule) for any horizon $\tilde{H}_m$, there is a stopping rule equivalent to $\tau_\pi$ for the problem $\YY^\ssup{\tilde{H}_m}$, denoted as $\sigma_\pi\defeq\zeta_\pi\wedge (\tilde{H}_m+1)$, where $\zeta_\pi$ is, as per \Cref{equivstop}, an algorithm for the discounted problem $\DZ=\{\tilde{S}_m(i)X_i\}$, which stops only when $\tau_\pi$ successfully stops by the time the horizon has realised. Intuitively, we can construct $\zeta_\pi$ as the stopping rule with single threshold $\pi$ on the events $\{\tau_\pi=i\}\cap\{\tilde{H}_m\ge i\}$. Otherwise it stops at $m$ on $\{\tau_\pi>\tilde{H}_m\}$. The proof of \Cref{optalg} shows that this can be done, so that $\zeta_\pi$ is adapted to $\XX$, and as a result, as $m$ grows, 
		\[\EE Y_{\tau_\pi}^\ssup{\tilde{H}_m}=\EE\sum_{i=1}^m \tilde{S}_m(i)X_i\II_{\{\zeta_\pi=i\}}\le\EE\sum_{i=1}^m S_m(i)X_i\II_{\{\zeta_\pi=i\}}=\EE Y_{\tau_\pi}^\ssup{H_m}.\]
		
		\paragraph{Step 3.} Finally, suppose by contradiction that a constant-approximation exists on the $\PGFd$ class. Then there exist a sequence of thresholds $\{\pi_m\}$ and $c>0$ such that $\tau_{\pi_m}$ is a $\sfrac{1}{c}$-approximation on $\tilde{\mathcal{H}}_M(\eps)\subset\PGFd$ by \emph{Step 1}. By \Cref{hardsingle} for all $\eps>0$ fixed small enough, there exists a subsequence $\{\pi_{m_j}\}$ such that 
		$r_{m_j}$, the gambler-to-prophet ratio of $\EE Y_{\tau_{\pi_{m_j}}}^\ssup{H_{m_j}}$ over $\EE M_{H_{m_j}}$, vanishes. Combining this with \emph{Step 2}'s asymptotics yields the following contradiction as $j$ grows:
		\begin{equation*}
			c\le\frac{ \EE Y_{\tau_{\pi_{m_j}}}^\ssup{\tilde{H}_{m_j}}}{\EE M_{\tilde{H}_{m_j}}}=\bigO( r_{m_j})\longrightarrow0.
		\end{equation*}
	\end{proof}
	
	\subsection{The competitive Secretary Problem rule}
	In this section we show that a straightforward adaptation of the SP stopping rule with deterministic horizon $m$ is competitive on the horizons $H_m$ defined in \Cref{hardhorizon}. Consider any instance of the values $X$ which is nonnegative, integrable and continuous random variable.\footnote{The hypothesis of absolute continuity is to ensure that there are no ties in the ranking process involved in the SP rule. It is possible to also allow discontinuous random variables, by exploiting the following tie-breaking procedure: consider $(U_1,\ldots,U_m)$, where $U_i$ are \textit{iid} copies of $U\sim\Unif(0,1)$. Then define $\XX=(X_1,U_1,\ldots,X_m,U_m)$, and instead of the usual order relationship that gives rise to the usual ranking for SP, adopt the following ranking order: if $X_i<X_j$, $(X_i,U_i)\preccurlyeq(X_j,U_j)$, whereas if $X_i=X_j$ and $U_i<U_j$, then by breaking ties thanks to the uniform random variables we say that, again, $(X_i,U_i)\preccurlyeq(X_j,U_j)$. Since the uniform random variable are continuous, there are almost surely no ties by ranking via $\preccurlyeq$ instead of $<$. For simplicity, we present the argument only for the continuous case and by using the standard $\le$ as a ranking order.} Consider the process $\XX\defeq(X_1,\ldots,X_m)$, where $X_i$ are \textit{iid} copies of $X$. We equivalently characterise it in the context of SP via a uniform random permutation $\pi:\Omega\longrightarrow S_m$, where $S_m$ is the set of all permutations of $[m]$ and $\Omega$ the supporting probability space. In online notation, $\pi=(\pi_1,\ldots,\pi_m)$. Consider $\XX^\pi\defeq(X_{\pi_1},\ldots,X_{\pi_m})$. Since $\XX$ is exchangeable, $\XX^\pi\sim\XX$. %\footnote{Since $\XX$ is exchangeable, for every $\sigma\in S_m$, conditioning $\XX^\pi$ on $\{\pi=\sigma\}$ is distributionally equivalent to $\XX$, formally $(\XX^\pi|\pi=\sigma)\sim\XX$. Therefore, by the law of total probability with respect to the partitioning provided by the element of $S_m$, we have that $\XX^\pi\sim\XX$.} 
	Consider now the waiting time $r_m-1$ for the SP stopping rule, denoted as $\zeta_{r_m}$, and recall that
	\begin{equation}\label{waitingtime}
		r_m\sim \frac{m}{e}.
	\end{equation}
	Consider $\XX^\pi$ conditionally on (the $\sigma$-algebra generated by) $\XX$, denoted as $(\XX^\pi|\XX)$. This is the process of realised values arriving in uniform random order, and constitutes therefore the usual setup for SP. In this framework, the classic result of \cite{Lind61} can be restated as 
	\begin{equation}\label{1/e}
		\PP\left(X_{\pi_{\zeta_{r_m}}}=M_m|\XX\right)\ge \frac{1}{e}.
	\end{equation} 
	Given the rule $\zeta_{r_m}$, denote the event of winning (that is, choosing the best) as $W\defeq\{X_{\pi_{\zeta_{r_m}}}=M_m\}$ and the event of winning at time $i$ as $W_i\defeq \{i=\inf\{j\ge r_m:\:X_{\pi_j}=M_m\}\}$. Note that by partitioning $W$ conditionally on $\XX$, 
	we have that 
	\begin{equation}\label{W}
		\PP(W|\XX)=\sum_{i=r_m}^m \PP(W_i|\XX).
	\end{equation}
	%since it is the probability of winning in the SP
 	The proof of \Cref{1/e} relies on the following fact, which we will also exploit:
	\begin{equation}\label{W_i}
		\PP\left(i=\inf\{j\ge r_m:\:X_{\pi_j}=M_m\}|\XX\right)=\frac{r_m-1}{m}\frac{1}{i-1}.
	\end{equation}   
When considering RH, instead of working with $\XX$ as usual, we start from $\XX^\pi$, by defining accordingly $Y_i^\ssup{H_m}\defeq X_{\pi_i}\II_{\{H_m\ge i\}}$. This is equivalent, since for every stopping rule $\tau\in\TT^*$, $\EE X_\tau=\EE X_\tau^\pi=\EE Y_\tau^{\ssup{H_m}}$. Yet it will be formally advantageous in the following proof of the main result of the section. The details of the argument are provided in \Cref{supphardsecretaryfuture}.
%The advantage of this formulation is that, although redundant from the point of view of RH, the layer of randomness introduced with the random permutations brings into light the component of the structure of this problem, to which \Cref{1/e} can be applied. 
	
	\begin{proof}[Idea of the proof of \Cref{future}]
		Consider the stopping rule $\tau_m\defeq\zeta_{r_m}\wedge(H_m+1)\in\TT^*$. By \Cref{optequiv} in \Cref{optalg} firstly, and the law of total expectation with respect to $\XX$ secondly, we have that
		\[\EE Y_{\tau_m}^\ssup{H_m}=\EE \left[S_m(\zeta_{r_m})X_{\pi_{\zeta_{r_m}}}\right]=\EE\left\lbrace\EE\left[S_m(\zeta_{r_m})X_{\pi_{\zeta_{r_m}}}|\XX\right]\right\rbrace.\] Since at any given step (greater than $r_m-1$) stopping at the overall maximum is a subevent of stopping at the relative maximum, \[\EE\left[S_m(i)X_{\pi_i}\II_{\{\zeta_{r_m}=i\}}|\XX\right]\ge \EE\left[S_m(i)X_{\pi_i}\II_{W_i}|\XX\right]=S_m(i)M_m\PP(W_i|\XX).\] Therefore, \[\EE\left[S_m(\zeta_{r_m})X_{\pi_{\zeta_{r_m}}}|\XX\right]\ge M_m\sum_{i=r_m}^m S_m(i)\PP(W_i|\XX).\] By \Cref{asympnorm} and comparison of the summation with the corresponding integral as in \Cref{asympmax_H_m}, we estimate the survival function of $H_m$ as $m\longrightarrow\infty$: recalling \Cref{waitingtime}, for all $i>r_m\sim\sfrac{m}{e}$ we have that \[S_m(i)>1- \left(\frac{i-1}{m+1}\right)^{\frac{2\eps}{1+2\eps}}.\] Define the sequence of functions \[f_m(\eps)\defeq(m+1)^{-\frac{2\eps}{1+2\eps}}\frac{r_m}{m}\int_{r_m-2}^{m-1} x^{-\frac{1}{1+2\eps}}dx\] and the function \[g(\eps)\defeq \frac{1}{e}\left\lbrace1-\left[1+\frac{1}{2\eps}\right]\left(1-e^{-\frac{2\eps}{1+2\eps}}\right)\right\rbrace.\] For every fixed $\eps>0$, \[\sum_{i=r_m}^m S_m(i)\PP(W_i|\XX)\ge \sum_{i=r_m}^m \PP(W_i|\XX)-\sum_{i=r_m}^m \left(\frac{i-1}{m+1}\right)^{\frac{2\eps}{1+2\eps}}\PP(W_i|\XX)\ge\frac{1}{e}-f_m(\eps)\longrightarrow g(\eps)\] as $m\longrightarrow\infty$, where we used \Cref{W,W_i,1/e} and the usual integral estimate in the second inequality, whereas the asymptotics follows by \Cref{waitingtime}, the fact that 
		\[f_m(\eps)\longrightarrow f(\eps)\defeq \frac{1}{e}\left[1+\frac{1}{2\eps}\right]\left(1-e^{-\frac{2\eps}{1+2\eps}}\right),\] 
		and $g(\eps)=\sfrac{1}{e}-f(\eps)$. It is crucial that $g(\eps)$ is positive and strictly increasing on $(0,\infty)$: even though as $\eps\longrightarrow0$, $f(\eps)=\sfrac{1}{e}+\bigO\left(\eps\right)$, yielding that $g(\eps)=\bigO(\eps)$, no matter how small, it still provides a constant approximation for any $\eps$ fixed.
		
		Putting together all the above yields, as $m\longrightarrow\infty$, $\EE[S_m(\zeta_{r_m})X_{\pi_{\zeta_{r_m}}}|\XX]\ge (g(\eps)-\delta)M_m$, where $\delta\longrightarrow 0^+$ as $m\longrightarrow\infty$. This implies the final estimate \[\EE X_{\tau_m}\defeq\EE Y_{\tau_m}^\ssup{H_m}\ge (g(\eps)-\delta)\EE M_m.\] Fix $\eps>0$ small, and consider horizons $\{H_m\}_{m\ge M}$ as defined in \Cref{hardhorizon}, where $M=M(\eps)\in\NN$ is large enough. By \Cref{hardsingle}, the RH problem under this family does not admit single-threshold constant-approximations for the fixed $\eps$, assuming it is small enough (the hard instance for the values being the Pareto distribution defined in \Cref{hardval}). %\footnote{Indeed, in \Cref{hardsingle} we showed that this instance ensures the existence of a subsequence of vanishing gambler-to-prophet ratios $\{r_{m_j}\}$ for horizons $\{H_{m_j}\}$.} 
		On the other hand, by our final estimate, $\tau_m$ provides an approximate $g(\eps)$-approximation: for every (small) $\eps>0$ fixed, setting $M$ large enough, using $H_m\le m$ yields that for every instance $X$,
		\[\frac{\EE X_{\tau_m}}{\EE M_{H_m}}\ge g(\eps)-\delta,\]
		where $\delta=\delta(M)\longrightarrow 0$ as $M\longrightarrow\infty$.
	\end{proof}
	
	\section{A $2$-approximation for concentrated $\mathcal{L}^2$-horizons}\label{L2}
	In this final section we prove \Cref{L2horizon}, thus extending the $2$-approximation for the $\PGF$ class to sufficiently concentrated horizons. Besides its practical value, this will reveal interesting connections with deterministic prophet inequalities. We will make use of the following stochastic order.
	\begin{definition}[Laplace transform order]
		Given nonnegative random variables $H$ and $G$, $G$ is dominated by $H$ in the \emph{Laplace transform (Lt) order}, denoted as $G\prec_{Lt} H$, if $\forall\:s>0$, $\EE e^{-sG}\ge\EE e^{-sH}$.
	\end{definition}
	\begin{remark}\label{bernoulli}
		Consider a horizon $H$, having finite expectation $\mu$ and variance $\sigma^2$, and random variables $B\sim\Ber\left(\frac{\mu^2}{\mu^2+\sigma^2}\right)$ and $G\defeq\frac{\mu^2+\sigma^2}{\mu}B$. Then $G\prec_{Lt}H$.
	\end{remark}
	We are now ready to prove \Cref{L2horizon}. The details of the proof are provided in \Cref{suppL2}.
	\begin{proof}[Idea of the proof of \Cref{L2horizon}]
		Recall that the single-threshold $2$-approximation of \Cref{Ghorizon} obtains at least competitive ratio $c_{p,\bar{q}}=1-\EE\left[(1-\sfrac{1}{\mu})^H\right]=1-\EE \left( e^{-\bar{s}H}\right)$ with $\bar{s}\defeq-\log(1-\sfrac{1}{\mu})$. Since $G\prec_{Lt}H$, where $G$ is the two-point distribution of \Cref{bernoulli}, 
		%that is\[G\sim\begin{cases}0,&\text{w.p.}\:\frac{\sigma^2}{\mu^2+\sigma^2}\\\frac{\mu^2+\sigma^2}{\mu},&\text{w.p.}\:\frac{\mu^2}{\mu^2+\sigma^2},\end{cases}\]
		it follows that
		\begin{equation}\label{deterministic}
			c_{p,\bar{q}}\ge1-\EE\left( e^{-\bar{s}G}\right)=\frac{\mu^2}{\mu^2+\sigma^2}\left[1-\left(1-\frac{1}{\mu}\right)^\frac{\mu^2+\sigma^2}{\mu}\right].
		\end{equation}
		As a result, an inequality that yields a single-threshold $2-\sfrac{1}{\mu}$-approximation is
		\begin{equation}\label{replaceC}
			\frac{\mu^2}{\mu^2+\sigma^2}\left[1-\left(1-\frac{1}{\mu}\right)^\frac{\mu^2+\sigma^2}{\mu}\right]\ge\frac{1}{2-\frac{1}{\mu}},
		\end{equation}
		and straightforward estimates show that this is ensured by $e^{-x}\le1-\frac{x}{2-\sfrac{1}{\mu}}$, having defined $x\defeq 1+\sfrac{\sigma^2}{\mu^2}$. Defining constraints $y\defeq x-(2-\sfrac{1}{\mu})>-1$ and $-2<\bar{y}\defeq-(2-\sfrac{1}{\mu})<-1$, we can rewrite the condition above as $ye^y\le\bar{y}e^{\bar{y}}$. Under the constraints given, this is more explicitly rewritten as $y\le W_0(\bar{y}e^{\bar{y}})$, where $W_0$ denotes the principal branch of the Lambert function \footnote{Recall that the Lambert function is the solution to the equation $ye^y=z$ over the complex numbers, and thus a function with branches. Over $\RR$ the Lambert function is described through a principal branch, denoted as $W_0(z)$, and a secondary branch, denoted as $W_{-1}(z)$, since, as evident from the graph of the function $ye^y$ on $(-\infty,0]$, $y=-1$ is a branching point for the solution. $W_{-1}(z)$ takes care of the branch corresponding to $y\le-1$, $W_{0}(z)$ of the one corresponding to $y\ge-1$.} . Recalling the definition of $y$ and $x$, this yields \Cref{concentration}.
	\end{proof}
	On a practical note, the Lambert function is readily simulated in various languages.\footnote{For example in Python, with standard error bounds of $10^{-8}$, via scipy.special.lambertw. Higher precision alternatives also exist (for example through the bisection method).} Thus it is possible to reliably implement the substitution of suitable values of $C>1$ for $2-\sfrac{1}{\mu}$ in \Cref{concentration}, even without large-market hypothesis (the latter was used to heuristically exemplify this fact in \Cref{contributions}). The only accuracy issue would arise near the branching point of the Lambert function, that is, when $\bar{y}\approx-1$, which is an uninteresting range for RH, since it only concerns horizons with $2-\sfrac{1}{\mu}\approx 1$, that is $\mu\approx 1$, which approaches a degenerate case at which the gambler and the prophet perform equally, since there is only one item to allocate. With this caveat, we conclude by discussing how to derive \Cref{largemarket}. 
	
	As direct consequence of substituting \textit{admissible} values of $C>1$ for $2-\sfrac{1}{\mu}$ in \Cref{replaceC} we obtain the upper-bound \[\text{CV}\le\sqrt{1-\sfrac{1}{\mu}+W_0(-Ce^{-C})}=\sqrt{C-1+W_0(-Ce^{-C})},\] which ensures a $C$-approximation. %ensuring a gambler-to-prophet ratio of $c\defeq 1/C$. It is also worth mentioning from \cite[A]{BhatSen96} it follows that horizons in the $\PGF$ class have $\text{CV}\le 1$.
	This highlights interesting degenerate cases, and establishes an important connection with the existing literature. Note that the deterministic case ($\text{CV}=0$) can be trivially recovered directly from \Cref{deterministic}, which can be reformulated as a guarantee that the competitive ratio (given by $1/C$ here) is always greater than \[\frac{1-\left(1-\mu^{-1}\right)^{\mu(1+\text{CV}^2)}}{1+\text{CV}^2}\ge\frac{1-e^{-(1+\text{CV}^2)}}{1+\text{CV}^2}.\]
	Therefore if $\text{CV}=0$, the optimal single-threshold algorithm exploited ensures a competitive ratio of $1-\sfrac{1}{e}$. This value is the global maximum of the function on the right-hand side, and coincides with the optimal competitive ratio for single-threshold algorithms for (deterministic) IID PI \cite{Esh18}[Theorem~21]. Note that the left-hand side tends to the right-hand side in the large-market limit, at which regime the admissible values of $C$ aforementioned are easily computed to be constrained to be larger than $e/(e-1)$, and therefore a competitive ratio of $1-\sfrac{1}{e}$ cannot be exceeded even in a best-case analysis (with respect to the horizon). More in general, dropping the large-market limit, the left-hand side also achieves global maximum $1-\left(1-\sfrac{1}{\mu}\right)^{\mu}$ at $\text{CV}=0$ (this is not surprising, since the deterministic horizon is no harder than the random horizon for the gambler). Hence $C$ will always be constrained to be greater than $[1-\left(1-\sfrac{1}{\mu}\right)^{\mu}]^{-1}$. This proves \Cref{largemarket} and justifies the admissibility condition.%Again, this can only tend to $1$ as $\mu\approx1$, which represents a further uninteresting degenerate case.

    \section{Conclusion}
    The contributions of this work can be divided into two main groups. The first group of results (in \Cref{Gclass,L2}) extend the distributional classes to which single-threshold algorithms can be applied, which impacts areas such as pricing and mechanism design of online auctions for the following reasons. The classes of distributions in $\PGF$ are well established across several areas, in particular in reliability theory. Furthermore, horizons classified in terms of concentration measures such as the $\text{CV}$ can be of wide use in applications, since their classification only requires the first two moments. Single-threshold algorithms are simple to implement, and thus proving their competitiveness and optimality for as many distributional classes as possible is advantageous. Although our main focus is on results for single item RH, they can be generalised to any number of items via the mechanism designed in \cite{AliBanGolMunWan20}. Furthermore, it is likely that similar results related to optimal stopping problems with $\IHR$ random horizons may benefit from a similar extension to the $\PGF$ class%(see \cite{GoldMalZe20} for a survey)
    . The key to this extension is clear from the fact that our proof, in essence, relies only on certain stochastic ordering properties of the $\IHR$ class, which is popular not only in reliability theory, but in economics and computer science. RH has several connections to interesting online stochastic matching problems \cite{AouMa23}. A potential implementation of the techniques we established in this work would be extending the use of stochastic orders to these models and see if they improve known guarantees.
        
    The second group of results (in \Cref{opt,hardsecretary}) provide a more theoretical contribution. On one end, extending the characterisation of the optimal algorithm to unbounded horizons is of help in formalising hardness results. On the other hand, it has also helped significantly in showing that there are instances which are hard for single-thresholds, but for which a multiple-thresholds algorithm such as the SP rule is competitive. This last fact, combined with the fact that single-threshold algorithms already fail on the $\PGFd$ class, suggests that the study of RH under (most likely adaptive) multiple-threshold algorithms is a new and promising area of research.
	\section*{Acknowledgements.}This research was partially supported by the EPSRC grants EP/W005573/1 and EP/X021696/1, the ERC CoG 863818 (ForM-SMArt) grant, the ANID Chile grant ACT210005 and the French Agence Nationale de la Recherche (ANR) under reference ANR-21-CE40-0020 (CONVERGENCE project). We would like to thank Jos\'{e} Correa for his precious advice, Bruno Ziliotto and Vasilis Livanos for early conversations.    
	%\newpage
	%\bibliographystyle{alpha}%STOC wants alphabetic labels in cite
	\bibliographystyle{abbrv}%for arxiv version i prefer numeric labels in cite
	\bibliography{RandomHorizon}
	\clearpage
        \appendix
        \renewcommand{\appendixpagename}{Supplementary materials}
	\appendixpage
        \section{Proofs omitted from Section \ref{opt}}\label{suppopt}
        
        \subsection{Proof of \Cref{optalg}}
        \begin{proof}[Proof of \Cref{optalg}]
		If $m<\infty$, without loss of generality we can trivially reduce the discounted problem $\DZ^\ssup{m}$, that is $Z_0, Z_1, \ldots, Z_m, 0,\ldots$, to the finite sequence $\DZ^m$, that is $Z_0, Z_1, \ldots, Z_m$ with filtration $\DF^m\defeq \{\DF_0,\DF_1,\ldots,\DF_m\}$. Thus by \cite[Theorem~2.1]{Sam96}, which uses backward induction \cite[Theorem~3.2]{ChoRobSieg71} to construct $\bar{\zeta}$, it follows that the natural candidate $\bar{\tau}\defeq \bar{\zeta}\wedge (H+1)$ yields $\Val(\YY)$.
		
		Assume $m=\infty$.	We start by constructing, for every stopping rule $\sigma$ for the problem relative to $\YY$, a stopping rule $\zeta$ for the problem relative to $\DZ$, such that $\EE Y_\sigma=\EE Y_\tau$, where $\tau\defeq\zeta\wedge(H+1)$. It will be useful to rewrite $\FF_i=\sigma(\DF_i,\HH_{i})$, where $\HH_i\defeq\sigma(\{H=0\},\ldots,\{H=i-1\})$. The construction will require a suitable decomposition of the probability space $\Omega$. For every $i\in\NN$, let $E_i\defeq\{\sigma = i\}\cap\{H\ge i\}$. Since $\{\sigma = i\}\in\FF_i$, we have that $E_i\in\FF_i\cap\{H\ge i\}=\DF_i\cap\{H\ge i\}$, where the intersections denote the corresponding \textit{trace $\sigma$-algebras}. The equality is seen as follows: since  $\FF_i=\sigma(\DF_i,\HH_{i})\defeq\sigma(A\cap B,\, A\in\DF_i,\, B\in\HH_i)$, we have that $\FF_i\cap\{H\ge i\}\defeq\sigma(A\cap B\cap\{H\ge i\},\, A\in\DF_i,\, B\in\HH_i)$. By definition, for every $B\in\HH_i$, $B\cap\{H\ge i\}$ is either $\emptyset$ or $\{H\ge i\}$. Thus the only nonempty intersections are $A\cap B\cap\{H\ge i\}=A\cap\{H\ge i\}$. 
		
		Since $E_i\in\DF_i\cap\{H\ge i\}$, by definition there exists $A_i\in\DF_i$ such that $E_i=A_i\cap \{H\ge i\}$. Note that the events $\{\sigma=i\}$ constitute a partitioning of $\Omega$, while $\{H\ge i\}\supseteq\{H\ge i+1\}$. Furthermore, without loss of generality, the events $\{A_i\}$ can be assumed mutually disjoint. This can be seen by induction: suppose we have $E_i=A_i\cap \{H\ge i\}$ and $E_{i+1}=A'_{i+1}\cap \{H\ge i+1\}$, with $A'_{i+1}\cap A_i\neq\emptyset$. Since $E_i\subseteq\{\sigma=i\}$ and $E_{i+1}\subseteq\{\sigma=i+1\}$, $E_i\cap E_{i+1}=\emptyset$. Thus $\emptyset=(A_i\cap\{H\ge i\})\cap(A'_{i+1}\cap\{H\ge i+1\})\supseteq(A_i\cap\{H\ge i+1\})\cap(A'_{i+1}\cap\{H\ge i+1\})$. This means that the parts of $A'_{i+1}$ and $A_i$, which intersect, lie outside $\{H\ge i+1\}$. Define $A_{i+1}\defeq A'_{i+1}\setminus A_i$. Since $A_{i+1}$ is only missing a part of $A'_{i+1}$ that lies outside $\{H\ge i+1\}$, we also have $E_{i+1}=A_{i+1}\cap\{H\ge i+1\}$. Furthermore, $A_{i+1}=A'_{i+1}\cap A_i^c\in\DF_{i+1}$, because $A'_{i+1}\in\DF_{i+1}$ and $A_i\in\DF_i\subseteq\DF_{i+1}$. Thus we replace $A'_{i+1}$ with $A_{i+1}$ and the induction step is complete. Starting with $i=1$, and performing inductively this substitution whenever necessary, yields the mutually disjoint $\{A_i\}$.
		
		We can finally construct $\zeta$ as, informally speaking, a stopping rule that stops when $\sigma$ successfully stops by the time the horizon has realised, and does not stop otherwise. The precise measure theoretic construction is as follows: given the mutually disjoint $\{A_i\}$ as constructed above, recall that since $A_i\in\DF_i$, we can denote $C\defeq\bigcap_{i\in\NN}A_i^c=\left(\bigcup_{i\in\NN}A_i\right)^c\in\DF_\infty$, therefore we have that
		\begin{equation}\label{equivstopapp}
			\zeta(\omega)\defeq\begin{cases}i,&\omega\in A_i,\,\forall i\in\NN\\
			\infty,&\omega\in C\end{cases}
		\end{equation}
		is a valid stopping rule in $\TT$. Moreover, $\tau\defeq \zeta\wedge(H+1)$ is a valid finite stopping rule in $\TT^*$, since for every $i\in\NN$, both $\{\zeta=i\}$ and $\{H+1=i\}=\{H=i-1\}$ are adapted to $\FF_i=\sigma(\DF_i,\HH_{i})$, that is both $\zeta$ and $H+1$ are valid stopping rules for the problem, and the minimum of stopping rules is a stopping rule.
		
		The stopping rules $\sigma$ and $\tau$ are equivalent, that is $\EE Y_\sigma=\EE Y_\tau$. In fact, for every $i\in\NN$, $\EE Y_i\II_{\{\sigma=i,\,H<i\}}=0=\EE Y_i\II_{\{\tau=i,\,H<i\}}$ by definition of $Y_i$, which is null on $\{H<i\}$. We also have that $\EE Y_\infty=0$. Following the standard convention that  \[\EE Y_{\sigma}\defeq \EE\sum_{i\in\NN}Y_i\II_{\{\sigma=i\}}+\EE Y_\infty,\] the equivalence follows from showing that for every $i\in\NN$, $\EE Y_i\II_{\{\sigma=i,\,H\ge i\}}=\EE Y_i\II_{\{\tau=i,H\ge i\}}$, since the expectation can be exchanged with the summation in our setting (as it can be seen in the next paragraph, it is enough to use \textit{Fubini's theorem}). By construction of the events $\{A_i\}$ and the stopping rules $\zeta$ and $\tau$, we have that, indeed, \[\EE Y_i\II_{\{\sigma=i,\,H\ge i\}}=\EE Y_i\II_{A_i\cap \{H\ge i\}}=\EE Y_i\II_{ \{\zeta=i,\,H\ge i\}}=\EE Y_i\II_{\{\tau=i,\,H\ge i\}}.\]
		
		We show first that $\Val(\YY)=\Val(\DZ)$. First note that, by the construction of the stopping rule $\zeta\in\TT$ in \Cref{equivstopapp} in terms of the stopping rule $\sigma\in\TT^*$, we have that, since $Y_\infty=0$,
		\[Y_\sigma=\sum_{i\in\NN}X_i\II_{\{H\ge i\}}\II_{\{\sigma=i\}}=\sum_{i\in\NN}X_i\II_{E_i}=\sum_{i\in\NN}X_i\II_{A_i\cap\{H\ge i\}}=\sum_{i\in\NN}X_i\II_{\{H\ge i\}}\II_{\{\zeta=i\}},\]
		thus yielding
		\begin{align}\label{optequivapp'}
			\EE Y_\sigma&=\sum_{i\in\NN}\EE X_i\II_{\{H\ge i\}}\II_{\{\zeta=i\}}=\sum_{i\in\NN}\EE X_i\II_{\{\zeta=i\}}\EE_{\DF_i}\II_{\{H\ge i\}}\notag\\&=\sum_{i\in\NN}S(i)\EE X_i\II_{\{\zeta=i\}}=\EE\sum_{i\in\NN}S(i) X_i\II_{\{\zeta=i\}}=\EE Z_\zeta.
		\end{align}
		Here the expectation can be exchanged with the summation since \[\sum_{i\in\NN}S(i)\EE X_i\II_{\{\zeta=i\}}\leq\nu \sum_{i\in\NN}S(i)=\nu\mu<\infty.\] Secondly we note that, by a similar computation, for any stopping rule $\zeta\in\TT$, $\tau\defeq\zeta\wedge(H+1)$ is such that
		\begin{align}\label{optequivapp}
			\EE Z_{\zeta}&=\EE\sum_{i\in\NN}S(i) X_i\II_{\{\zeta=i\}}=\sum_{i\in\NN}\EE X_i\II_{\{\zeta=i\}}\EE_{\DF_i}\II_{\{H\ge i\}}=\EE\sum_{i\in\NN} X_i\II_{\{\zeta=i\}}\II_{\{H\ge i\}}\notag\\&=\EE\sum_{i\in\NN} X_i\II_{\{\tau=i\}}\II_{\{H\ge i\}}=\EE\sum_{i\in\NN} Y_i\II_{\{\tau=i\}}=\EE Y_{\tau}.
		\end{align}
		These two facts together imply that it is not possible to have $\sup_{\sigma\in\TT^*}\EE Y_\sigma\neq \sup_{\zeta\in\TT}\EE Z_\zeta$. Indeed, we cannot have that $\sup_{\sigma\in\TT^*}\EE Y_\sigma> \sup_{\zeta\in\TT}\EE Z_\zeta$, because otherwise there would be some $\sigma\in\TT^*$ such that $\EE Y_\sigma>\EE Z_\zeta$ for all $\zeta\in\TT$, but this is not possible since by \Cref{optequivapp'} for every $\sigma\in\TT^*$ there is a $\zeta\in\TT$ such that $\EE Y_\sigma = \EE Z_\zeta$.  We cannot have that $\sup_{\sigma\in\TT^*}\EE Y_\sigma< \sup_{\zeta\in\TT}\EE Z_\zeta$ either, because otherwise there would be some $\zeta\in\TT$ such that $\EE Y_\sigma<\EE Z_\zeta$ for all $\sigma\in\TT^*$, but this is not possible since by \Cref{optequivapp} for every $\zeta\in\TT$,  $\zeta\wedge(H+1)\defeq\tau\in\TT^*$ is such that $\EE Y_\tau = \EE Z_\zeta$. Therefore $\Val(\YY)=\Val(\DZ)$ follows by definition.
		
		Given an optimal stopping rule $\bar{\zeta}\in\TT$, we have that $\bar{\tau}\defeq\bar{\zeta}\wedge(H+1)$ provides an optimal stopping rule in $\TT^*$ by \Cref{optequivapp} with $\zeta=\bar{\zeta}$ and $\tau=\bar{\tau}$. Since \[\EE \sup_{i\in\NN} Z_i\leq\EE\sum_{i\in\NN} Z_i = \sum_{i\in\NN}S(i)\EE X_i = \mu \nu<\infty\] and $\limsup_{i\longrightarrow\infty} Z_i=Z_\infty=0$, the Snell envelope of $\DZ$, denoted $\VV\defeq\{V_i\}_{i\in\NN_0}$, satisfies \Cref{dpe} by \Cref{snell} and yields the optimal stopping rule \[\bar{\zeta}=\inf\{i\in\NN_0:\,Z_i\ge \EE_{\DF_i}V_{i+1}\}\]
		by \Cref{snellstop}.
	\end{proof}
	
	\subsection{Proof of \Cref{snellfuture}}
	\begin{proof}[Proof of \Cref{snellfuture}]
		Note that $\esup_{\zeta\in\TT_{i+1}}\EE_{\DF_i}Z_\zeta\le \EE_{\DF_i}V_{i+1}$ is straightforward. By definition, $\EE_{\DF_{i+1}}Z_\zeta\le V_{i+1} \as$ for every $\zeta\in\TT_{i+1}$. Thus, taking expectations, $\EE_{\DF_{i}}Z_\zeta\le \EE_{\DF_{i}}V_{i+1} \as$ for every $\zeta\in\TT_{i+1}$.
		
		The argument consists in finding a lower bound to $\esup_{\zeta\in\TT_{i+1}}\EE_{\DF_i}Z_\zeta$ which we then show to converge to $\EE_{\DF_i}V_{i+1}$, yielding the claim by the \textit{squeeze theorem}. This is done by exploiting \textit{regular stopping rules}. We briefly recall a few standard results regarding these, which will be used. Let $i\in\NN_0$.
		\begin{enumerate}[i), noitemsep]
			\item A stopping rule $\zeta\in\TT_i$ is \textit{regular from $i$ on}, if for every $k\ge i$, $\EE_{\DF_k}Z_\zeta>Z_k \as$ on $\{\zeta>k\}$.
			\item If $\EE\sup_{i\in\NN_0}Z_i<\infty$, for any stopping rule $\zeta\in\TT_i$ there exists a stopping rule $\zeta'\in\TT_i$ regular from $i$ on, such that $\EE_{\DF_i}Z_{\zeta'}\ge \EE_{\DF_i}Z_{\zeta} \as$
			\item If $\EE\sup_{i\in\NN_0}Z_i<\infty$ and both stopping rules $\zeta\in\TT_i$ and $\zeta'\in\TT_i$ are regular from $i$ on, then so is $\zeta''\defeq \zeta\vee\zeta'$ and $\EE_{\DF_i}Z_{\zeta''}\ge \EE_{\DF_i}Z_{\zeta}\vee\EE_{\DF_i}Z_{\zeta'} \as$
		\end{enumerate}
		By a standard property of the essential supremum, there is a sequence of stopping rules in $\TT_{i+1}$, $\{\zeta_k\}_{k\in\NN}$, such that 
		\begin{equation}\label{mon}
			V_{i+1}=\sup_{k\in\NN}\EE_{\DF_{i+1}}Z_{\zeta_k}\uparrow \max_{i\le j\le k}\EE_{\DF_{i+1}}Z_{\zeta_j}
		\end{equation} 
		as $k$ grows. By (ii), to every stopping rule $\zeta_j$ it corresponds a $\zeta'_j$ regular from $i+1$ on, such that $\EE_{\DF_i}Z_{\zeta'}\ge \EE_{\DF_i}Z_{\zeta}$. Let $\zeta''_k\defeq \max_{j\in [k]}\zeta'_j\in\TT_{i+1}$. By (iii), $\EE_{\DF_{i+1}}Z_{\zeta''_k}\ge \max_{j\in [k]}\EE_{\DF_{i+1}}Z_{\zeta'_j}$. Then 
		\[\esup_{\zeta\in\TT_{i+1}}\EE_{\DF_i}Z_\zeta\ge \EE_{\DF_i}Z_{\zeta''_k}= \EE_{\DF_i}\EE_{\DF_{i+1}}Z_{\zeta''_k}\ge \EE_{\DF_i}\max_{j\in [k]}\EE_{\DF_{i+1}}Z_{\zeta'_j}\ge\EE_{\DF_i}\max_{j\in [k]}\EE_{\DF_{i+1}}Z_{\zeta_j}\longrightarrow\EE_{\DF_i}V_{i+1} \as\]
		as $k$ grows, where the limit follows by \Cref{mon} and the \textit{monotone convergence theorem}.
	\end{proof}
 
	\section{Proofs omitted from Section \ref{Gclass}}\label{suppGclass}
	
	\subsection{Proof of \Cref{algp}}\label{suppGclassalgp}
	\begin{proof}[Proof of \Cref{algp}]
		Consider the following decomposition. For every $h\in\NN$, let
		\begin{align*}
			E_\pi^h&\defeq \{H=h\}\cap\bigcup_{i\in[h]}\{i=\inf\{j\in[h]:\,X_j\ge \pi\}\},\\
			E_0^h&\defeq \{H=h\}\cap\bigcap_{i\in[h]}\{X_i<\pi\}.
		\end{align*}
		Then $\{H=h\}= E_\pi^h\cup E_0^h$. Intuitively, $E_\pi^h$ is the part of the event $\{H=h\}$ on which the algorithm $\tau_\pi$ stops, achieving at least $\pi$, while $E_0^h$ is the part on which it fails to stop, achieving zero. By partitioning the probability space $\Omega=\bigcup_{h\in\NN}(E_\pi^h\cup E_0^h)$ we can rewrite, exploiting that on $E_0^h$, $Y_{\tau_\pi}=Y_\infty =0$,  
		\[\EE Y_{\tau_\pi}=\sum_{h\in\NN}\EE\left( Y_{\tau_\pi}\II_{E_\pi^h}\right).\] 
		The exchange of expectation and summation holds \textit{a posteriori}, since we will obtain a finite quantity for the right-hand side. By the independence of $H$ and $X$,
		\begin{align*}
			\EE \left(Y_{\tau_\pi}\II_{E_\pi^h}\right)&=\EE\left( Y_{\tau_\pi}\II_{\{H=h\}}\sum_{i\in[h]}\II_{\{i=\inf\{j\in[h]:\,X_j\ge \pi\}\}}\right)=\sum_{i\in[h]}\EE\left( X_i\II_{\{H=h\}}\II_{\{i=\inf\{j\in[h]:\,X_j\ge \pi\}\}}\right)\\&=\PP(H=h)\sum_{i\in[h]}\EE \left(X_i\II_{\{X_i\ge \pi\}}\prod_{j\in[i-1]}\II_{\{X_j< \pi\}}\right)\\&=\PP(H=h)\sum_{i\in[h]}\PP^{i-1}(X<\pi)\EE \left(X_i\II_{\{X_i\ge \pi\}}\right).
		\end{align*}
		Since $\EE \left(X_i\II_{\{X_i\ge \pi\}}\right)=\EE(X|X\ge\pi)\PP(X\ge\pi)$ for all $i\in\NN$ and \[\sum_{i\in[h]}\PP^{i-1}(X<\pi)=\frac{1-\PP^h(X<\pi)}{\PP(X\ge\pi)},\]
		it follows that $\EE \left(X_{\tau_\pi}\II_{E_\pi^h}\right)=\EE(X|X\ge\pi)[1-\PP^h(X<\pi)]\PP(H=h)$.
		Hence \[\EE X_{\tau_\pi}=\EE Y_{\tau_\pi}=\EE(X|X\ge\pi)\sum_{h\in\NN}[1-\PP^h(X<\pi)]\PP(H=h)=\EE(X|X\ge\pi)\EE[1-\PP^H(X<\pi)],\] which yields the result.
	\end{proof}
	
	\subsection{Proof of \Cref{max} and discussion on amendments to \cite{AliBanGolMunWan20}}\label{suppGclassmax}
	Before giving the details of the proof of \Cref{max}, we further discuss the need for a variational argument to upper-bound $\EE M_H$ for any continuous $X$. The key idea at the base of this argument is similar to that of \cite[Theorem~3.2]{AliBanGolMunWan20}, but they do not state what restrictions their argument imposes on $X$. As a matter of fact, they achieve the upper bound only for $X$ with discrete\footnote{For the value distribution, unlike for horizons, we use \textit{discrete} in the standard, more general, sense.} distribution and admitting a value $p$ such that $\PP(X\geq p)=\sfrac{1}{\mu}$, where $\mu=\EE H$. If their claim was true as stated, the resulting algorithm would not have a consistently defined threshold for more general distributions. It is not clear how these results should be extended from this special class to a more general one, nor it is mentioned by the authors. Perhaps, they have in mind discretised continuous distributions. Yet, this would mean that by some approximation or bounding argument, their result only applies to continuous distributions. We could not find a straightforward way, which might suggest this type of argument as trivial. Examples from the literature on other stochastic optimisation problems, such as the bilateral trade problem, adopt a similar approach, of discretising and then bounding, but the proofs of the bounding step are given and usually nontrivial. See for example \cite{CaiWu23}.
    \begin{proof}[Proof of \Cref{max}]
		Let $f(x)$ be the pdf and $F(x)$ the cdf of $M_H$. The pdf exists by absolute continuity of $X$ and $f=F'$. Let $v(x)$ be the pdf and $V(x)$ the cdf of the value $X$. We are seeking a upper bound on $\EE M_H=\int_0^\infty x f(x)dx\eqdef I(f)$, which will therefore be the functional to be maximised over $f\in\text{L}_+^1([0,\infty))$, the set of nonnegative Lebesgue-integrable functions on $[0,\infty)$. Since $f$ is a density supported on the nonnegative reals, we also have a first constraint $\int_0^\infty f(x)dx=1$ for the maximisation problem. Furthermore the law of total probability holds. Thus for every $x\in[0,\infty)$, as $\eps\longrightarrow 0$,
		\begin{align*}
			f(x)&\longleftarrow\frac{\PP(x< M_H\le x+\eps)}{\eps}=\sum_{h\in\NN}\frac{\PP(x< M_h\le x+\eps)}{\eps}\PP(H=h)\\&\le\sum_{h\in\NN}h\frac{\PP(x< X\le x+\eps)}{\eps}\PP(H=h)\longrightarrow\mu v(x),
		\end{align*}
		which implies that a second constraint holds: $0\le f(x)\le\mu v(x)$ for all $x\in[0,\infty)$. The inequality above can be shown as follows. Denote by $F_h(x)$ the conditional cdf of $M_H$, given the event $\{H=h\}$, for all $h\in\NN$ such that $\PP(H=h)>0$. Then for all $x\in(0,\infty)$,
		\begin{align*}
			\PP(x< M_h\le x+\eps)&=F_h(x+\eps)-F_h(x)=V^h(x+\eps)-V^h(x)=[V(x+\eps)-V(x)]\cdot\\&\sum_{i=0}^{h-1}V^{h-i}(x+\eps)V^i(x)\le h[V(x+\eps)-V(x)]=h\PP(x<X<x+\eps).
		\end{align*}
		For $x=0$, it follows trivially from $\PP(0<M_h\le \eps)=F_h(\eps)=V^h(\eps)\le hV(\eps)=h\PP(0<X<\eps)$.
		
		In conclusion we seek to solve the constrained maximisation variational problem
		\begin{align}
			I(f)&\defeq \int_0^\infty x f(x)dx\notag\\&\int_0^\infty f(x)dx=1\notag\\&0\le f(x)\le\mu v(x),\:\forall\,x\in[0,\infty)\notag\\&f\in\text{L}_+^1([0,\infty)).\label{var}
		\end{align}
		The functional $I(f)$ is trivially maximised by choosing a function $\bar{f}$ that is as large as possible on as many large values as possible, since they contribute the most to the integral, that is
		\[\bar{f}(x)=\begin{cases}
			\mu v(x),&x\geq p\\
			0,&0\le x <p,
		\end{cases}\]
		where $p=\inf \left\lbrace y\in[0,\infty):\,\int_y^\infty \mu v(x)dx= 1\right\rbrace$. Note that the condition is equivalent to \[p=\inf \left\lbrace y\in[0,\infty):\,\PP(X\ge y)= \sfrac{1}{\mu}\right\rbrace,\] and since the survival function of $X$ is assumed continuous, there is only one value $y$ yielding equality, meaning that $p$ is the only value such that $\PP(X\ge p)= \sfrac{1}{\mu}$. The maximal value of the variational problem is therefore \[I(\bar{f})=\int_p^\infty x \mu v(x)dx=\int_p^\infty x \frac{v(x)}{\PP(X\ge p)}dx=\frac{\EE \left(X\II_{\{X\ge p\}}\right)}{\PP(X\ge p)}=\EE(X|X\ge p).\]
	\end{proof}

	\subsection{Proof of \Cref{algran}}\label{suppGclassalgran}
	\begin{proof}[Proof of \Cref{algran}]
		Consider the following decomposition. For every $h\ge 1$, let
		\begin{align*}
			E_\pi^h&\defeq \{H=h\}\cap\bigcup_{i\in[h]}\{i=\inf\{j\in[h]:\,X_j\ge \pi,\,B_j=1\}\},\\
			E_0^h&\defeq \{H=h\}\cap\bigcap_{i\in[h]}\{i=\inf\{j\in[h]:\,X_j\ge \pi,\,B_j=1\}\}^c.
		\end{align*}
		Then $\{H=h\}= E_\pi^h\cup E_0^h$. Intuitively, $E_\pi^h$ is the part of the event $\{H=h\}$ on which the randomised algorithm $\tau_\pi$ stops, achieving at least $\pi$, while $E_0^h$ is the part on which it fails to stop, achieving zero. By partitioning the probability space $\Omega=\bigcup_{h\in\NN}(E_\pi^h\cup E_0^h)$ we can rewrite, exploiting that on $E_0^h$, $Y_{\tau_\pi}=Y_\infty =0$,  
		\[\EE Y_{\tau_\pi}=\sum_{h\in\NN}\EE Y_{\tau_\pi}\II_{E_\pi^h}.\] 
		The exchange of expectation and summation holds \textit{a posteriori}, since we will obtain a finite quantity for the right-hand side. 
		
		From a combinatorial point of view, note that for every $i\in\NN$,
		\[\{i=\inf\{j\in[h]:\: X_j\ge \pi,\: B_j=1\}\}=\bigcup_{k=1}^{2^{i-1}}E_k^i\]
		having defined, informally, 
		\begin{align*}
			E_1^i&\defeq\{X_1<\pi,\:\ldots,\: X_{i-1}<\pi,\: X_i\ge\pi,\: B_i=1\}\\
			E_2^i&\defeq\{X_1\ge\pi,\: B_1=0,\: X_2<\pi\:\ldots,\: X_{i-1}<\pi,\: X_i\ge\pi,\: B_i=1\}\\
			\vdots&\\
			E_{2^{i-1}}^i&\defeq\{X_1\ge\pi,\: B_1=0,\: \ldots,\: X_{i-1}\ge\pi,\: B_{i-1}=0,\: X_i\ge\pi,\: B_i=1\},
		\end{align*}
		where the $2^{i-1}$ events run through all possible combinations of the first $i-1$ steps not allowing the randomised stopping rule to stop. For example, assuming $i\ge 3$ and denoting $^nC_k\defeq\binom{n}{k}$,
		\begin{itemize}[noitemsep]
			\item $E_1^i=E_{^{i-1}C_0}^i$ is the event where none of the first $i-1$ steps has a random variable surpassing the threshold $\pi$;  
			\item $E_{^{i-1}C_0+1}^i,\: \ldots,\: E_{^{i-1}C_0+^{i-1}C_1}^i$ are all those events having only one of the first $i-1$ steps surpassing the threshold $\pi$ and yet the corresponding coin flip represented by the Bernoulli variable not allowing  the randomised rule to stop at the value;
			\item$E_{^{i-1}C_0+^{i-1}C_1+1}^i,\:\ldots,\: E_{^{i-1}C_0+^{i-1}C_1+^{i-1}C_2}^i$ are all those events having only two of the first $i-1$ steps surpassing the threshold $\pi$ and yet the corresponding coin flips not allowing the randomised rule to stop at the values and so on.
		\end{itemize} 
		This yields a total of $\sum_{j=0}^{i-1}\,^{i-1}C_j=(1+1)^{i-1}$ disjoint events. Denote $E_k^i\defeq\overline{E}_k^i\cap\{X_i\ge\pi,\: B_i=1\}$, where the overline denotes the complement. By the independence of $X$, $B$ and $H$,
		\begin{align*}
			\EE \left(Y_{\tau_{\pi,q}}\II_{E_\pi^h}\right)&=\EE\left( Y_{\tau_{\pi,q}}\II_{\{H=h\}}\sum_{i\in[h]}\II_{\{i=\inf\{j\in[h]:\,X_j\ge \pi,\,B_j=1\}\}}\right)\\&=\sum_{i\in[h]}\EE \left(X_iB_i\II_{\{H=h\}}\II_{\{i=\inf\{j\in[h]:\,X_j\ge \pi\}\}}\right)\\&=\PP(H=h)\sum_{i\in[h]}\EE\left( X_iB_i\sum_{k=1}^{2^{i-1}}\II_{E_k^i}\right)=\PP(H=h)q\EE\left[ X\II_{\{X\ge\pi\}}\sum_{i\in[h]}\sum_{k=1}^{2^{i-1}}\PP(\overline{E}_k^i)\right].
		\end{align*}
		Since $\EE\left( X\II_{\{X\ge \pi\}}\right)=\EE(X|X\ge\pi)\PP(X\ge\pi)$ and \[\sum_{k=1}^{2^{i-1}}\PP(\overline{E}_k^i)=\sum_{l=0}^{i-1}\,^{i-1}C_l\PP^{i-1-l}(X<\pi)[(1-q)\PP(X\ge\pi)]^l=[1-q\PP(X\ge\pi)]^{i-1},\]
		it follows that $\EE\left( X_{\tau_\pi}\II_{E_\pi^h}\right)=\EE(X|X\ge\pi)\{1-[1-q\PP(X\ge\pi)]^h\}\PP(H=h)$.
		Hence 
		\begin{align*}
			\EE X_{\tau_{\pi,q}}&=\EE Y_{\tau_{\pi,q}}=\EE(X|X\ge\pi)\sum_{h\in\NN}\{1-[1-q\PP(X\ge\pi)]^h\}\PP(H=h)\\&
			=\EE(X|X\ge\pi)\EE\{1-[1-q\PP(X\ge\pi)]^H\},
		\end{align*}
		which yields the result.
	\end{proof}
	
	\subsection{Proof of \Cref{Ghorizon}}\label{suppGclassGhorizon}
	\begin{proof}[Proof of \Cref{Ghorizon}]
		For $X$ continuous, by \Cref{algp,max},
		\[\frac{\EE X_{\tau_p}}{\EE M_H}\ge c_p\defeq1-\EE\PP^H(X<p).\]
		For every $H\in\PGF$, it holds that $G\prec_{pgf}H$, with $G\sim\Geom(\sfrac{1}{\mu})$, which denotes the geometric distribution with mean $\mu=\EE H$. Thus, denoting $t\defeq \PP(X<p)=1-\sfrac{1}{\mu}$, 
		\[\EE \left(t^H\right)\le \EE\left( t^G\right)=\frac{1}{\mu}\sum_{h\in\NN}t^h\left(1-\frac{1}{\mu}\right)^{h-1}=\frac{t}{\mu}\frac{1}{1-t\left(1-\frac{1}{\mu}\right)}=\frac{1-\frac{1}{\mu}}{2-\frac{1}{\mu}}.\]
		Thus the result follows since this implies that 
		\[c_p\ge 1-\frac{1-\frac{1}{\mu}}{2-\frac{1}{\mu}}=\frac{1}{2-\frac{1}{\mu}}.\]
		
		For $X$ discontinuous, let $V(x)$ be the cdf of $X$. The case that is most difficult to handle by adapting the previous part of the argument, which assumes continuity, is when for some $p$, \[\PP(X\ge p)>\frac{1}{\mu}>\lim_{\eps\longrightarrow 0^+}\PP(X\ge p+\eps),\] or equivalently, \[\PP(X<p)<1-\frac{1}{\mu}<\lim_{\eps\longrightarrow0^+}\PP(X<p+\eps).\] Using the continuity from above of the cdf, this is equivalent to \[\lim_{\eps\longrightarrow0^+}V(p-\eps)<1-\frac{1}{\mu}<V(p),\] that is to having $1-\sfrac{1}{\mu}$ in correspondence of a jump of the cdf at $p$. We will therefore explicitly treat this scenario.\footnote{If $1-\sfrac{1}{\mu}$ is not in correspondence of a jump discontinuity but the distribution is nonetheless discontinuous, similar ideas can be put in place, without the issue, which will lead to the necessity of randomisation, being there.} By monotonicity, $V(x)$ has at most countably many jump discontinuities at values $\{j_k\}$, where $k$ ranges either in $\NN$ (infinitely many jumps) or in $[n]$ for some $n\in\NN$ (finitely many jumps). In the following we adopt the terminology corresponding to the infinite case, replacing \textit{sequence} with \textit{list} yields the corresponding statements for the finite case; furthermore, $\{j_k\}$ is understood as an increasing enumeration of all the discontinuity points, and denoted as $D$. As customary in the literature concerning the analysis of discontinuous distributions, we will assume that we are not in the pathological case of $D$ dense or having accumulation points. Finally, recall that under our assumptions, there is a value $\bar{k}$ such that $p=j_{\bar{k}}$. Let $\eps(l)$ be a small enough positive monotonically vanishing sequence. Since all jumps are isolated, for every $l\in\NN$ there exists an increasing subsequence of the jump discontinuities, denoted, with slight abuse of notation, $\{j_{k_{\ell}(l)}\}\subseteq D$, such that \[D\subseteq\bigcup_{\ell}[j_{k_{\ell}(l)}-\eps(l),j_{k_{\ell}(l)}]\eqdef U_l,\]
		where the union is disjoint. Note that, for some $l\in\NN$, it could happen that $j_{k_{1}(l)}-\eps(l)<0$. We now linearly interpolate $V(x)$ on the intervals $[j_{k_{\ell}(l)}-\eps(l),j_{k_{\ell}(l)}]$, and keep it unaltered otherwise, so as to construct an approximating sequence of cdf's $V_l(x)$. We define, for every fixed $l\in\NN$,
		\[V_l(x)\defeq\begin{cases}
			V(j_{k_{\ell}(l)}-\eps(l))+(x-j_{k_{\ell}(l)}+\eps(l))\frac{V(j_{k_{\ell}(l)})-V(j_{k_{\ell}(l)}-\eps(l))}{\eps(l)}\\\forall x\in[j_{k_{\ell}(l)}-\eps(l),j_{k_{\ell}(l)}],\:\forall\ell:\: j_{k_{\ell}(l)}-\eps(l)\ge0,\\
			x\frac{V(j_{k_{1}(l)})}{j_{k_{1}(l)}}\\\forall x\in[0,j_{k_{1}(l)}],\: \text{if}\: j_{k_{1}(l)}-\eps(l)<0,\\
			V(x)\\\text{otherwise}.
		\end{cases}\]
		Denote $\DD_l$ the distribution having cdf $V_l(x)$ and denote the corresponding continuous (since $V_l(x)$ is continuous and piecewise differentiable) random variables $X^\ssup{l}\sim\DD_l$. The family $\{X^\ssup{l}\}$ is dominated by $\eps(1)+X\in\mathcal{L}^1(\Omega)$ by construction, since 
		\[\EE X^\ssup{l}\le\int_0^\infty1-V(x)dx+\eps(l)\sum_{\ell}V(j_{k_\ell(l)})-V(j_{k_\ell(l)}-\eps(l))\le \EE X +\eps(1),\]
		having used \[1-V(j_{k_\ell(l)}-\eps(l))-[1-V(j_{k_\ell(l)})]=V(j_{k_\ell(l)})-V(j_{k_\ell(l)}-\eps(l))\le\PP(j_{k_\ell(l)}-\eps(l)\le X\le j_{k_\ell(l)}),\] $\PP(X\in U_l)\le 1$ and $\eps(l)\le\eps(1)$.
		Therefore, the family $\{X^\ssup{l}\}$ is Uniformly Integrable (UI). Note also that by construction $X^\ssup{l}\overset{w}{\longrightarrow}X$. Therefore, defining $M_H^\ssup{l}\defeq \max\{X_1^\ssup{l},\ldots,X_H^\ssup{l}\}$, where $X_i^\ssup{l}\sim \DD_l$ are \textit{iid}, we have that $\{M_H^\ssup{l}\}$ is UI and $M_H^\ssup{l}\overset{w}{\longrightarrow}M_H$. Using \cite[Corollary~5]{Bill87} it follows that $\EE M_H^\ssup{l}\longrightarrow\EE M_H$ as $l\longrightarrow\infty$. Furthermore, by \Cref{max}, for every $l$ there is $p_l$ such that $\PP(X^\ssup{l}\ge p_l)=\sfrac{1}{\mu}$ and $\EE M_H^\ssup{l}\le\EE(X^\ssup{l}|X^\ssup{l}\ge p_l)$. Since by construction $p_l\longrightarrow p$ as $l\longrightarrow\infty$, the fact that $\{X^\ssup{l}\}$ is UI and $X^\ssup{l}\overset{w}{\longrightarrow}X$ implies also that $\EE(X^\ssup{l}|X^\ssup{l}\ge p_l)\longrightarrow \EE(X|X\ge p)$ as $l\longrightarrow\infty$. These facts altogether imply that also in this case $\EE M_H\le \EE(X|X\ge p)$.
		
		To conclude, by \Cref{algran} the single-threshold randomised algorithm $\tau_{p,\bar{q}}$, where \[\bar{q}\defeq\frac{1}{\mu\PP(X\ge p)},\]
		is such that 
		\[\frac{\EE X_{\tau_{p,\bar{q}}}}{\EE M_H}\ge c_{p,\bar{q}}\defeq1-\EE\left\lbrace\left[1-\bar{q}\PP(X\ge p)\right]^H\right\rbrace=1-\EE\left[\left(1-\frac{1}{\mu}\right)^H\right]\ge\frac{1}{2-\frac{1}{\mu}},\]
		where the last inequality follows as in the continuous case, by exploiting that for any $H\in\PGF$, $G\prec_{pgf}H$.
	\end{proof}
	
	\subsection{Further details on the $\PGF$ class}\label{suppGclassdetails}
    For a better intuition of how much more general the somewhat abstract $\PGF$ class is, compared with the $\IHR$ class, we mention some of its most well-known subclasses, which earned considerable interest in applications. $H$ always denotes a horizon.
	\begin{definition}
		$H$ is \emph{New Better than Used ($\NBU$)} if for every $h,k\ge2$, $S(h+k)\le S(h)S(k)$.
	\end{definition}
	\begin{definition}
		For every $h\in\NN$ the \emph{mean residual life} of $H$ is defined as
		\[m(h)\defeq\begin{cases}
			\EE(H-h|H\ge h),&S(h)>0\\
			0,&S(h)=0.\end{cases}\]
	\end{definition}
	Note that $m(1)=\mu\defeq \EE H$.
	\begin{definition}\label{HNBUEdef}
		$H$ is \emph{Harmonically New Better than Used in Expectation ($\HNBUE$)} if for all $n\in\NN$, \[\frac{n}{\sum_{h=1}^n\frac{1}{m(h)}}\le \mu.\]
	\end{definition}
	Furthermore, consider all the other classes in the tower of inclusions introduced after \Cref{algorithm} in \Cref{Gclass}. For simplicity we omit introducing further definitions, as the denominations are suggestive enough to convey the intuitive meaning, and can be easily found in the literature (see, for example, \cite{BraqRoyXie01} for definitions and some of the inclusion properties we will recall). The $\HNBUE$ class, introduced in \cite{Rol75}, is roughly speaking the largest known, which is a subclass of the $\PGF$ class \cite{Klef82}, with intuitive meaning in terms of ageing properties.
	%\begin{remark} 
		%The following tower of inclusions is well-established and known to be strict: \[\IHR\subset\IHRE\subset\HIHRE\subset\NBU\subset\NBUE\subset\HNBUE\subset\PGF.\]
	%\end{remark}
	
	\subsection{Proof of \Cref{geomprice}}\label{suppGclassgeomprice}
        \begin{proof}[Proof of \Cref{geomprice}]
		Consider the RH for $X_i\sim X$ for all $i\in\NN$, to which we add $X_0=0$ as per the framework of \Cref{opt}. By \Cref{optalg}, and keeping the same notations and constructions, $\Val(\XX)=\Val(\YY)=\Val(\DZ)=V_0$. By \Cref{dpe}, $V_0=\EE V_1=\EE (Z_1\vee\EE_{\DF_1}V_{2}).$ Note that $Z_i\defeq q^{i-1}X_i$, thus $\EE_{\DF_1}V_{2}=q\EE V_1=qV_0$. Indeed, \[\EE_{\DF_1}V_{2}=\esup_{\zeta\in\TT_{2}}\EE_{\DF_1}Z_\zeta=\esup_{\zeta\in\TT_{2}}\EE Z_\zeta=q\sup_{\zeta\in\TT_{2}}\EE \left(q^{\zeta-2 }X_\zeta\right)=q\esup_{\zeta\in\TT_{1}}\EE \left(q^{\zeta-1 }X_{\zeta}\right)=q\EE V_1,\] where we used:
		\begin{itemize} [noitemsep]
			\item \Cref{snellfuture} in the first and last equality; 
			\item that any $\zeta\in\TT_2$ is independent of $\DF_1$ and $\{X_i\}$ are \textit{iid}, implying that $X_\zeta$ is also independent of $\DF_1$, in the second equality;
			\item that $\{X_i\}$ are \textit{iid} in the second to last equality, implying that any stopping rule that has not stopped by step $2$ and undergoes discount factors shifted backwards by one, has the same expected future reward as a stopping rule that has not stopped by step $1$ and undergoes unshifted discount factors. We call this property of geometric discounts \textit{time-invariance}.
		\end{itemize} 
		We have obtained the equation 
		\begin{equation}\label{geom}
			V_0=\EE (X\vee qV_0).
		\end{equation} 
		Denoting the cdf of $X$ as $F(x)$, it is straightforward to compute that
		\[\EE (X\vee qV_0)=qV_0F(qV_0)+\int_{qV_0}^bxdF(x)=b-\int_{qV_0}^bF(x)dx,\]
		where we used integration by parts in the second equality. By plugging this in \Cref{geom} we obtain the equation
		\begin{equation}\label{geom2}
			f(V_0)\defeq V_0+\int_{qV_0}^bF(x)dx=b.
		\end{equation}
		Note that, since $V_0$ is the expected return, $0\le V_0\le b$, and thus $f(V)$ is a strictly increasing differentiable function on $[0,b]$, since by the \textit{Fundamental Theorem of Calculus} $f'(V)=1-F(qV)>0$ as $qV<b$. Observe that, by using $F(x)\le1$, we have $f(0)\le b$, and by using the increasing monotonicity of $F(x)$, we have that $f(b)\ge b+F(qb)b(1-q)\ge b$. In conclusion by the \textit{Intermediate Value Theorem} a value $V_0$ satisfying \Cref{geom2} exists. By \Cref{snellstop,snellstoprule} and the time-invariance aforementioned, it is optimal to stop at the first value greater or equal than $V_0$.
	\end{proof}
	
	\subsection{Tightness of $2$-approximations}\label{suppGclasstight}
	The following argument appears in \cite[Theorem~3.5]{AliBanGolMunWan20}. We include it for ease of reference, and because of a slight difference in the more rigorous setup we established with \Cref{geomprice}.
        \begin{theorem}[{\cite{AliBanGolMunWan20}[Theorem~3.5]}]
		RH is $2-\sfrac{1}{\mu}$-hard.
	\end{theorem}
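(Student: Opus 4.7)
The plan is to exhibit, for each $\mu>1$, a parametric family of instances with $H\sim\Geom(1-q)$ (so that $\mu=\EE H = 1/(1-q)$) for which the prophet-to-gambler ratio approaches $2-\sfrac{1}{\mu}$ from below, matching the upper bound of \Cref{Ghorizon}. By \Cref{geomprice}, the optimal stopping rule for any bounded value distribution against a geometric horizon is a single-threshold rule, so it suffices to exhibit an instance on which \emph{every} single-threshold strategy yields at most $\EE M_H / (2-\sfrac{1}{\mu})$ in the limit. First I would take the two-point distribution $\PP(X=M)=p$ and $\PP(X=1)=1-p$ with free parameter $M=M(p)>1$, because in this case only two single-threshold strategies are operationally distinct: either the threshold lies in $[0,1]$, so the gambler stops at step one with expected reward $\EE X = pM+(1-p)$, or the threshold lies in $(1,M]$, so the gambler waits for the rare high value.

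The next step is to evaluate both strategies and the prophet in closed form using the geometric pgf $\EE t^H = (1-q)t/(1-qt)$. A direct computation gives $\alpha\defeq\EE(1-p)^H = (1-p)/\bigl(1+(\mu-1)p\bigr)$, hence $1-\alpha = \mu p/\bigl(1+(\mu-1)p\bigr)$. The waiting strategy returns $M(1-\alpha)$, while the prophet obtains $M(1-\alpha)+\alpha$ since $M_H\in\{1,M\}$ takes value $M$ iff at least one of the first $H$ values is high. I would then choose $M$ to make the two gambler strategies equally valuable (since the optimal algorithm would clearly take the best of the two):
\begin{equation*}
    M(1-\alpha) = pM+(1-p)\quad\Longleftrightarrow\quad M = 1 + \frac{1}{p(\mu-1)}.
\end{equation*}
At this choice the common gambler value simplifies to $pM+1-p = 1+\sfrac{1}{(\mu-1)} = \mu/(\mu-1)$.

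Finally I would let $p\longrightarrow 0^+$ (equivalently $M\longrightarrow\infty$). The gambler's value is frozen at $\mu/(\mu-1)$ throughout the family, while $\alpha\longrightarrow 1$ and $M(1-\alpha)\longrightarrow \mu/(\mu-1)$, so the prophet's expected reward tends to $\mu/(\mu-1) + 1 = (2\mu-1)/(\mu-1)$. Dividing gives
\begin{equation*}
    \frac{\EE M_H}{\Val(\YY)} \longrightarrow \frac{(2\mu-1)/(\mu-1)}{\mu/(\mu-1)} = 2-\frac{1}{\mu},
\end{equation*}
which is the desired hardness. The only step that requires any care is verifying that, for a two-point distribution, \Cref{geomprice} indeed reduces the optimization to comparing these two threshold regimes; this is immediate because every threshold in $(0,1]$ is equivalent to threshold $1$ and every threshold in $(1,M]$ is equivalent to threshold $M$, while thresholds above $M$ yield value zero. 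No other analytic obstacle appears: everything else is arithmetic on the geometric pgf.
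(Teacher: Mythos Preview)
Your proof is correct and follows essentially the same route as the paper: a geometric horizon with a two-point value distribution, invoking \Cref{geomprice} to reduce to single thresholds, and sending $p\to 0$. Your balancing condition $M=1+\frac{1}{p(\mu-1)}$ is exactly the paper's relation $x_1=\frac{qpx_2}{1-q(1-p)}$ (taken with equality and normalised to $x_1=1$), so the two arguments coincide; your framing via ``make the two threshold strategies equal'' is arguably a cleaner way to motivate that choice than the paper's case analysis of the threshold equation.
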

	\begin{proof}
		Fix a geometric horizon having failure probability $0<q<1$, denoted as $H\sim\Geom(1-q)$, and an instance for the values $X\sim\DD_p$, where $\DD_p$ denotes a two-point distribution defined as $p\defeq\PP(X=x_2)=1-\PP(X=x_1)$, with $x_2>x_1>0$ and, as we vary $p\in(0,1)$,
		\begin{equation}\label{relation}
			x_1\le\frac{qpx_2}{1-q(1-p)}.
		\end{equation}
		Then by \Cref{geomprice} it is optimal to adopt a single-threshold
		\begin{equation}\label{optprice}
			V_0\le\frac{px_2}{1-q(1-p)},
		\end{equation}
		coinciding with the expected return. The upper bound follows from the fact that since $x_1<V_0<x_2$, we need to explore only two cases:
		\begin{itemize}[noitemsep] 
			\item If $x_1<qV_0$, $\EE (X\vee qV_0)=px_2+(1-p)qV_0$. Plugging this into \Cref{geom} yields $V_0=px_2+(1-p)qV_0$. Solving for $V_0$, we obtain \[V_0 = \frac{px_2}{1-q(1-p)}.\]
			\item If $qV_0\leq x_1$, $\EE (X\vee qV_0)=\EE X=(1-p)x_1+px_2$. Plugging this into \Cref{geom} yields $V_0=(1-p)x_1+px_2$.
		\end{itemize}
		Note that by \Cref{relation}
		\[px_2+(1-p)x_1\le\left[p+\frac{(1-p)pq}{1-q(1-p)}\right]x_2= \frac{px_2}{1-q(1-p)},\]
		which implies \Cref{optprice}. 
		
		As we vary $p\in(0,1)$, for simplicity, require \Cref{relation} with equality, that is 
		\begin{equation}\label{relationeq}
			x_1=\frac{qpx_2}{1-q(1-p)}.
		\end{equation}
		By the law of total expectation we can compute straightforwardly that
		\begin{align*}
			\EE M_H&=(1-q)\sum_{i\in\NN}\EE M_i q^{i-1}=(1-q)\left[x_1\frac{1-p}{1-(1-p)q}+x_2\left(\frac{1}{1-q}-\frac{1-p}{1-(1-p)q}\right)\right]\\&=x_1\frac{(1-p)(1-q)}{1-(1-p)q}+x_2\frac{p}{1-(1-p)q}=x_1\left(\frac{(1-p)(1-q)}{1-(1-p)q}+\frac{1}{q}\right),
		\end{align*}
		since $\EE M_i = x_1(1-p)^i+x_2[1-(1-p)^i]$ and having used \Cref{relationeq} in the last equality.
		
		Finally, we consider the gambler-to-prophet ratio. For any $\tau\in\TT^*$, by \Cref{relationeq} again, we obtain
		\[\frac{\EE X_\tau}{\EE M_H}\leq\frac{x_2}{x_1}\frac{\frac{p}{1-q(1-p)}}{\frac{(1-p)(1-q)}{1-(1-p)q}+\frac{1}{q}}=\frac{1-q(1-p)}{qp}\frac{\frac{p}{1-q(1-p)}}{\frac{(1-p)(1-q)}{1-(1-p)q}+\frac{1}{q}}=\frac{1}{\frac{(1-p)(1-q)q}{1-(1-p)q}+1}\longrightarrow \frac{1}{q+1}=\frac{1}{2-\frac{1}{\mu}}\]
		as $p\longrightarrow 0$, since $1-q=\sfrac{1}{\mu}$. Thus no online algorithm can be more than $2-\sfrac{1}{\mu}$-competitive.
	\end{proof}
 
        \section{Proofs omitted from Section \ref{hardsecretary}}\label{supphardsecretary}
        
        \subsection{Proof of \Cref{asympmax,asympmax_H_m}}\label{supphardsecretarylem}
        \begin{proof}[Proofs of \Cref{asympmax}]
        Consider
        \begin{equation}\label{Emax}
		\EE M_n = \int_\RR xf_n(x)dx=\int_1^\infty xnV^{n-1}(x)V'(x)dx.
	\end{equation}
		Performing the substitution $u=V(x)$ into \Cref{Emax} and plugging in the expression $V^{-1}(u)\defeq (1-u)^{-\frac{1}{1+\eps}}$ yields
		\[\EE M_n =\int_0^{1}V^{-1}(u)nu^{n-1}du=n\Beta\left(n,1-\frac{1}{1+\eps}\right),\]
		where \[\Beta(\alpha,\beta)\defeq\int_0^1u^{\alpha-1}(1-u)^{\beta-1}du\] is the Beta function. The claim follows from the fact that 
		\begin{equation}\label{asympgamma}
			\Beta\left(n,1-\frac{1}{1+\eps}\right)=\frac{\Gamma(n)\Gamma\left(1-\frac{1}{1+\eps}\right)}{\Gamma\left(n+1-\frac{1}{1+\eps}\right)}\sim\Gamma\left(1-\frac{1}{1+\eps}\right) n^{-\frac{\eps}{1+\eps}}.
		\end{equation}
		This is shown by using the fact that the Gamma function \[\Gamma(x)\defeq\int_0^\infty t^{x-1}e^{-t}dt\sim(x-1)^{x-1}e^{-(x-1)}\sqrt{2\pi(x-1)}\] as $x\longrightarrow\infty$, by a well-known asymptotics, derived through the Laplace method. In particular, one applies the result for $x=n$ and $x=n+1-\frac{1}{1+\eps}$, and then simplifies the ratio of Gamma functions' asymptotics as $n\longrightarrow\infty$:
		\begin{align*}
			\frac{\Gamma(n)}{\Gamma\left(n+1-\frac{1}{1+\eps}\right)}&\sim\frac{(n-1)^{n-1}\sqrt{n-1}e^{n-\frac{1}{1+\eps}}}{\left(n-\frac{1}{1+\eps}\right)^{n-\frac{1}{1+\eps}}\sqrt{n-\frac{1}{1+\eps}}e^{n-1}}\sim e^{\frac{\eps}{1+\eps}} \frac{(n-1)^{n-1}}{\left(n-\frac{1}{1+\eps}\right)^{n-\frac{1}{1+\eps}}}\\&=\left(1-\frac{1-\frac{1}{1+\eps}}{n-\frac{1}{1+\eps}}\right)^{n-\frac{1}{1+\eps}}\frac{e^{\frac{\eps}{1+\eps}}}{(n-1)^{1-\frac{1}{1+\eps}}}\sim\left(1-\frac{\frac{\eps}{1+\eps}}{n-\frac{1}{1+\eps}}\right)^{n-\frac{1}{1+\eps}}\frac{e^{\frac{\eps}{1+\eps}}}{n^{\frac{\eps}{1+\eps}}}\sim n^{-\frac{\eps}{1+\eps}}.
		\end{align*}
	\end{proof}
 
    \begin{proof}[Proof of \Cref{asympmax_H_m}]
		Since the sequence $\left\lbrace h^{-\frac{1}{1+2\eps}}\right\rbrace_h$ is monotone decreasing, the summation defining $Z_m$ is asymptotically equivalent to the corresponding integral through the standard estimate \[\int_{\ell}^{m}x^{-\frac{1}{1+2\eps}}dx\le\sum_{h=\ell}^m h^{-\frac{1}{1+2\eps}}\le (\ell-1)^{-\frac{1}{1+2\eps}}+\int_{\ell}^{m}x^{-\frac{1}{1+2\eps}}dx.\] This yields that as $m\longrightarrow\infty$, \Cref{asympnorm} holds the extra term outside the integral is constant, while the following asymptotics of the integral holds:
		\[\int_{\ell}^mx^{-\frac{1}{1+2\eps}}dx=\frac{x^{1-\frac{1}{1+2\eps}}}{1-\frac{1}{1+2\eps}}\bigg\vert_{\ell}^m\sim\frac{1+2\eps}{2\eps} m^{\frac{2\eps}{1+2\eps}}.\]
		Consider that by the law of total expectation, \Cref{asympmax,asympnorm} and the increasing monotonicity of $\EE M_n$, as $m\longrightarrow\infty$, 
		\begin{align}\label{asympH_m}
			\EE M_{H_m}&=\sum_{n=\ell}^m\EE M_n p_n^\ssup{m}\sim\sum_{h=\ell}^{\lfloor \sqrt{m}\rfloor-1}\EE M_n p_n^\ssup{m}+ \frac{\Gamma\left(1-\frac{1}{1+\eps}\right)}{Z_m}\sum_{n=\lfloor \sqrt{m}\rfloor}^mn^{\frac{1}{1+\eps}-\frac{1}{1+2\eps}}\notag\\&\sim\frac{\Gamma\left(1-\frac{1}{1+\eps}\right)}{\frac{1+2\eps}{2\eps}} \frac{\sum_{n=\lfloor \sqrt{m}\rfloor}^mn^{\frac{\eps}{(1+\eps)(1+2\eps)}}}{ m^{\frac{2\eps}{1+2\eps}}}.
		\end{align}
		where the last asymptotic relation holds since we will now show that as $m\longrightarrow \infty$ the last series diverges with a higher order than \[\sum_{h=\ell}^{\lfloor \sqrt{m}\rfloor-1}\EE M_n p_n^\ssup{m}\le\EE M_{\lfloor \sqrt{m}\rfloor}\sim\sqrt{m}^{\frac{1}{1+\eps}}.\] 
		Indeed, since the sequence $\left\lbrace n^{\frac{\eps}{(1+\eps)(1+2\eps)}}\right\rbrace_n$ is monotone increasing, the last series of \Cref{asympH_m} is asymptotically equivalent to the corresponding integral through the standard estimate 
		\begin{align*}
			(\lfloor \sqrt{m}\rfloor-1)^{\frac{\eps}{(1+\eps)(1+2\eps)}}+\int_{\lfloor \sqrt{m}\rfloor}^{m}x^{\frac{\eps}{(1+\eps)(1+2\eps)}}dx&\le\sum_{n=\lfloor \sqrt{m}\rfloor}^mn^{\frac{\eps}{(1+\eps)(1+2\eps)}}\\&\le (m+1)^{\frac{\eps}{(1+\eps)(1+2\eps)}}+ \int_{\lfloor \sqrt{m}\rfloor}^{m}x^{\frac{\eps}{(1+\eps)(1+2\eps)}}dx.
		\end{align*} As $m\longrightarrow\infty$,
		\begin{equation}
			\int_{\lfloor \sqrt{m}\rfloor}^m x^{\frac{\eps}{(1+\eps)(1+2\eps)}}dx=\frac{x^{1+\frac{\eps}{(1+\eps)(1+2\eps)}}}{1+\frac{\eps}{(1+\eps)(1+2\eps)}}\bigg\vert_{\lfloor \sqrt{m}\rfloor}^m\sim\frac{m^{1+\frac{\eps}{(1+\eps)(1+2\eps)}}}{1+\frac{\eps}{(1+\eps)(1+2\eps)}} ,
		\end{equation}
		and the other terms in $\lfloor \sqrt{m}\rfloor-1$, $\lfloor \sqrt{m}\rfloor^{\frac{1}{1+\eps}}$ and $m+1$ are negligible, being of lower order. Plugging this asymptotics for the summation in the last term of \Cref{asympH_m} yields the result sought, since \[1+\frac{\eps}{(1+\eps)(1+2\eps)}-\frac{2\eps}{1+2\eps}=\frac{1}{1+\eps}.\]
	\end{proof}
	
	\subsection{Proof  of \Cref{hardsingle}}\label{supphardsecretaryprop}
    \begin{proof}[Proof of \Cref{hardsingle}]
		Our strategy will be to show the boundedness of the sequence of optimal single thresholds $\{\bar{\pi}_m\}$ (corresponding to horizons $\{H_m\}$ for $m$ large enough and $\eps$ small enough), which maximize the sequence $\EE X_{\tau_{\pi_m}}\defeq\EE Y_{\tau_{\pi_m}}^\ssup{H_m}$, where $\{\pi_m\}$ is an arbitrary sequence of single-thresholds and $Y_{i}^\ssup{H_m}\defeq X_i\II_{\{H_m\ge i\}}$. 
        Then, we show that there is a subsequence $\{m_j\}$, such that as $j\longrightarrow\infty$,
		\begin{equation}\label{subsequence}
			r_{m_j}\defeq\frac{\EE Y_{\tau_{\bar{\pi}_{m_j}}}^\ssup{H_{m_j}}}{\EE M_{H_{m_j}}}\longrightarrow 0,
		\end{equation} 
		thus yielding the claim, since this ratio is an upper bound on any ratio defined for arbitrary single-thresholds. 
		
		Since $X$ is a continuous random variable, we can apply \Cref{algp} to compute $\EE Y_{\tau_\pi}^\ssup{H_m}$, which is nontrivial only if $\pi>1$. Consider first that by direct computation,
		\begin{equation}\label{conditional}
			\EE(X|X\ge\pi)=\pi+	\EE(X-\pi|X\ge\pi)=\pi+\frac{1}{\overline{V}(\pi)}\int_{\pi}^\infty\overline{V}(x)dx=\pi+\pi^{1+\eps}\int_{\pi}^\infty\frac{1}{x^{1+\eps}}dx=\left(1+\frac{1}{\eps}\right)\pi
		\end{equation}
		and
		\begin{equation}\label{c_pi}
			c_\pi=1-\EE V^{H_m}(\pi)=1-\EE\left(1-\frac{1}{\pi^{1+\eps}}\right)^{H_m}.
		\end{equation}
		For every fixed $m$ large enough, consider that by \Cref{conditional,c_pi} the function \[f_m(\pi)\defeq \EE Y_{\tau_\pi}^{\ssup{H_m}}=c_\pi \EE(X|X\ge\pi)= \left(1+\frac{1}{\eps}\right)\pi\left[1-\EE\left(1-\frac{1}{\pi^{1+\eps}}\right)^{H_m}\right],\] and compute
		\[f'_m(\pi)=\left(1+\frac{1}{\eps}\right)\left[1-\EE\left(1-\frac{1}{\pi^{1+\eps}}\right)^{H_m}-\frac{1+\eps}{\pi^{1+\eps}}\EE H_m\left(1-\frac{1}{\pi^{1+\eps}}\right)^{H_m-1}\right].\]
		Note that for any fixed $m$ and $\eps>0$, assuming that $\ell>1$, we have the following: 
		\begin{itemize}[noitemsep]
    		\item $f_m(\pi)$ increases when $\pi\approx1$, since $f_m'(1)=1+\sfrac{1}{\eps}>0$.
    		\item $f_m(\pi)$ decreases when $\pi$ is large enough, since, by Bernoulli's inequality \[\left(1-\frac{1}{\pi^{1+\eps}}\right)^h\ge 1-\frac{h}{\pi^{1+\eps}},\] we have that, as $\pi\longrightarrow\infty$,
    		\begin{align*}
    			f'_m(\pi)&\le\left(1+\frac{1}{\eps}\right)\left\lbrace\frac{\EE H_m}{\pi^{1+\eps}}-\frac{1+\eps}{\pi^{1+\eps}}\left(\EE H_m-\frac{\EE H_m(H_m-1)}{\pi^{1+\eps}}\right)\right\rbrace\\&=\frac{1}{\pi^{1+\eps}}\left(1+\frac{1}{\eps}\right)\left(-\eps\EE H_m+\frac{1+\eps}{\pi^{1+\eps}}\EE H_m(H_m-1)\right),
    		\end{align*}
    		with the last term being eventually negative as $\pi$ grows large. 
            Thus, there exists $\pi^*=\pi_m^*(\eps)$ such that $f_m'(\pi)<0$ for every $\pi>\pi^*$.
		\end{itemize}
		The fact that the function $f_m(\pi)$ is strictly increasing at the left end of the interval $[1,\infty)$ and strictly decreasing on the right end, that is on $(\pi^*,\infty)$, means that, by continuity, that is by the \textit{Intermediate Value Theorem} applied to the compact interval $[1,\pi^*]$, a global maximum exists and is attained at $\bar{\pi}_m=\bar{\pi}_m(\eps)\in(1,\pi^*]$. Furthermore, it is characterised by $f_m'(\bar{\pi}_m)=0$, which we rewrite as
		\begin{equation}\label{maxeq0}
			g_m(\bar{\pi}_m)\defeq\EE\left(1-\frac{1}{\bar{\pi}_m^{1+\eps}}\right)^{H_m}+\frac{1+\eps}{\bar{\pi}_m^{1+\eps}}\EE H_m\left(1-\frac{1}{\bar{\pi}_m^{1+\eps}}\right)^{H_m-1}=1.
		\end{equation}
		To conclude, consider that by \Cref{subsequence,conditional,c_pi,asympmax_H_m,asympmaxunif_H_m} we have that, as $m\longrightarrow\infty$, uniformly in $\eps>0$ small enough,
		\begin{equation}\label{ratio}
			r_m=c_{\bar{\pi}_m} \frac{\EE(X|X\ge\bar{\pi}_m)}{\EE M_{H_m}}\sim\left(1+\frac{1}{\eps}\right)\frac{c_{\bar{\pi}_m} \bar{\pi}_m}{Nm^{\frac{1}{1+\eps}}}\asymp \frac{c_{\bar{\pi}_m} \bar{\pi}_m}{\eps m^{\frac{1}{1+\eps}}}.
		\end{equation}
		
        Consider that $\{\bar{\pi}_m\}$ is either bounded or unbounded. Then by considering the following cases, we can always conclude that the optimal sequence of the gambler-to-prophet ratios, for fixed threshold strategies, admits a vanishing $\{r_{m_j}\}$, proving \Cref{subsequence}.
		\paragraph{Bounded and slowly increasing.} Assume that $\bar{\pi}_m = \smallO\left(m^{\frac{1}{1+\eps}}\right)$. 
            Since $c_{\bar{\pi}_{m}}<1$, by \Cref{ratio}, we have that $r_m$ trivially vanishes as $m\longrightarrow\infty$, for any fixed $\eps$ small enough. Note that this case covers also the trivial case for which there exists a subsequence $\{\bar{\pi}_{m_j}\}$ such that $\bar{\pi}_{m_j}\le1$, for which as $j\longrightarrow\infty$,
            \[
                r_{m_j}=\frac{\EE X}{\EE M_{H_{m_j}}}\asymp \frac{1}{\eps m_j^{\frac{1}{1+\eps}}} \longrightarrow 0.
            \]
          \paragraph{Unbounded and fast increasing.}
            Assume that $m^{\frac{1}{1+\eps}}=\smallO(\bar{\pi}_m)$. Note that, by Bernoulli's inequality, 
            \[
                \EE\left(1-\frac{1}{\bar{\pi}_{m}^{1+\eps}}\right)^{H_{m}}\ge1-\frac{\EE H_{m}}{\bar{\pi}_{m}^{1+\eps}},
            \]
    		yielding that \[c_{\bar{\pi}_{m}}\defeq1-\EE\left(1-\frac{1}{\bar{\pi}_{m}^{1+\eps}}\right)^{H_{m}}\le\frac{\EE H_{m}}{\bar{\pi}_{m}^{1+\eps}}.\]
    		In conclusion, by \Cref{expectationn} with $n=1$ it holds that as $m\longrightarrow\infty$, uniformly in $\eps$ small enough, as $m\longrightarrow\infty$,
    		\[
                \frac{c_{\bar{\pi}_{m}} \bar{\pi}_{m}}{ m^{\frac{1}{1+\eps}}}\le \frac{\EE H_m}{ m^{\frac{1}{1+\eps}}\bar{\pi}_m^\eps}
                    \asymp \varepsilon\left(\frac{m^{\frac{1}{1+\eps}}}{\bar{\pi}_m}\right)^{\eps}
                    \longrightarrow 0.
            \]
    		Therefore, by \Cref{ratio}, we have that $r_m$ vanishes.
      
    		\paragraph{Unbounded (and moderately increasing).} 
            By relying on the result above for unbounded and fast increasing behaviour, we show that if $\{\bar{\pi}_m\}$ is assumed unbounded, with no additional properties, $r_m$ vanishes. This is an argument by contradiction and will essentially boil down to analysing, hypothetically, the moderately increasing regime, that is the one in which $\bar{\pi}_m \asymp m^{\frac{1}{1+\eps}}$. We will show that it is not possible, and draw our conclusion from there.
            
            Recall that $\{ r_m \} \subseteq [0, 1]$ is a bounded sequence. Therefore, to show that $\{r_m\}$ vanishes as long as we choose $\eps$ small enough, it is sufficient to show that, under the same assumptions, all of its convergent subsequences $\{ r_{m_j} \}$ vanish. 
            By contradiction, assume that there is a convergent subsequence $\{ r_{m_j} \}$ that does not vanish, no matter how small $\eps > 0$ is taken, that is, as $j \longrightarrow \infty$, we have that $r_{m_j}\longrightarrow\rho=\rho(\eps) \in (0,1]$, with $\rho(\eps)$ being bounded away from $0$ no matter how small $\eps > 0$ is taken. 
            Then, by \Cref{ratio}, we have that as $j \longrightarrow \infty$, for $\eps > 0$ fixed small enough,
    		\begin{equation}\label{alphabound}
    			\mu_j\defeq \frac{m_j}{\bar{\pi}_{m_j}^{1+\eps}}\sim\left[\left(1+\frac{1}{\eps}\right)\frac{ c_{\bar{\pi}_{m_j}}}{\rho N}\right]^{1+\eps}\le\left(\frac{1+\eps }{\eps \rho N}\right)^{1+\eps}.
    		\end{equation}
    		Since $\{\mu_j\}$ is asymptotically equivalent to a bounded sequence, it must be bounded too. 
            By boundedness, we can consider a convergent subsequence such that for some fixed $\alpha=\alpha(\eps)\ge0$ satisfying the bound in \Cref{alphabound}, as $k\longrightarrow\infty$, $\mu_{j_k}\longrightarrow\alpha. $

            Consider the function $\eps \mapsto \alpha(\eps)$. 
            We show that, unless $\alpha$ is eventually (that is, for all $\eps$ small enough) identically $0$, $\liminf_{\eps \to 0} \alpha(\eps) > 0$, that is, it is bounded away from zero uniformly for all $\eps$ small enough. 
            Indeed, by contradiction, assume that $\alpha$ is not identically $0$ for all $\eps$ small enough, and that $\liminf_{\eps \to 0} \alpha(\eps) = 0$. 
            Recall that, as $k\longrightarrow\infty$,
    		\begin{equation}\label{asymp_pi}
    			\frac{1}{\bar{\pi}_{m_{j_k}}^{1+\eps}}\sim\frac{\alpha}{m_{j_k}}\longrightarrow 0.
    		\end{equation} 
    		By \Cref{maxeq0,asymp_pi} and \Cref{expectationn} with $n=1$ 
    		\begin{align*}
    			c_{\bar{\pi}_{m_{j_k}}}&=\frac{1+\eps}{\bar{\pi}_{m_{j_k}}}\EE H_{m_{j_k}}\left(1-\frac{1}{\bar{\pi}_{m_{j_k}}^{1+\eps}}\right)^{H_{m_{j_k}}-1}\ge \frac{1+\eps}{\bar{\pi}_{m_{j_k}}}\EE H_{m_{j_k}}\left(1-\frac{1}{\bar{\pi}_{m_{j_k}}^{1+\eps}}\right)^{m_{j_k}}\\&\sim (1+\eps) \alpha\frac{\EE H_{m_{j_k}}}{m_{j_k}}\left(1-\frac{\alpha}{m_{j_k}}\right)^{m_{j_k}}\sim\frac{1+\eps}{1+4\eps}2\eps\alpha e^{-\alpha}
    		\end{align*}
    		On the other hand, by definition, Jensen's inequality \Cref{asymp_pi} and \Cref{expectationn} with $n=1$
    		\begin{align*}
    			c_{\bar{\pi}_{m_{j_k}}}&\defeq 1-\EE \left(1-\frac{1}{\bar{\pi}_{m_{j_k}}^{1+\eps}}\right)^{H_{m_{j_k}}}\le 1- \left(1-\frac{1}{\bar{\pi}_{m_{j_k}}^{1+\eps}}\right)^{\EE H_{m_{j_k}}}\\&\sim 1-\left(1-\frac{\alpha}{m_{j_k}}\right)^{\frac{2\eps m_{j_k}}{1+4\eps}}\sim1-e^{-\frac{2\alpha\eps}{1+4\eps}}.
    		\end{align*}
            Then, combining both estimates, we have that, taking the limit as $k\longrightarrow\infty$,
    		\begin{equation}\label{boundedalpha}
    			1-e^{-\frac{2\alpha\eps}{1+4\eps}}
                    \ge\frac{1+\eps}{1+4\eps}2\eps\alpha e^{-\alpha}.
    		\end{equation}
    		Since $\liminf_{\eps \to 0} \alpha(\eps) = 0$, we have that, for infinitely many $\eps$ arbitrarily small,\footnote{In what follows all asymptotic notation is understood as $\eps$ vanishes along a subsequence. We refrain from using a sequence $\eps_l$ for mere simplicity of notation, to avoid working with multiple indices ($k$ and $l$), which might result confusing.}
    		\[
                1-e^{-\frac{2\alpha\eps}{1+4\eps}}
                    =\frac{2\alpha\eps}{1+4\eps}+\bigO(\alpha^2\eps^2),
            \]
    		and
    		\[\frac{1+\eps}{1+4\eps}2\eps\alpha e^{-\alpha}=\frac{1+\eps}{1+4\eps}2\eps\alpha +\bigO(\eps\alpha^2).\]
            Replacing back in \Cref{boundedalpha}, we obtain
    		\[
                \frac{2\alpha\eps}{1+4\eps}+\bigO(\alpha^2\eps^2)\ge\frac{1+\eps}{1+4\eps}2\eps\alpha +\bigO(\eps\alpha^2),
            \]
    		which is equivalent to 
    		\[\frac{1}{1+4\eps}+\bigO(\alpha\eps)\ge\frac{1+\eps}{1+4\eps}+\bigO(\alpha),\]
    		which is trivially not possible as $\eps$ vanishes, regardless of the rate at which $\alpha$ vanishes.
            We conclude that $\liminf_{\eps \to 0} \alpha(\eps) > 0$.
    		
            Recall also that the optimality condition \Cref{maxeq0} must be satisfied, which we recast as
    		\begin{equation}\label{maxeq0'}
    			g_{m_{j_k}}(\bar{\pi}_{m_{j_k}})
                    \defeq \EE \left[ \phi(\bar{\pi}_{m_{j_k}},H_{m_{j_k}})\right]
                    = 1 \,,
    		\end{equation}
    		where \[\phi(\pi,H)\defeq \left(1-\frac{1}{\pi^{1+\eps}}\right)^{H-1}\left[1-\frac{1}{\pi^{1+\eps}}+(1+\eps)\frac{H}{\pi^{1+\eps}}\right].\]
    		To reach a contradiction, we derive the following asymptotic upper bound, holding uniformly in $\alpha>0$ and as $k\longrightarrow\infty$,
    		\begin{equation}\label{ebound}
    			g_{m_{j_k}}(\bar{\pi}_{m_{j_k}})
                    \le 1 + \eps \left[(2+\alpha) e^{-\alpha}-2\right] + \bigO(\eps^2)+ \bigO \left(\frac{1}{m_{j_k}}\right).
    		\end{equation}
    		We first clinch the argument and then show how to derive this asymptotic upper bound. 
    		\begin{itemize}[noitemsep]
    		\item Assume that it is the case that $\liminf_{\eps \to 0} \alpha(\eps) > 0$. Then for all $\eps>0$ small enough, \Cref{ebound} yields that $g_{m_{j_k}}(\bar{\pi}_{m_{j_k}})<1$, in contradiction with \Cref{maxeq0'}. Indeed, under the assumption, the strictly decreasing function $h(\alpha)\defeq (2+\alpha)e^{-\alpha}-2$ is negative and bounded away from zero for all $\eps>0$ small enough, since in this range of values of $\eps$, $h(\alpha)$ vanishes only as $\alpha$ vanishes (see \Cref{h}). 
    		\begin{figure}\centering
    			\begin{tikzpicture}[scale = .7]
    				\begin{axis}[axis lines = middle, xlabel={$\alpha$}, x label style={anchor=west}]
    					\addplot[domain=0:1, samples=100, color=blue]{(2+x)*exp(-x)-2};
    				\end{axis}
    			\end{tikzpicture}
    			\caption{The function $h(\alpha)$.}
    			\label{h}
    		\end{figure}
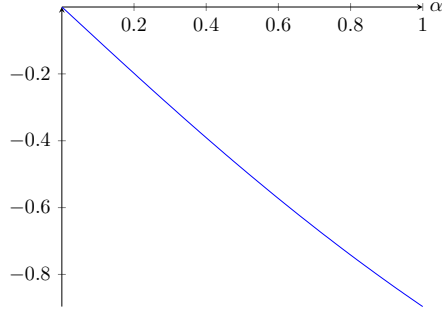
    		\item As a result of the contradiction reached, $\alpha$ must be identically $0$ for all $\eps>0$ small enough, meaning that for all such $\eps$, $\mu_{j_k} \longrightarrow 0$ as $k\longrightarrow\infty$. By boundedness, this ensures that $\mu_j\longrightarrow0$ as $j\longrightarrow\infty$. That is, we are, subsequentially, in the fast increasing scenario of the previous point, which shows (upon replacing $m$ with $m_j$)that $r_{m_j}$ vanishes. This is in contradiction with the initial assumption, that $r_{m_j}\longrightarrow \rho>0$, and therefore we have shown that $\rho=0$.
    		\end{itemize}

            \paragraph{Proof of \Cref{ebound}.} We are only left with proving the bound \Cref{ebound}. To show \Cref{ebound}, we compute the series expansion of $\phi(\bar{\pi}_{m_{j_k}},H_{m_{j_k}})$ with respect to the first argument. 
            By \Cref{asymp_pi} and the fact that $H_{m_{j_k}}\le m_{j_k}$, we have that, as $k\longrightarrow\infty$,
    		\begin{align*}
    			\left(1-\frac{1}{\bar{\pi}_{m_{j_k}}^{1+\eps}}\right)^{H_{m_{j_k}}-1}  &=  \exp \left( \left( H_{m_{j_k}} - 1 \right) \, \log \left( 1 - \frac{1}{\bar{\pi}_{m_{j_k}}^{1+\eps}}\right) \right)
                     \sim \exp \left( -\frac{H_{m_{j_k}}}{\bar{\pi}_{m_{j_k}}^{1+\eps}}\left(1+\bigO\left(\frac{1}{\bar{\pi}_{m_{j_k}}^{1+\eps}} \right) \right) \right) \\
                    & \sim \; \exp \left( - \alpha \frac{H_{m_{j_k}}}{m_{j_k}} \left( 1 + \bigO\left(\frac{1}{m_{j_k}} \right)\right) \right)
                    \\& = \sum_{n\ge0}\frac{(-1)^n}{n!}\alpha^n\left(\frac{H_{m_{j_k}}}{m_{j_k}}\right)^n\left(1+\bigO\left(\frac{1}{m_{j_k}}\right)\right)^n.
    		\end{align*}
    		Thus
    		\begin{align*}
    			\phi(\bar{\pi}_{m_{j_k}},H_{m_{j_k}})&\sim \sum_{n\ge0}\frac{(-1)^n}{n!}\alpha^n\left(\frac{H_{m_{j_k}}}{m_{j_k}}\right)^n\left(1+n\bigO\left(\frac{1}{m_{j_k}}\right)\right)\left[1-\frac{\alpha}{m_{j_k}}+(1+\eps)\alpha\frac{H_{m_{j_k}}}{m_{j_k}}\right]\\&= \sum_{n\ge0}\frac{(-1)^n}{n!}\alpha^n\left(\frac{H_{m_{j_k}}}{m_{j_k}}\right)^n\left[1+(1+\eps)\alpha\frac{H_{m_{j_k}}}{m_{j_k}}\right]+\bigO\left(\frac{1}{m_{j_k}}\right)
    		\end{align*}
    		since both series 
            \[
                \sum_{n\ge0}\frac{(-1)^n}{n!}\alpha^n\left(\frac{H_{m_{j_k}}}{m_{j_k}}\right)^n
                ,\quad
                \sum_{n\ge0}\frac{(-1)^n}{n!}n\alpha^n\left(\frac{H_{m_{j_k}}}{m_{j_k}}\right)^n
            \]
    		converge uniformly on the probability space. 
            Due to the uniformity in the probability space holding for all asymptotic estimates so far performed, we can finally take expectation and obtain, by \textit{Fubini Theorem} and \Cref{expectationn}, that
    		\begin{align*}
                g_{m_{j_k}}(\bar{\pi}_{m_{j_k}})
                    &\sim \sum_{n\ge0}\frac{(-1)^n}{n!}\alpha^n\frac{\EE H_{m_{j_k}}^n}{m_{j_k}^n}+(1+\eps)\alpha\sum_{n\ge0}\frac{(-1)^n}{n!}\alpha^n\frac{\EE H_{m_{j_k}}^{n+1}}{m_{j_k}^{n+1}}+\bigO\left(\frac{1}{m_{j_k}}\right)\\
                    &\sim  \sum_{n\ge0}\frac{(-1)^n}{n!}\alpha^n\frac{2\eps}{n+2(n+1)\eps}+(1+\eps)\alpha\sum_{n\ge0}\frac{(-1)^n}{n!}\alpha^n\frac{2\eps}{n+1+2(n+2)\eps}+\bigO\left(\frac{1}{m_{j_k}}\right)\\
                    &< 1 + 2\eps\sum_{n\ge1}\frac{(-1)^n}{n!}\alpha^n+\eps(1+\eps)\alpha\sum_{n\ge0}\frac{(-1)^n}{n!}\alpha^n+\bigO\left(\frac{1}{m_{j_k}}\right)\\
                    &=1 + \eps\left[2(e^{-\alpha}-1)+(1+\eps)\alpha e^{-\alpha}\right] + \bigO\left(\frac{1}{m_{j_k}}\right) \,.
    		\end{align*}
    		Recall that globally on the nonnegative reals, $\alpha e^{-\alpha}\le\sfrac{1}{e}$, and therefore as $\eps\longrightarrow 0^+$,
    		\[
                \eps\left[2(e^{-\alpha}-1)+(1+\eps)\alpha e^{-\alpha}\right] 
                    \le \eps\left[(2+\alpha)e^{-\alpha}-2\right]+\bigO(\eps^2) 
            \]
    		which implies \Cref{ebound}.
    \end{proof}
    
    \subsection{Heuristics for the horizons in the hard instance of \Cref{hardsingle}}\label{supphardsecretaryheur}
    We now comment heuristically regarding why single-threshold approximations should be expected to fail for this instance, perhaps even in higher capacity than what proven in \Cref{hardsingle} (which restricts the instance to the value distribution $X$, in order to show hardness). Note that by exploiting \Cref{expectationn} with $n=1,2$ we have that as $m\longrightarrow\infty$,
	\[\frac{\Var H_m}{\EE^2 H_m}=\frac{\EE H_m^2}{\EE^2 H_m}-1\longrightarrow\frac{(1+4\eps)^2}{4\eps(1+3\eps)}-1.\] Since the ratio in the limit diverges as $\eps$ vanishes, this provides an intuitive reason why the single-threshold $2$-approximation of \Cref{L2} should fail. In this sense, the hypothesis that $\eps$ is small is essential. It is straightforward to compute the derivative of the ratio in the limit and show that it is negative for any $\eps$. Thus this limit decreases as $\eps$ grows and tends to $\sfrac{1}{3}<0.594\approx1+W_0(-2e^{-2})$. This means that for $\eps$ large enough the single-threshold $2$-approximation of \Cref{L2} is still valid, as the distribution $H_m$ has enough concentration in this case. The horizon $H_m$ in \Cref{pgfzoom} instead has $\text{CV}\approx1.074$ due to $\eps$ being sufficiently small and $m$ large enough. In this case not only the single-threshold $2$-approximation of \Cref{L2} is not valid, but no analogous single-threshold $C$-approximation is possible, as $m\longrightarrow\infty$, since no matter how large $C$ is taken, since $1+W_0(-Ce^{-C})\longrightarrow 1<1.074$ from below, as $C\longrightarrow\infty$. Furthermore, plotting the pgf of $H_m$, for $\eps$ small and $m$ large, against the pgf of a geometric horizon with the same mean as in \Cref{pgfzoom}, shows that there is no ordering of the two (due to the crossing point of the graphs), and therefore this family of horizons is outside the $\PGF$ (and $PGFd$) class, which explains heuristically why the $2$-approximation of \Cref{Gclass} fails too.
	\begin{figure}\centering
		\includegraphics[scale=0.6]{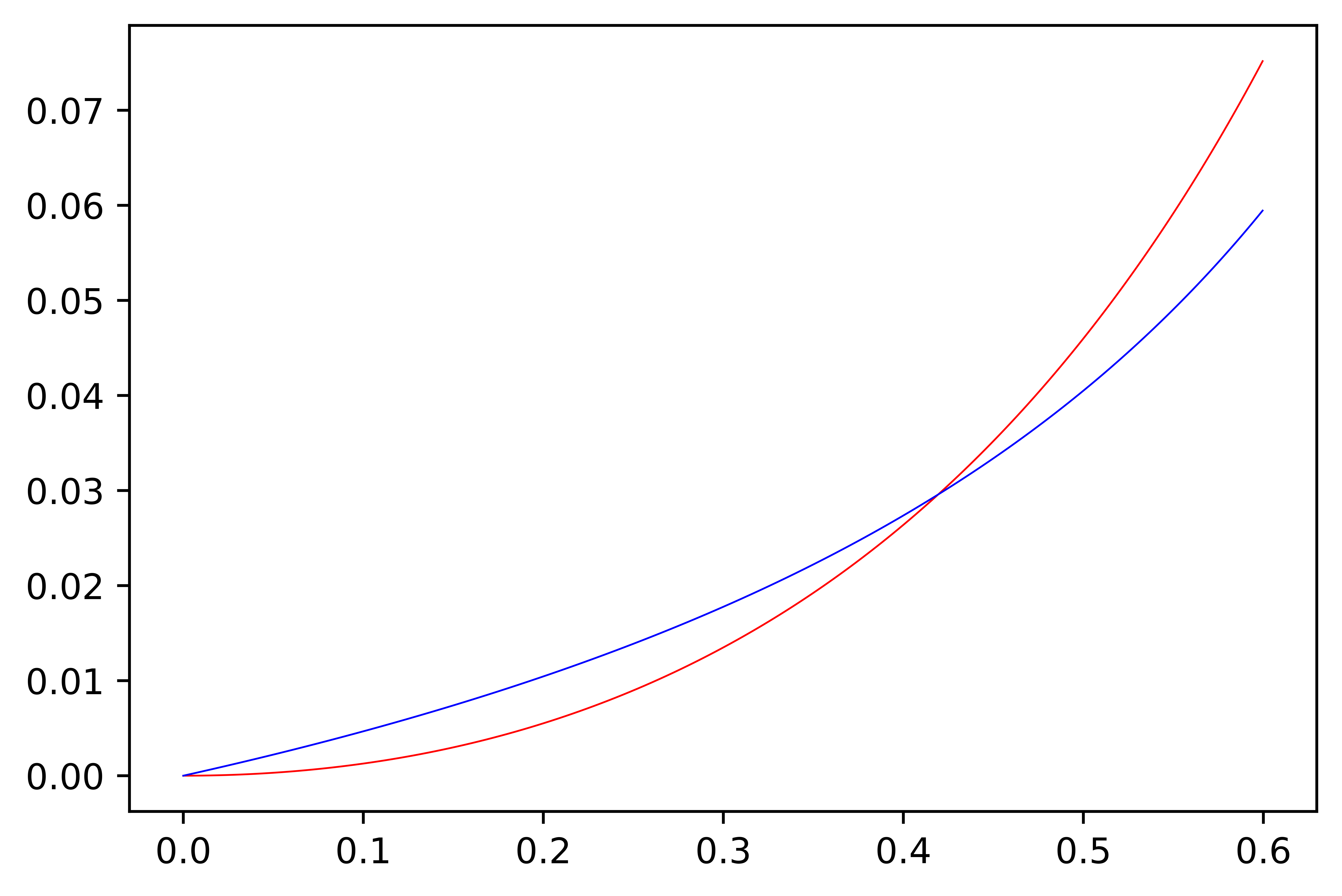}
		\caption{Zoom on $(0,0.6)$ of the pgf of $H_{100}$, with $\ell=2$, $\eps=0.001$ (red) against the pgf of a geometric horizon with the same mean (blue).}
		\label{pgfzoom}
	\end{figure}

	\subsection{Proof of \Cref{Gdhorizon}}\label{supphardsecretaryGdhorizon}
	\begin{proof}[Proof of \Cref{Gdhorizon}]
		~\paragraph{Step 1.} We start by constructing a family $\tilde{\mathcal{H}}_M(\eps)\subset\PGFd$, which is a sequence of horizons $\tilde{H}_m$ arising as a perturbation of horizons $H_m$ in the hard family $\mathcal{H}_M(\eps)$ of \Cref{hardhorizon}, with fixed $\ell=2$, as $m$ grows, for all fixed $0<\eps<\sfrac{1}{4}$ small enough. This is done by adding to the pmf of said $H_m$, a (to be suitably rescaled) mass at one, 
		\begin{equation}\label{mass}
			\delta_m\defeq Cm^{\frac{2\eps}{1+2\eps}},\quad C=C(\eps)\defeq\frac{3+10\eps}{2\eps}.
		\end{equation} 
		We denote the perturbed horizon $\tilde{H}_m$ for every $m,\eps$ considered, and observe that the corresponding normalizing constants for the pmf's $\tilde{Z}_m$ and $Z_m$; and expectations $\tilde{\mu}_m\defeq\EE\tilde{H}_m$ and $\mu_m\defeq\EE H_m$; satisfy, as $m$ grows, the following:
		\begin{align}
			\tilde{Z}_m&\sim \tilde{C}Z_m,\quad \tilde{C}\defeq C+1+\frac{1}{2\eps}\label{Zratio}\\ 
			\tilde{\mu}_m&\sim\frac{1}{\tilde{C}}\left(C+\frac{1+2\eps}{1+4\eps}m\right).\label{muratio}
		\end{align} 
		Trivially, \Cref{Zratio,muratio} follow from \Cref{asympnorm,expectationn} and the definitions, respectively $\tilde{Z}_m\defeq Z_m+\delta_m$ and $\tilde{\mu}_m = \sfrac{\delta_m}{\tilde{Z}_m}+\mu_m\sfrac{Z_m}{\tilde{Z}_m}$.
		To show that for all $m$ large enough, $\tilde{H}_m\in\PGFd$, we have to establish that eventually (short for all $m$ large enough from now on), for all $t\in(0,1)$ (the implicit range from now on),
		\[\frac{1}{\tilde{Z}_m}\left[\delta_mt+\sum_{h=2}^m t^h h^{-\frac{1}{1+2\eps}}\right]\ge\frac{1}{\tilde{\mu}_m}\sum_{h=1}^\infty\left(1-\sfrac{1}{\tilde{\mu}_m}\right)^{h-1}t^h,\]
		which we recast as
		\begin{equation}\label{target}
			\sum_{h=2}^m t^h h^{-\frac{1}{1+2\eps}}\ge -\delta_mt+\frac{\eps_mt}{1-t(1-\sfrac{1}{\tilde{\mu}_m})},
		\end{equation}
		having set $\eps_m\defeq \sfrac{\tilde{Z}_m}{\tilde{\mu}_m}$. Let 
		\begin{align*}\label{eta}
			\eta_m\defeq \frac{1-\sfrac{\eps_m}{\delta_m}}{1-\sfrac{1}{\tilde{\mu}_m}}.
		\end{align*}
		We establish that eventually for all $t\in(0,\eta_m]$
		\begin{equation}\label{linear1}
			-\delta_mt+\frac{\eps_mt}{1-t(1-\sfrac{1}{\tilde{\mu}_m})}\le0,
		\end{equation}
		and that eventually for all $t\in[\eta_m,1)$
		\begin{equation}\label{concave1}
			\sum_{h=2}^m h(h-1)t^{h-2} h^{-\frac{1}{1+2\eps}}<\frac{2\eps_m(1-\sfrac{1}{\tilde{\mu}_m})}{[1-t(1-\sfrac{1}{\tilde{\mu}_m})]^3}.
		\end{equation}
		This implies \Cref{target} for all $t\in (0,1)$, since eventually on $(0,\eta_m]$ \Cref{target} holds with strict inequality directly by \Cref{linear1}, while for all $t\in[\eta_m,1)$ we have that the difference of the left-hand and right-hand side of \Cref{target}, denoted $f_m(t)$, is strictly concave (\Cref{concave1} states precisely that $f''_m(t)<0$). Since $f_m(t)>0$ for all $t\in(0,\eta_m]$ and since it vanishes at both ends of the unit interval, it cannot vanish on $(\eta_m,1)$, meaning that there cannot be any crossings of the two sides of \Cref{target} on $(\eta_m,1)$, which therefore holds everywhere. The rest of this step is dedicated to show \Cref{linear1,concave1}.
		
		In order to establish \Cref{linear1}, it is enough to note that, upon simplifying the factor $t$, we can recast it as
		\begin{equation}\label{linear2}
			t\le\frac{\delta_m-\eps_m}{\delta_m(1-\sfrac{1}{\tilde{\mu}}_m)}=\eta_m.
		\end{equation}
		Note that
		\[\eps_m\sim \frac{\tilde{C}^2m^{\frac{2\eps}{1+2\eps}}}{C+\frac{1+2\eps}{1+4\eps}m},\]
		and therefore we have that
		\[
			\eta_m>1-\frac{\eps_m}{\delta_m}=1-\frac{\tilde{C}^2}{C\left(C+\frac{1+2\eps}{1+4\eps}m\right)}\longrightarrow 1.
		\]
		
		In order to establish \Cref{concave1}, note that a standard integral bound yields that
		\[\sum_{h=2}^m h(h-1)t^{h-2} h^{-\frac{1}{1+2\eps}}<\sum_{h=2}^m h^{2-\frac{1}{1+2\eps}}\le\int_2^{m+1}x^{2-\frac{1}{1+2\eps}}dx<\frac{(m+1)^{3-\frac{1}{1+2\eps}}}{3-\frac{1}{1+2\eps}}\]
		on the unit interval considered, and that on $[\eta_m,1)$
		\[\frac{2\eps_m(1-\sfrac{1}{\tilde{\mu}_m})}{[1-t(1-\sfrac{1}{\tilde{\mu}_m})]^3}\ge \frac{2\eps_m(1-\sfrac{1}{\tilde{\mu}_m})}{[1-\eta_m(1-\sfrac{1}{\tilde{\mu}_m})]^3}=2\frac{\delta_m^3}{\eps_m^2}\left(1-\frac{1}{\tilde{\mu}_m}\right)\sim\frac{2C^3\frac{1+2\eps}{1+4\eps}}{\tilde{C}^2}m^{2+\frac{2\eps}{1+2\eps}},\]
		and therefore it is sufficient to establish that eventually on $[\eta_m,1)$ we have that
		\begin{equation}\label{concave2}
			\frac{(m+1)^{3-\frac{1}{1+2\eps}}}{3-\frac{1}{1+2\eps}}< \frac{2C^3\frac{1+2\eps}{1+4\eps}}{\tilde{C}^2}m^{2+\frac{2\eps}{1+2\eps}}.
		\end{equation}
		This follows from noting that the powers of the $m$ terms are exactly the same and that the constants of the leading term of the left-hand side is strictly less than that of the right-hand side, that is
		\[\frac{1}{3-\frac{1}{1+2\eps}}<\frac{2C^3\frac{1+2\eps}{1+4\eps}}{\tilde{C}^2},\]
		by the very definition of $C$ and $\eps<\sfrac{1}{4}$.
		
		\paragraph{Step 2.} Next, we show that with $X$ being set to be the hard instance of values $X$ of \Cref{hardval}, with $\eps>0$ fixed small enough as in \emph{Step 1}, $\EE M_{\tilde{H}_m}=\Omega (\EE M_{H_m})$ and $\EE X_{\tau_{\pi_m}}\defeq \EE Y_{\tau_{\pi_m}}^\ssup{\tilde{H}_m}\le \EE Y_{\tau_{\pi_m}}^\ssup{H_m}$ as $m$ grows, for every sequence of thresholds $\{\pi_m\}$. 
		
		The first fact follows from \Cref{asympmaxunif_H_m,Zratio} and observing that the law of total expectation yields
		\[\EE M_{\tilde{H}_m}=\frac{\delta_m}{\tilde{Z}_m} \EE X +\frac{Z_m}{\tilde{Z}_m}\EE M_{H_m}>\frac{Z_m}{\tilde{Z}_m}\EE M_{H_m}\sim\frac{1}{\tilde{C}}\EE M_{H_m}.\]
		
		The second fact follows from considering the survival functions of $\tilde{H}_m$ and $H_m$, respectively $\tilde{S}_m(i)$ and $S_m(i)$ and applying \Cref{optalg}. We have that
		\begin{equation}\label{survivals}
		\begin{aligned}
			S_m(1)&=1, &&S_m(2)=1, &&S_m(i)=\frac{1}{Z_m}\sum_{h= i}^mh^{-\frac{1}{1+2\eps}}\:\forall i\ge 3\\
			\tilde{S}_m(1)&=1, &&\tilde{S}_m(2)=1-\frac{\delta_m}{\tilde{Z}_m},  &&\tilde{S}_m(i)=\frac{Z_m}{\tilde{Z}_m}S_m(i)\:\forall i\ge 3.
		\end{aligned}
		\end{equation}
		Starting with a single-threshold algorithm $\tau_\pi$, where $\pi=\pi_m$ on horizon $\tilde{H}_m$. By \Cref{optalg} (note that the notations $\tau$ and $\sigma$ used there are now swapped, as our focus here is on the original stopping rule) for any horizon $\tilde{H}_m$, there is a stopping rule equivalent to $\tau_\pi$ for the problem $\YY^\ssup{\tilde{H}_m}$, denoted as $\sigma_\pi\defeq\zeta_\pi\wedge (\tilde{H}_m+1)$, where $\zeta_\pi$ is, as per \Cref{equivstop}, an algorithm for the discounted problem $\DZ=\{\tilde{S}_m(i)X_i\}$, which stops only when $\tau_\pi$ successfully stops by the time that the horizon $\tilde{H}_m$ has realised. Intuitively, we can construct $\zeta_\pi$ as the stopping rule with single threshold $\pi$ on the events $\{\tau_\pi=i\}\cap\{\tilde{H}_m\ge i\}$. Otherwise it stops at $m$ on $\{\tau_\pi>\tilde{H}_m\}$. The proof of \Cref{optalg} shows that this can be done, so that $\zeta_\pi$ is adapted to $\XX$, and as a result, as $m$ grows, 
		\[\EE Y_{\tau_\pi}^\ssup{\tilde{H}_m}=\EE\sum_{i=1}^m \tilde{S}_m(i)X_i\II_{\{\zeta_\pi=i\}}\le\EE\sum_{i=1}^m S_m(i)X_i\II_{\{\zeta_\pi=i\}}=\EE Y_{\tau_\pi}^\ssup{H_m}.\]
		Indeed, by \Cref{survivals,Zratio} we have that%recasting \Cref{Zratio} as $\frac{Z_m}{\tilde{Z}_m}=1-\frac{\delta_m}{\tilde{Z}_m}$
		\begin{align*}
			\EE Y_{\tau_\pi}^\ssup{\tilde{H}_m}&=-\frac{\delta_m}{\tilde{Z}_m}\EE X_2\II_{\{\zeta_\pi=2\}}+\EE X_1\II_{\{\zeta_\pi=1\}}+\EE X_2\II_{\{\zeta_\pi=2\}}+\frac{Z_m}{\tilde{Z}_m}\EE\sum_{i=3}^m S_m(i)X_i\II_{\{\zeta_\pi=i\}}\\
			&=\EE Y_{\tau_\pi}^\ssup{H_m}\left(1-\frac{\delta_m}{\tilde{Z}_m}\right)\le\EE Y_{\tau_\pi}^\ssup{H_m}.
		\end{align*}
		Clearly we also used $\sigma'_\pi\defeq\zeta_\pi\wedge (H_m+1)$ to obtain the second to last equality, similarly to how we used $\sigma_\pi$ in the first equality.
		
		\paragraph{Step 3.} Finally, suppose by contradiction that a constant-approximation exists on the $\PGFd$ class. Then there exist a sequence of thresholds $\{\pi_m\}$ and $c>0$ such that $\tau_{\pi_m}$ is a $\sfrac{1}{c}$-approximation on $\tilde{\mathcal{H}}_M(\eps)\subset\PGFd$. By \Cref{hardsingle} for all $\eps>0$ fixed small enough, there exists a subsequence $\{\pi_{m_j}\}$ such that 
		$r_{m_j}$, the gambler-to-prophet ratio of $\EE Y_{\tau_{\pi_{m_j}}}^\ssup{H_{m_j}}$ over $\EE M_{H_{m_j}}$, vanishes. Combining this with the previous step's asymptotics yields the following contradiction as $j$ grows:
		\begin{equation*}
			c\le\frac{ \EE Y_{\tau_{\pi_{m_j}}}^\ssup{\tilde{H}_{m_j}}}{\EE M_{\tilde{H}_{m_j}}}=\bigO( r_{m_j})\longrightarrow0.
		\end{equation*}
	\end{proof}

	\subsection{Proof of \Cref{future}}\label{supphardsecretaryfuture}
    	\begin{proof}[Proof of \Cref{future}]
		Consider the stopping rule $\tau_m\defeq\zeta_{r_m}\wedge(H_m+1)\in\TT^*$. By \Cref{optequiv} in \Cref{optalg} firstly, and the law of total expectation with respect to $\XX$ secondly, 
		\begin{equation}\label{expectedreturn}
			\EE Y_{\tau_m}^\ssup{H_m}=\EE\left( S_m(\zeta_{r_m})X_{\pi_{\zeta_{r_m}}}\right)=\EE\left\lbrace\EE\left[S_m(\zeta_{r_m})X_{\pi_{\zeta_{r_m}}}|\XX\right]\right\rbrace.
		\end{equation} 
		Having exploited the equivalent discounted deterministic horizon ($m$) problem spares us from conditioning on $H_m$, which is now out of the picture and only appears through the survival discount factor of $H_m$, $S_m(i)$. The only randomness left in the conditional expectation is that of $\pi$ and the stopping rule $\zeta_{r_m}$.
		
		Consider that 
		\begin{equation}\label{discountconditional}
			\EE\left[S_m(\zeta_{r_m})X_{\pi_{\zeta_{r_m}}}|\XX\right]=\sum_{i=r_m}^m \EE\left[S_m(i)X_{\pi_i}\II_{\{\zeta_{r_m}=i\}}|\XX\right].
		\end{equation}
		Furthermore, denoting $\mathcal{M}_j\defeq\max_{i\in[j]}\{X_{\pi_i}\}$ and noting that $\mathcal{M}_m=M_m$, conditionally on $\XX$ we have that
		\[\II_{\{\zeta_{r_m}=i\}}=\II_{\{i=\inf\{j\ge r_m:\:X_{\pi_j}=\mathcal{M}_j\}\}}\ge\II_{\{i=\inf\{j\ge r_m:\:X_{\pi_j}=M_m\}\}}\defeq\II_{W_i}.\]
		It follows that
		\begin{equation}\label{lowerconditional1}
			\EE\left[S_m(i)X_{\pi_i}\II_{\{\zeta_{r_m}=i\}}|\XX\right]\ge \EE\left[S_m(i)X_{\pi_i}\II_{W_i}|\XX\right]=S_m(i)M_m\PP(W_i|\XX).
		\end{equation}
		Plugging \Cref{lowerconditional1} into \Cref{discountconditional} yields
		\begin{equation}\label{lowerconditional2}
			\EE\left[S_m(\zeta_{r_m})X_{\pi_{\zeta_{r_m}}}|\XX\right]\ge M_m\sum_{i=r_m}^m S_m(i)\PP(W_i|\XX).
		\end{equation}
		By \Cref{asympnorm} and comparison of the sum with the corresponding integral as in \Cref{asympmax_H_m}, we can estimate the cdf of $H_m$ \[F_m(i-1)=\frac{\sum_{h=\ell}^{i-1} h^{-\frac{1}{1+2\eps}}}{\sum_{h=\ell}^{m} h^{-\frac{1}{1+2\eps}}}\] as  $m\longrightarrow\infty$, since for all $i>r_m\sim\sfrac{m}{e}$ (the lower bound's asymptotics follows from \Cref{waitingtime}) we have that
		\begin{equation}\label{const}
			F_m(i-1)\le \frac{\int_{\ell-1}^{i-1} x^{-\frac{1}{1+2\eps}}dx}{\int_{\ell-1}^{m+1} x^{-\frac{1}{1+2\eps}}dx}=\frac{x^{\frac{2\eps}{1+2\eps}}\Big\vert_{\ell-1}^{i-1}}{x^{\frac{2\eps}{1+2\eps}}\Big\vert_{\ell-1}^{m+1}}=\left(\frac{i-1}{m+1}\right)^{\frac{2\eps}{1+2\eps}}\frac{1-\left(\frac{\ell-1}{i-1}\right)^{\frac{2\eps}{1+2\eps}}}{1-\left(\frac{\ell-1}{m+1}\right)^{\frac{2\eps}{1+2\eps}}}<\left(\frac{i-1}{m+1}\right)^{\frac{2\eps}{1+2\eps}}.
		\end{equation}
		Thus as $m\longrightarrow\infty$, \[S_m(i)=1-F_m(i-1)>1- \left(\frac{i-1}{m+1}\right)^{\frac{2\eps}{1+2\eps}}.\]
		In conclusion, as $m\longrightarrow\infty$, as $\eps\longrightarrow0$,
		\begin{align*}
			\sum_{i=r_m}^m S_m(i)\PP(W_i|\XX)&\ge \sum_{i=r_m}^m \PP(W_i|\XX)-\sum_{i=r_m}^m \left(\frac{i-1}{m+1}\right)^{\frac{2\eps}{1+2\eps}}\PP(W_i|\XX)\\&=\PP(W|\XX)-\frac{1}{(m+1)^{\frac{2\eps}{1+2\eps}}}\frac{r_m-1}{m}\sum_{i=r_m}^m \left(i-1\right)^{-\frac{1}{1+2\eps}}\\&\ge\frac{1}{e}-\frac{1}{(m+1)^{\frac{2\eps}{1+2\eps}}}\frac{r_m-1}{m}\int_{r_m-2}^{m-1} x^{-\frac{1}{1+2\eps}}dx\longrightarrow g(\eps),
		\end{align*}
		where we used \Cref{W,W_i} in the equality and \Cref{1/e} and the usual integral estimate in the second last inequality, whereas the limiting behaviour as $m\longrightarrow\infty$ follows by \Cref{waitingtime}, having defined
		\begin{align*}
			f_m(\eps)&\defeq\frac{1}{(m+1)^{\frac{2\eps}{1+2\eps}}}\frac{r_m}{m}\int_{r_m-2}^{m-1} x^{-\frac{1}{1+2\eps}}dx=\frac{1+2\eps}{2\eps}\frac{1}{(m+1)^{\frac{2\eps}{1+2\eps}}}\frac{r_m}{m} x^{\frac{2\eps}{1+2\eps}}\Big\vert_{r_m-2}^{m-1}\longrightarrow f(\eps)\\&\defeq \frac{1+2\eps}{2\eps}\frac{1}{e}\left(1-\frac{1}{e^{\frac{2\eps}{1+2\eps}}}\right)=\begin{cases}\frac{1}{e}-\frac{1}{e^2}+\bigO\left(\frac{1}{\eps}\right),&\eps\longrightarrow\infty\\\frac{1}{e}+\bigO\left(\eps\right),&\eps\longrightarrow0,\end{cases}
		\end{align*}
		and \[g(\eps)\defeq\frac{1}{e}-\frac{1+2\eps}{2\eps}\frac{1}{e}\left(1-\frac{1}{e^{\frac{2\eps}{1+2\eps}}}\right)\defeq\frac{1}{e}-f(\eps)=\bigO(\eps)>0\] as $\eps\longrightarrow 0^+$. It is crucial to note that $g(\eps)$ it is positive and strictly increasing on $(0,\infty)$, since it can be rewritten as $g(\eps)=\sfrac{1}{e}\left(1-h(t(\eps))\right)$, having defined \[t(\eps)\defeq\frac{2\eps}{1+2\eps},\quad h(t)\defeq\frac{1-e^{-t}}{t}.\] The monotonicity then follows from the fact that $t(\eps)$ is monotone increasing and upper-bounded by $1$, which ensures that $h'(t)<0$.\footnote{As $\eps$ grows large, $g(\eps)\longrightarrow e^{-2}$. However, the regime $\eps$ large is not relevant to the hardness of single-thresholds, which is only ensured by $\eps$ small enough.} Plugging the bound obtained into \Cref{lowerconditional2} yields, as $m\longrightarrow\infty$,
		\[\EE\left[S_m(\zeta_{r_m})X_{\pi_{\zeta_{r_m}}}|\XX\right]\ge (g(\eps)-\delta)M_m,\]
		where $\delta\longrightarrow 0^+$ as $m\longrightarrow\infty$. Plugging this into \Cref{expectedreturn} yields
		\[\EE X_{\tau_m}\defeq\EE Y_{\tau_m}^\ssup{H_m}\ge (g(\eps)-\delta)\EE M_m.\]
		
		In conclusion, consider the family thus structured. For a fixed $\eps>0$ small enough, we allow any instance for the values $X$ to undergo the family of horizons $\{H_m\}_{m\ge M}$ as defined in \Cref{hardhorizon}, where $M=M(\eps)\in\NN$ is large enough. By \Cref{hardsingle}, the RH problem under this family does not admit single-threshold constant-approximations for the fixed $\eps>0$, since for any sequence of thresholds $\{\pi_m\}_{m\ge M}$, we can assume $\eps$ small enough, so that there exists a vanishing subsequence \[r_{m_j}\defeq\frac{\EE X_{\tau_{\pi_{m_j}}}}{\EE M_{H_{m_j}}},\] as long as we consider the value distribution $X$ defined in \Cref{hardval} (recall that here $\tau_{\pi_m}$ refers to the single-threshold algorithm facing horizon $H_m$ and value $X$ with threshold $\pi_m$). However, by the above, $\tau_m$ provides a constant-approximation. More specifically, regardless of how small $\eps$ is fixed, if we set $M$ large enough, $\tau_m$ constitutes an approximate $g(\eps)$-approximation, since using $H_m\le m$ we obtain that for any random variable $X$
		\[\frac{\EE X_{\tau_m}}{\EE M_{H_m}}\ge (g(\eps)-\delta)\frac{\EE M_m}{\EE M_m}=g(\eps)-\delta,\]
		where $\delta\longrightarrow 0$ as $M\longrightarrow\infty$.
	\end{proof}
	
	\section{Proofs omitted from Section \ref{L2}}\label{suppL2}
	\begin{proof}[Proof of \Cref{L2horizon}]
		In the argument of \Cref{Ghorizon} we showed that for general distributions the competitive ratio of the single-threshold algorithm used there is \[c_{p,\bar{q}}=1-\EE\left[\left(1-\frac{1}{\mu}\right)^H\right]=1-\EE\left( e^{-\bar{s}H}\right)\] with $\bar{s}\defeq-\log(1-\sfrac{1}{\mu})$. Since $G\prec_{Lt}H$, where $G$ is the two-point distribution of \Cref{bernoulli}, that is
		\[G\sim\begin{cases}0,&\text{w.p.}\:\frac{\sigma^2}{\mu^2+\sigma^2}\\
			\frac{\mu^2+\sigma^2}{\mu},&\text{w.p.}\:\frac{\mu^2}{\mu^2+\sigma^2},\end{cases}\]
		it follows that
		\[c_{p,\bar{q}}\ge1-\EE\left( e^{-\bar{s}G}\right)=\frac{\mu^2}{\mu^2+\sigma^2}\left[1-\left(1-\frac{1}{\mu}\right)^\frac{\mu^2+\sigma^2}{\mu}\right].\]
		As a result an inequality that yields a single-threshold $2-\sfrac{1}{\mu}$-approximation is
		\[\frac{\mu^2}{\mu^2+\sigma^2}\left[1-\left(1-\frac{1}{\mu}\right)^\frac{\mu^2+\sigma^2}{\mu}\right]\ge\frac{1}{2-\frac{1}{\mu}},\]
		and it is equivalent to
		\begin{equation}\label{Lamb1}
			\left(e^{\mu\log\left(1-\frac{1}{\mu}\right)}\right)^{x}\le1-\frac{x}{2-\frac{1}{\mu}},
		\end{equation}
		having set $x\defeq1+\sfrac{\sigma^2}{\mu^2}$. Since $\log(1-\sfrac{1}{\mu})\le-\sfrac{1}{\mu}$,
		\[ e^{\mu\log\left(1-\frac{1}{\mu}\right)}\le e^{-1}.\] As a result an even simpler inequality that yields a single-threshold $2-\sfrac{1}{\mu}$-approximation is 	
		\begin{equation}\label{Lamb2}
			e^{-x}\le1-\frac{x}{2-\frac{1}{\mu}},
		\end{equation}
		which is equivalent to 
		\begin{equation}\label{Lamb3}
			ye^y\le\bar{y}e^{\bar{y}},
		\end{equation}
		with constraints $y\defeq x-(2-\sfrac{1}{\mu})>-1$ and $-2<\bar{y}\defeq-(2-\sfrac{1}{\mu})<-1$. The solution to the equation $ye^y=z$ is the Lambert function. Over the reals the Lambert function is described through a principal branch, denoted as $W_0(z)$, and a secondary branch, denoted as $W_{-1}(z)$, since%, as evident from \Cref{lambert},
		 $y=-1$ is a branching point for the solution. Since we constrained \Cref{Lamb3} on $y>-1$, $ye^y>-e^{-1}$ is monotone increasing, and for $-2<\bar{y}<-1$, $-1e^{-1}<\bar{y}e^{\bar{y}}<-2e^{-2}$, we can equivalently formulate \Cref{Lamb3} in terms of $W_0$, namely \[y\le W_0(\bar{y}e^{\bar{y}}).\] Recalling the definition of $y$ and $x$, this yields \Cref{concentration}.%that is
		%\begin{equation*}
		%\sigma^2\le\mu^2\left[ 1-\frac{1}{\mu}+W_0\left(-\left(2-\frac{1}{\mu}\right)e^{-\left(2-\frac{1}{\mu}\right)}\right)\right].
		%\end{equation*}
		\begin{comment}
		\begin{figure}\centering
			\begin{tikzpicture}[scale = .7]
				\begin{axis}[axis lines = middle, xtick = {-2, -1}, ytick = {-.37, -.28}, xticklabels = {$-2$, $-1$}, yticklabels = {$-e^{-1}$, $-2e^{-2}$}, ymin = -.4, xlabel={$y$}, x label style={anchor=west}]
					\addplot[domain=-2:.1, samples=100, color=blue]{x*exp(x)};
				\end{axis}
			\end{tikzpicture}
			\caption{The function $ye^y$.}
			\label{lambert}
		\end{figure}
		\end{comment}
	\end{proof}
\end{document}